\documentclass[12pt]{article} %
\usepackage{amsmath,amssymb,amsthm}
\textwidth=14cm \textheight=20cm %
%

%\rittairenewcommand{\qedsymbol}{Q.E.D.}
\numberwithin{equation}{section} %
\theoremstyle{plain}
   \newtheorem{thm}{\hspace{\parindent}{\sc Theorem}}[section] %
   \newtheorem{pro}[thm]{\hspace{\parindent}Proposition}
   
   \newtheorem{lem}[thm]{\hspace{\parindent}Lemma}
\theoremstyle{remark} %
   \newtheorem{rem}{\hspace{\parindent}Remark}[section] %
\newcommand{\adots}{a = 0,\pm1,\pm2,\dots}
\newcommand{\Atxrho}{A(t,x;\rho)}

\newcommand{\bR}{\mathbb{R}}

\newcommand{\domain}{[0,T]\times \mathbb{R}^d}
\newcommand{\domains}{[0,T]\times \mathbb{R}^{2d}}
\newcommand{\Eta}{\mathcal{E}_t^0([0,T];}
\newcommand{\Etaprime}{\mathcal{E}_t^0([0,T];B'^a)}
\newcommand{\hdbar}{\hspace{0.08cm}\dbar}
\newcommand{\Hepsilon}{\widetilde{H}_{\epsilon}}
\newcommand{\jdots}{j = 1,2,\dots,d}
\newcommand{\kalphabeta}{k^{(\alpha)}_{\ (\beta)}}

\newcommand{\qepsilon}{q_{\epsilon}}
\newcommand{\Qepsilon}{Q_{\epsilon}}
\newcommand{\rittaire}{\text{\rm{Re\hspace{0.05cm}}}}
\newcommand{\Sspace}{\mathcal{S}(\bR^d)}
\newcommand{\sumjd}{\sum_{j=1}^d}
\newcommand{\txdx}{(t,X,D_x)}
\newcommand{\txxi}{(t,x,\xi)}
\newcommand{\tzdz}{(t,Z,D_z)}
\newcommand{\uepsilon}{u_{\epsilon}}
\newcommand{\utrho}{u(t;\rho)}
\newcommand{\Vtxrho}{V(t,x;\rho)}
\newcommand{\Vktxrho}{V_k(t,x;\rho)}
\newcommand{\Wijtxrho}{W_{ij}(t,x;\rho)}
\newcommand{\weight}{\left(<\xi>^2 + <x>^{2(M+1)}\right)}
\newcommand{\wtrho}{w(t;\rho)}
\newcommand{\zeo}{0 < \epsilon \leq 1}
\pagestyle{plain} %
\def\dbar{{\mathchar'26\mkern-12mud}}
\begin{document}

\title{Notes on the Cauchy problem for the self-adjoint and non-self-adjoint Schr\"odinger equations with polynomially growing potentials}
\author{Wataru Ichinose\thanks{Corresponding author. This research is  partially supported by JSPS KAKENHI grant No. 26400161, JP18K03361 and Shinshu University. E-mail: ichinose@math.shinshu-u.ac.jp
}\ , \and Takayoshi Aoki\thanks{This research is  partially supported by Shinshu University RA grant No.25204513. E-mail: 13sm101h@shinshu-u.ac.jp}
}
\date{}
\maketitle
\begin{quote}
{\small Department of Mathematics, Shinshu University,
Matsumoto 390-8621, Japan}%
\end{quote}\par
\begin{abstract}
\noindent
 The  Cauchy problem is studied for the self-adjoint and non-self-adjoint Schr\"odinger equations. We first prove  the existence and  uniqueness of solutions in the weighted Sobolev spaces.  Secondly we prove  that if potentials are depending continuously and differentiably on a parameter,  so are the solutions, respectively. The non-self-adjoint Schr\"odinger equations that we study are those used  in the theory of continuous  quantum measurements. The results on the existence and  uniqueness of solutions in the weighted Sobolev spaces will play a crucial role in the proof for the convergence of the Feynman path integrals in the theories of quantum mechanics and continuous quantum measurements.
 \end{abstract}
 {\bf Keywords} Schr\"odinger equation; non-self-adjoint equation; dependence on a parameter; quantum 
 mechanics; quantum measurement
 \\
{\bf AMS Subject Classification (2010)}  35Q41; 35Q40.
\section{Introduction}
	Let $T > 0$ be an arbitrary constant.
	We will  study the self-adjoint and non-self-adjoint Schr\"odinger equations
\begin{align} \label{1.1.1}
& i\hbar \frac{\partial u}{\partial t}(t)  = \widetilde{H}(t)u(t)  \equiv \Bigl[ H(t) -i\hbar K(t)\Bigr]u(t)\notag\\ 
& := \left[ \frac{1}{2m}\sum_{j=1}^d
      \left(\frac{\hbar}{i}\frac{\partial}{\partial x_j} - qA_j(t,x)\right)^2 + qV(t,x) -i\hbar K(t)\right]u(t),
\end{align}
where $t \in [0,T], x = (x_1,\dotsc,x_d) \in \mathbb{R}^d$, $ \bigl(V(t,x),A(t,x)\bigr) = (V,A_1,A_2,\dotsc,A_d) \in \bR^{d+1}$ are  electromagnetic potentials,  $\hbar$  the Planck constant, $m > 0$  the mass of a particle and $q \in \bR$  its charge. In addition, $K(t)$ is the pseudo-differential operator with a real-valued double symbol $k(t,(x+x')/2,\xi)$ defined by
\begin{equation} \label{1.2.1}
  K\left(t,\frac{X+X'}{2},D_x\right)f  = \iint e^{i(x-y)\cdot\xi}k\left(t,\frac{x+y}{2},\xi\right)f(y)dy\,\dbar\xi
 \end{equation}
for $f \in \Sspace$, where $\dbar\xi = (2\pi)^{-d}d\xi$ and $\Sspace$ is the Schwartz space of all rapidly decreasing functions on $\bR^d$.
 	The non-self-adjoint Schr\"odinger equations that we study in the present paper are those used in the theory of continuous quantum measurements.  See \S 4.2 in \cite{Mensky 1993} and \S 5.1.3 in \cite{Mensky 2000}.  Accordingly, we assume 
\begin{equation} \label{1.3.1}
(K(t)f,f)  \geq -C\Vert f\Vert^2
 \end{equation}
in $[0,T]$ for  $f \in \Sspace$ with a constant $C \geq 0$, where we denote by $L^2 = L^2(\mathbb{R}^d)$   the space of all square integrable functions on
$\mathbb{R}^d$ with inner
product $(f,g) := \int f(x) g(x)^*dx$ for the complex conjugate $g^*$ of $g$ and  norm $\Vert f\Vert$.  For the sake of simplicity we will suppose $\hbar = 1$ and $q = 1$ hereafter. 
\par
In the present paper we consider the potentials $(V,A)$  satisfying
\begin{align} \label{1.4.1}
	&  |\partial^{\alpha}_xV(t,x)| \leq C_{\alpha}<x>,\ |\alpha| \geq 1, \notag \\
&
 \sumjd |\partial^{\alpha}_xA_j(t,x)|
	 \leq C_{\alpha},\ |\alpha| \geq 1
	 \end{align}
 with constants $C_{\alpha} \geq 0$ or 
\begin{equation} \label{1.5.1}
C_0<x>^{2(M+1)} - C_1 
 \leq V(t,x) \leq C_2<x>^{2(M+1)}
 \end{equation}
 with constants $M  > 0, C_0 > 0, C_1 \geq 0$ and $C_2 > 0$ in $\domain$, where $|x| = \left(\sumjd x_j^2\right)^{1/2}$, $<x>  = \left(1 + |x|^2\right)^{1/2}$, $\partial_{x_j} = \partial /\partial
x_j$, $|\alpha| = 
\sum_{j=1}^d
\alpha_j$,  and  $\partial_x^{\alpha} = \partial_{x_1}^{\alpha_1}
\cdots \partial_{x_d}^{\alpha_d}$ for a multi-index
$\alpha = (\alpha_1,\dotsc,\alpha_d)$. 
As well known, if $a $ is a constant greater than 2, the uniqueness of  solutions to \eqref{1.1.1} with $H(t) = -\sum_{j=1}^d\partial_{x_j}^2 - |x|^a$ and  $K(t) = 0$ does not hold (cf. pp. 157-159 in \cite{Berezin et all}, Theorem X.2 in \cite{Reed-Simon II}).  Therefore the assumptions \eqref{1.4.1} and \eqref{1.5.1} are not so restrictive.
\par
For a constant $M \geq 0$   let us introduce the weighted Sobolev spaces 
\begin{align} \label{1.6.1}
B^a_M(\mathbb{R}^d)  := \{f \in & L^2(\mathbb{R}^d);
 \|f\|_{a,M} := \|f\| + \notag\\
& \sum_{|\alpha| \leq  2a} \|\partial_x^{\alpha}f\| +
\|<\cdot>^{2a(M+1)}f\| < \infty\}
 \end{align}
for $a = 1,2,\dotsc$.  We denote the dual space of $B^a_M\ (a = 1,2,\dotsc)$ by $B^{-a}_M$ and the $L^2$ space by $B^0_M$.  \par
  The first aim in the present paper is to prove that for any $u_0 \in B^a_M\ (a =0, \pm1,\pm2,\dots)$ there exists the unique solution $u(t) \in \mathcal{E}^0_t([0,T];B^a_M) \cap \mathcal{E}^1_t([0,T]; \\
  B^{a-1}_M) $ with $u_0$ at $t = 0$ to \eqref{1.1.1}, where ${\cal E}^j_t([0,T];B^a_M)\ (j = 0,1,\dotsc)$ denotes the space of all $B^a_M$-valued, j-times continuously differentiable functions on $[0,T]$. This result will play a crucial role in the proof of the convergence of the Feynman path integrals for \eqref{1.1.1} in \cite{Ichinose 2019} and  \cite{Ichinose 2020} as in the proofs of the theorems in  \cite{Ichinose 1999} and \cite{Ichinose 2014}.
  \par
 
  The second aim in the present paper is to prove that if potentials are depending continuously and differentiably on a parameter,  so  are the solutions to \eqref{1.1.1}   in ${\cal E}^0_t([0,T];B^a_M)\ (a = 0, \pm 1,\pm 2, \cdots)$, respectively.  Such  results have been well known in the theory of ordinary differential equations as the fundamental results.
  	\par
 In the present paper the  results stated above  to \eqref{1.1.1}  will be extended to  multi-particle systems.  For  simplicity we will consider   4-particle systems
\begin{align} \label{1.7.1}
 i\frac{\partial u}{\partial t}(t) & = \widetilde{H}(t)u(t)
 := \Biggl[\sum_{k=1}^4 \Biggl\{\frac{1}{2m_k}
\sum_{j=1}^d
      \left(\frac{1}{i}\frac{\partial}{\partial x_j^{(k)}} - A_j^{(k)}(t,x^{(k)})\right)^2 + 
      \notag\\  
   &   V_k(t,x^{(k)}) -iK_k(t)\Biggr\}
     + \sum_{1\leq i <j \leq 4}W_{ij}(t,x^{(i)}-x^{(j)})\Biggr]u(t) \notag \\
     & \equiv \Biggl[\sum_{k=1}^4\Big\{ H_k(t) -iK_k(t) \Big\}+ \sum_{1\leq i <j \leq 4}W_{ij}(t,x^{(i)}-x^{(j)})\Biggr]u(t),
\end{align}
where $x^{(k)} \in \bR^d\ (k = 1,2,3,4)$ and $K_k(t) = K_k(t,(X^{(k)}+X'^{(k)})/2,D_{x^{(k)}})$ .
\par
	Let's consider the self-adjoint equations, i.e. $K(t) = 0$. When the Hamiltonian $H(t) = \widetilde{H}(t)$ is independent of $t \in [0,T]$,  the existence and  uniqueness of solutions in the $L^2$ space to \eqref{1.1.1} and \eqref{1.7.1}  are equivalent to the self-adjointness of $H(t) = H$ (cf. \S 8.4 in \cite{Reed-Simon I}).  The self-adjointness of $H$ in $L^2$ has almost been settled now (cf. \cite{Cycon et all, Leinfelder et all, Reed-Simon II}).	
	If $H(t)$ is not independent of $t \in [0,T]$, the problem is not simple.  In \cite{Yajima 1991} Yajima has proved the existence and  uniqueness of solutions to \eqref{1.1.1} in $B^a_0\ (a = 0,\pm1,\pm2,\dots)$  under the assumptions
	\begin{align*} 
	&  |\partial^{\alpha}_xV(t,x)| \leq C_{\alpha}, \ |\alpha| \geq 2, \notag\\
	& \sumjd \bigl(|\partial^{\alpha}_xA_j(t,x)| + |\partial^{\alpha}_x\partial_tA_j(t,x)|\bigr)
	 \leq C_{\alpha},\ |\alpha| \geq 1, \notag\\
	 & \sum_{1 \leq j < k \leq d}|\partial^{\alpha}_xB_{jk}(t,x)| \leq C_{\alpha}<x>^{-(1+\delta_{\alpha})},\ |\alpha| \geq 1
	\end{align*}
	with constants $\delta_{\alpha} > 0$ and $C_{\alpha} \geq 0$ by  the theory of Fourier integral operators, where  $B_{jk} =  \partial A_k/\partial x_j  -\partial A_j/\partial x_k. $   In \cite{Ichinose 1995} the first author has proved the existence and  uniqueness of solutions in $B^a_0\ (a = 0,\pm1,\pm2,\dots)$  under the assumptions \eqref{1.4.1}
	by the energy method.  Recently, Yajima in \cite{Yajima 2011,Yajima 2016} has proved  by the semi-group method  the existence and  uniqueness of solutions  in the $L^2$ space to \eqref{1.1.1} and \eqref{1.7.1} with singular potentials.
	\par
	We consider the self-adjoint equations \eqref{1.1.1} and \eqref{1.7.1} again.  When the Hamiltonian $H(t)$ is independent of $t \in [0,T]$, it follows from Theorems VIII. 21 and VIII. 25 in \cite{Reed-Simon I} that if potentials are depending continuously on a parameter,   so  are the solutions in the $L^2$ space.  
	If  $H(t)$ is not independent of $t \in [0,T]$, the problem is not simple again.  In \cite{Ichinose 2012} the first author has proved by the energy method under the assumptions \eqref{1.4.1} that if  potentials  are depending continuously and differentiably on a parameter, so   are the solutions to \eqref{1.1.1}  in $\mathcal{E}_t^0([0,T];B^a_0)$, respectively. \par
	As for the non-self-adjoint Schr\"odinger equations, there are many papers on the spectral analysis (cf. \cite{Davies}, \cite{Sambou}). The authors don't know the results related to our results.
	\par
	Therefore, our aims in the present paper are to generalize the results for the self-adjoint equations \eqref{1.1.1} with potentials \eqref{1.4.1} to   the non-self-adjoint equations \eqref{1.1.1} and \eqref{1.7.1} with potentials \eqref{1.4.1} or \eqref{1.5.1}.
	\par
	We will prove  our results  by the energy method as in \cite{Ichinose 1995} and \cite{Ichinose 2012}.  The crucial point in the proofs of our results for \eqref{1.1.1} is to introduce a family of bounded operators $\big\{\widetilde{H}_{\epsilon}(t)\bigr\}_{0 <\epsilon \leq 1}$ on $B^a_M\ (a = 0,\pm1,\pm2,\dots)$ satisfying Proposition 4.2 in the present paper by (4.1) as an approximation of $\widetilde{H}(t)$.  Then, using the assumption \eqref{1.3.1}, we can complete the proofs  as in \cite{Ichinose 1995} and \cite{Ichinose 2012}.  In the same way the crucial point in the proofs of our results for \eqref{1.7.1} is to introduce
 $\big\{\widetilde{H}_{\epsilon}(t)\bigr\}_{0 <\epsilon \leq 1}$  by \eqref{5.29} 	as an approximation of $\widetilde{H}(t)$.  As in the proofs for \eqref{1.1.1} we can complete the proofs. 
 \par
 	The plan of the present paper is as follows.  In \S 2 we will state all theorems.  \S 3 is devoted to preparing for the proofs of the theorems for \eqref{1.1.1}.  In \S 4 we will prove all theorems for \eqref{1.1.1}.  In \S 5  we will prove all theorems for \eqref{1.7.1}.
%%%%%%%%%%%%%%%%%%%%%%%%%%%%%%%%%%%%%%%
%%%%%%%%%%%%%%%%%%%%%%%%%%%%%%%%%%%%%%%
\section{Theorems}
In the present  paper we often use symbols $C, C_a, C_{\alpha}, C_{\alpha\beta}$  and $\delta$ to write down constants, though these value are different in general. 
\par
\noindent {\bf Assumption 2.1. }  We assume \eqref{1.3.1}, \eqref{1.4.1} and 
\begin{equation} \label{2.1.1}
|k^{(\alpha)}_{\ (\beta)}(t,x,\xi)| \leq C_{\alpha\beta}(1 + |x| + |\xi|), \  |\alpha + \beta| \geq 1
\end{equation}
in $\domains$, where $k^{(\alpha)}_{\ (\beta)}(t,x,\xi) = (i)^{-|\beta|}\partial_{\xi}^{\alpha}\partial_{x}^{\beta}k(t,x,\xi)$.
\par
\noindent {\bf Assumption 2.2. } Let $M > 0$ be a constant.  We assume \eqref{1.3.1}, \eqref{1.5.1} and 
\begin{align} \label{2.2.1}
& |k_{(\beta)}(t,x,\xi)| \leq C_{\beta}<x>^{M+1},\  |\beta| \geq 1, \notag \\
& |k^{(\alpha)}_{\ (\beta)}(t,x,\xi)| \leq C_{\alpha\beta},\  |\alpha| \geq 1, |\beta| \geq 0.
\end{align}
Suppose for all $\alpha$ and $l = 0,1$ that $\partial_x^{\alpha}\partial_t^lV(t,x)$ and $\partial_x^{\alpha}\partial_t^lA_j(t,x)\ (j = 1,2,\dots,d)$ are continuous in $\domain$ and assume the following.  We have
\begin{equation} \label{2.3.1}
|\partial_x^{\alpha}V(t,x)| \leq C_{\alpha}<x>^{2(M+1)}, |\alpha| \geq 1,
\end{equation}
\begin{equation} \label{2.4.1}
 |\partial_x^{\alpha}\partial_tV(t,x)| \leq C_{\alpha}<x>^{2(M+1)}\end{equation}
for all $\alpha$,
\begin{equation} \label{2.5.1}
 |A_j(t,x)| \leq C<x>^{M+1-\delta}
 \end{equation}
with a constant $\delta > 0$,
\begin{equation} \label{2.6.1}
|\partial_x^{\alpha}A_j(t,x)| \leq C_{\alpha}<x>^{M+1}, |\alpha| \geq 1
 \end{equation}
and 
\begin{equation} \label{2.7.1}
 |\partial_x^{\alpha}\partial_tA_j(t,x)| \leq C_{\alpha}<x>^{M+1}
 \end{equation}
for all $\alpha$.  \vspace{0.3cm}\par
	\begin{thm} (1)
	Suppose Assumption 2.1.  Then,  for any $u_0 \in B^a_0\ (\adots)$ there exists the unique solution $u(t) \in \mathcal{E}_t^0([0,T];B^a_0) \cap \mathcal{E}_t^1([0,T];B^{a-1}_0)$ with $u(0) = u_0$ to \eqref{1.1.1}.  This solution $u(t)$ satisfies 
\begin{equation} \label{2.8.1}
 \Vert u(t) \Vert_{a,0} \leq C_a \Vert u_0 \Vert_{a,0}\quad (0 \leq t \leq T).
 \end{equation}
(2) Suppose Assumption 2.2.  Then we have the same assertions as in (1) where $B^a_0$ is replaced with $B^a_M$.
	\end{thm}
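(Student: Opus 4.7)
The plan is to implement the energy method outlined by the authors: approximate $\widetilde{H}(t)$ by a family of bounded operators $\{\widetilde{H}_{\epsilon}(t)\}_{0<\epsilon\leq 1}$ on each $B^a_M$, derive energy estimates uniform in $\epsilon$, and then pass to the limit $\epsilon\to 0$. The regularization must be chosen so that $H_\epsilon(t)$ remains self-adjoint on $L^2$ and $K_\epsilon(t)$ still satisfies $(K_\epsilon(t)f,f)\geq -C\Vert f\Vert^2$ uniformly in $\epsilon$, preserving \eqref{1.3.1}. A natural device is to multiply each underlying symbol by a smooth cutoff of the form $\chi(\epsilon x)\chi(\epsilon\xi)$ and to symmetrize the pseudo-differential quantization of $K(t)$ appropriately. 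Once $\widetilde{H}_\epsilon(t)$ is bounded on $B^a_M$ and strongly continuous in $t$, the approximate Cauchy problem $iu_\epsilon'=\widetilde{H}_\epsilon(t)u_\epsilon$, $u_\epsilon(0)=u_0$, has a unique classical solution $u_\epsilon\in\mathcal{E}_t^1([0,T];B^a_M)$ by Picard iteration in a Banach space.

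The core of the argument is the uniform estimate
\[
\frac{d}{dt}\Vert u_\epsilon(t)\Vert_{a,M}^2 \leq C_a\Vert u_\epsilon(t)\Vert_{a,M}^2,\qquad 0\leq t\leq T.
\]
At the $L^2$ level ($a=0$) this follows immediately from the self-adjointness of $H_\epsilon$ and \eqref{1.3.1} applied to $K_\epsilon$, since $\tfrac{d}{dt}\Vert u_\epsilon\Vert^2=-2\rittaire(K_\epsilon u_\epsilon,u_\epsilon)\leq 2C\Vert u_\epsilon\Vert^2$. For $a\geq 1$ I would introduce a weighted elliptic pseudo-differential operator $\Lambda^a$ with $\Vert \Lambda^a u\Vert\sim\Vert u\Vert_{a,M}$, set $v_\epsilon=\Lambda^a u_\epsilon$, and write $iv_\epsilon'=\widetilde{H}_\epsilon v_\epsilon+[\Lambda^a,\widetilde{H}_\epsilon]\Lambda^{-a}v_\epsilon$. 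Under \eqref{1.4.1} and \eqref{2.1.1} (respectively \eqref{1.5.1} and \eqref{2.2.1}--\eqref{2.7.1} in part (2)), the commutator is $L^2$-bounded uniformly in $\epsilon$ by symbolic calculus, so the $L^2$ energy identity applied to $v_\epsilon$ produces the displayed inequality. Gr\"onwall then gives \eqref{2.8.1} for $u_\epsilon$ with constant independent of $\epsilon$.

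With the uniform bound in hand, I extract a weak-$\ast$ limit $u_\epsilon\rightharpoonup u$ in $L^\infty(0,T;B^a_M)$. Convergence $\widetilde{H}_\epsilon(t)u_\epsilon\to\widetilde{H}(t)u$ in $B^{a-1}_M$ follows from pointwise convergence of symbols together with the uniform bound, so \eqref{1.1.1} holds in $B^{a-1}_M$ and $u\in\mathcal{E}_t^1([0,T];B^{a-1}_M)$. Strong continuity of $u$ into $B^a_M$ (not merely weak continuity) is recovered from the energy identity combined with \eqref{2.8.1} by the standard Lions-type argument. Uniqueness follows by applying the same energy identity at level $a-1$ to the difference of two solutions with initial datum zero. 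For negative $a$, I pass to the formal adjoint: the adjoint Hamiltonian $H(t)+iK(t)^\ast$ is of the same structural type, with $K(t)^\ast$ satisfying an analogue of \eqref{1.3.1}, so the solution operator for \eqref{1.1.1} extends by duality to $B^a_M$ for $a=-1,-2,\dots$. Part (2) proceeds in exactly the same way once the heavier weight $<\cdot>^{2a(M+1)}$ in $B^a_M$ is used to absorb the polynomial growth permitted by \eqref{1.5.1}.

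The main obstacle I expect is the commutator estimate in the second paragraph, specifically for the non-self-adjoint piece. The sign condition \eqref{1.3.1} lives cleanly only at the $L^2$ level, while after conjugation by $\Lambda^a$ the term $[\Lambda^a,K_\epsilon(t)]\Lambda^{-a}$ is merely a zero-order operator with no sign, whose $L^2$ norm must be controlled uniformly in $\epsilon$. Producing a single regularization $\widetilde{H}_\epsilon(t)$ that simultaneously preserves the $L^2$-dissipativity of $K_\epsilon$ and admits uniform commutator bounds in every $B^a_M$ is precisely the content of the preparatory Proposition 4.2 to which the introduction alludes; coupling a spectral property at one scale with an operator-theoretic one at a different scale, within a single family of approximations, is the delicate technical heart of the argument.
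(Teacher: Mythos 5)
Your overall strategy (regularize, prove $\epsilon$-uniform energy estimates after conjugation by a weight operator, pass to the limit, handle negative $a$ by duality) is the paper's strategy, but there is a genuine gap at the single most important point: the construction of the approximating family $\widetilde{H}_{\epsilon}(t)$. You propose to multiply the symbols by a cutoff $\chi(\epsilon x)\chi(\epsilon\xi)$ and to ``symmetrize the quantization of $K(t)$ appropriately,'' and you then candidly admit that making this cutoff simultaneously preserve the one-sided bound \eqref{1.3.1} and admit uniform commutator estimates is ``the delicate technical heart'' that you have not supplied. That admission is exactly where the proof is missing: the hypothesis \eqref{1.3.1} is a global operator inequality, and there is no reason it survives a truncation of the symbol of $K(t)$; without it the $L^2$ energy inequality for $u_\epsilon$, and hence everything downstream, fails. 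The paper's resolution is not a multiplicative cutoff of the symbol but the sandwich $\widetilde{H}_{\epsilon}(t) := X_{\epsilon}\txdx^{\dagger}\,\widetilde{H}(t)\,X_{\epsilon}\txdx$ of (4.1), where $X_\epsilon$ has the energy-adapted symbol $\chi\bigl(\epsilon(\mu+h(t,x,\xi))\bigr)$ of \eqref{3.15}. This makes the dissipativity automatic, since
\begin{equation*}
\bigl(X_{\epsilon}^{\dagger}K(t)X_{\epsilon}f,f\bigr)=\bigl(K(t)X_{\epsilon}f,X_{\epsilon}f\bigr)\geq -C\Vert X_{\epsilon}f\Vert^{2}\geq -C'\Vert f\Vert^{2},
\end{equation*}
while $X_{\epsilon}^{\dagger}H(t)X_{\epsilon}$ stays symmetric; and the choice of $\chi(\epsilon(\mu+h))$ rather than $\chi(\epsilon x)\chi(\epsilon\xi)$ is what makes the commutator $[X_{\epsilon},\Lambda]$ uniformly bounded (Lemma 3.4), because the factor $\epsilon\chi'(\epsilon(\mu+h))$ is absorbed by $(\mu+h)^{-1}$. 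Relatedly, the paper takes $\Lambda(t)=\mu+H(t)$ (time-dependent, so that $[\Lambda(t),H(t)]=0$) rather than a fixed elliptic weight; this is a convenience, not an essential difference, but it is what makes Proposition 4.2 tractable.

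Two secondary weaknesses. First, your limit passage via a weak-$\ast$ limit in $L^{\infty}(0,T;B^a_M)$ plus ``pointwise convergence of symbols'' does not by itself give $\widetilde{H}_{\epsilon}(t)u_{\epsilon}\to\widetilde{H}(t)u$, and the Lions-type recovery of strong continuity is delicate for a dissipative (non-norm-preserving) equation; the paper instead first takes $u_0\in B^{a+1}_M$, uses the compact embedding $B^{a+1}_M\hookrightarrow B^{a}_M$ (Lemma 4.4) and Ascoli--Arzel\`a to get \emph{strong} convergence in $\Eta B^a_M)$, and only then treats general $u_0\in B^a_M$ by density and the a priori bound \eqref{4.14}. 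Second, for uniqueness the paper does not run the energy estimate directly on a difference of solutions (which would require commutator bounds for $\widetilde{H}(t)$ itself, not $\widetilde{H}_{\epsilon}(t)$); it uses a duality argument against the backward Cauchy problem for $H(t)+iK(t)$, which is again dissipative in the reversed time direction by \eqref{1.3.1}. Your duality idea appears only for negative $a$; the paper uses it for uniqueness at every $a$.
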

	\begin{rem}
	Let $a(t)$ be a continuous function on $[0,T]$ such that $a(0) = 0$ and $a(t) > 0\ (0 < t \leq T)$.  Since $V := a(t)|x|^4 + |x|^2$ does not satisfy  either \eqref{1.4.1} or \eqref{1.5.1}  for any $M > 0$, Theorem 2.1 can not be applied to \eqref{1.1.1} with $\widetilde{H}(t) := (1/2m)\sumjd (-i\partial_{x_j})^2 +a(t)|x|^4 + |x|^2$.  Theorems 1.2 and 1.4 in \cite{Yajima 2011} can not be applied either, because the self-adjoint operators $\widetilde{H}(t)\ (0 \leq t \leq T)$  don't have a common domain in $L^2(\bR^d)$.
	\end{rem}
	Next, let us consider the  equations \eqref{1.1.1}    depending on a parameter $\rho \in \mathcal{O}$, where $\mathcal{O}$ is an open set in $\bR$.
	%
	%%%%%%%%%%%%%%%%%%%
	\begin{thm}
	We suppose that $\partial_x^{\alpha}\Vtxrho, \partial_x^{\alpha}A_j(t,x;\rho)\ (\jdots)$ and $\kalphabeta(t,x,\xi;\rho)$ are continuous in $[0,T]\times \bR^{2d} \times \mathcal{O} $  for all $\alpha$ and $\beta$. 
	(1) We assume that $(\Vtxrho,\Atxrho)$ and $k(t,x,\xi;\rho)$ satisfy Assumption 2.1   for all $\rho \in \mathcal{O}$ and have the uniform estimates \eqref{1.3.1}, \eqref{1.4.1} and \eqref{2.1.1}  with respect to $\rho \in \mathcal{O}$.
	 Let $u_0 \in B^a_0 \ (\adots)$ be independent of $\rho$ and $u(t;\rho)$
	the solutions to \eqref{1.1.1} with $u(0;\rho) = u_0$ determined in Theorem 2.1.  Then, the mapping : $\mathcal{O} \ni
\rho	\rightarrow \utrho \in \Eta B^a_0)$ is continuous, where the norm in $\Eta B^a_0)$ is $\max_{0 \leq t \leq T}\Vert f(t)\Vert_{a,0}.$  (2) We assume that $(\Vtxrho,\Atxrho)$ and $k(t,x,\xi;\rho)$ satisfy Assumption 2.2  for all $\rho \in \mathcal{O}$ and have the uniform estimates \eqref{2.2.1}-\eqref{2.7.1}  with respect to $\rho \in \mathcal{O}$.
Then we have the same assertions as in (1) where $B^a_0$ is replaced with $B^a_M$.
	\end{thm}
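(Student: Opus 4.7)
The plan is to reduce to a setting where continuous dependence on $\rho$ is automatic and then transfer the result by uniform limits. Using the approximating bounded operators $\widetilde{H}_{\epsilon}(t;\rho)$ that will be introduced in \S 4, I form the approximate solutions $u_{\epsilon}(t;\rho)$ solving
\[ i\partial_t u_{\epsilon}(t;\rho) = \widetilde{H}_{\epsilon}(t;\rho) u_{\epsilon}(t;\rho), \quad u_{\epsilon}(0;\rho) = u_0. \]
Since $\widetilde{H}_{\epsilon}(t;\rho)$ is a bounded operator on $B^a_0$ depending continuously on $(t,\rho)$ by the assumed continuity of $V$, $A$, $k$ in $\rho$, ordinary-differential-equation theory in Banach spaces (Picard iteration on the integral form $u_{\epsilon}(t;\rho) = u_0 - i\int_0^t \widetilde{H}_{\epsilon}(s;\rho) u_{\epsilon}(s;\rho)\,ds$) gives that $\rho \mapsto u_{\epsilon}(\cdot;\rho) \in \Eta B^a_0)$ is continuous for each fixed $\epsilon \in (0,1]$.

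Next I exploit the fact that the limit $u_{\epsilon} \to u$ takes place \emph{uniformly in} $\rho$. The proof of Theorem 2.1 establishes $u_{\epsilon}(\cdot;\rho) \to u(\cdot;\rho)$ in $\Eta B^a_0)$ via the energy bounds furnished by Proposition 4.2 together with the symbol hypotheses \eqref{1.3.1}, \eqref{1.4.1}, \eqref{2.1.1}. By the present theorem's $\rho$-uniformity assumption, the constants in those estimates can be chosen independently of $\rho \in \mathcal{O}$, so the convergence $u_{\epsilon} \to u$ is itself uniform in $\rho$. The map $\rho \mapsto u(\cdot;\rho)$ is then a uniform limit of continuous maps $\mathcal{O} \to \Eta B^a_0)$ and is therefore continuous, which proves (1) for $a \geq 0$.

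Negative indices $a = -1,-2,\dots$ I handle by duality: $B^a_0 = (B^{-a}_0)^*$, and the adjoint/backward evolution generated by $\widetilde{H}(t;\rho)^*$ satisfies hypotheses of the same type (the sign change in $K$ is absorbed by \eqref{1.3.1} and the remaining estimates are symmetric under complex conjugation), so the continuity already proved for $|a|$ transfers to $a < 0$ by pairing against test functions in $B^{-a}_0$ together with a $\rho$-uniform operator-norm bound on the two-parameter evolution coming from \eqref{2.8.1}. Part (2) proceeds identically with $B^a_0$ replaced by $B^a_M$ and Assumption 2.1 replaced by Assumption 2.2, using the corresponding version of Proposition 4.2 for that setting.

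The main obstacle will be verifying that the convergence $u_{\epsilon} \to u$ is genuinely uniform in $\rho$: one has to audit the proof of Theorem 2.1 step by step to confirm that every constant appearing there depends only on the uniform bounds \eqref{1.3.1}, \eqref{1.4.1}, \eqref{2.1.1} (respectively \eqref{1.5.1}, \eqref{2.2.1}--\eqref{2.7.1} for part (2)) and not on the particular symbol values at a given $\rho$. A secondary difficulty is the continuous $\rho$-dependence of the pseudodifferential piece $K(t;\rho)$ as a bounded operator on $B^a_0$ after the $\epsilon$-regularization: this follows from continuity of the double symbol $k(t,x,\xi;\rho)$ in $\rho$ together with the standard pseudodifferential calculus, but the double-symbol format of \eqref{1.2.1} and the polynomial weights and up to $2|a|$ derivatives built into the $B^a_0$-norm require slightly more care than the single-symbol situation.
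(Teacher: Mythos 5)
Your overall strategy (regularize, get continuity in $\rho$ for the bounded approximations, pass to the limit) is not the paper's, and the step on which it hinges has a genuine gap. You assert that because the constants in the energy estimates of Proposition 4.2 are uniform in $\rho$, the convergence $u_{\epsilon}(\cdot;\rho)\to u(\cdot;\rho)$ is uniform in $\rho$. That does not follow. In the paper's proof of Theorem 2.1 the convergence $u_{\epsilon_j}\to u$ is extracted purely qualitatively, by the Ascoli--Arzel\`a theorem applied via the compact embedding $B^{a+1}_M\hookrightarrow B^a_M$ (Lemma 4.4), along a subsequence identified afterwards by uniqueness; no modulus of convergence in $\epsilon$ is ever produced. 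Uniform \emph{bounds} on a family do not yield a $\rho$-independent \emph{rate}, so ``auditing the constants'' cannot close this gap. To make your route work you would need a quantitative estimate of the form $\Vert u_{\epsilon}(t;\rho)-u(t;\rho)\Vert_{a,M}\le \omega(\epsilon)\Vert u_0\Vert_{a+k,M}$ with $\omega(\epsilon)\to 0$ independent of $\rho$, which in turn requires an $\epsilon$-rate for $\Vert(\widetilde{H}_{\epsilon}(t;\rho)-\widetilde{H}(t;\rho))f\Vert_{a,M}$; since $\widetilde{H}_{\epsilon}=X_{\epsilon}^{\dagger}\widetilde{H}X_{\epsilon}$ and $X_{\epsilon}\to I$ only strongly (not in operator norm) on these weighted spaces, no such rate is available from the construction (4.1), and the paper never proves one.

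What the paper actually does is transplant the compactness argument from the $\epsilon$-variable to the $\rho$-variable: for $u_0\in B^{a+1}_M$ the uniform bound (4.16) in $B^{a+1}_M$ plus the equation give boundedness in $\mathcal{E}_t^0([0,T];B^{a+1}_M)$ and equicontinuity in $\mathcal{E}_t^0([0,T];B^{a}_M)$; Ascoli--Arzel\`a with the compact embedding yields a subsequential limit of $u(t;\rho_j)$, which is identified as $u(t;\rho)$ by passing to the limit in the integral equation (using the assumed continuity of the symbols in $\rho$) and invoking the \emph{uniqueness} part of Theorem 2.1; the general $u_0\in B^a_M$ is then handled by density and the $\rho$-uniform estimate (4.16). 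This works verbatim for all integers $a$, positive and negative, since Lemma 4.4 and Proposition 4.2 cover negative $a$; your separate duality argument for $a<0$ is both unnecessary and insufficient as stated, because pairing against test functions in $B^{-a}_0$ only yields weak continuity of $\rho\mapsto u(t;\rho)$, not continuity in the norm $\max_{0\le t\le T}\Vert\cdot\Vert_{a,0}$ that the theorem asserts.
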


	We set
\begin{equation} \label{2.9.1}
 h(t,x,\xi) :=  \frac{1}{2m}|\xi - A(t,x)|^2 + V(t,x).
 \end{equation}
Then by  \eqref{1.1.1} and \eqref{1.2.1} we have
\begin{equation*} 
 H(t)f =  H\left(t,\frac{X+X'}{2},D_x\right)f  
 \end{equation*}
for $f \in \Sspace.$
\begin{thm} 
 We suppose for $l= 0,1$ that $\partial_{\rho}^l\partial^{\alpha}_x\Vtxrho, \partial_{\rho}^l\partial^{\alpha}_xA_j(t,x;\rho)\ (j= 1,2,\dotsc,d)$ and $\partial_{\rho}^l\kalphabeta (t,
x,
\xi;\rho)$ are continuous in $\domains \times \mathcal{O}$ for all $\alpha$ and $\beta$.
(1)  Besides the assumptions of  (1)  in Theorem 2.2 we assume 
\begin{equation} \label{2.10.1}
\sup_{\rho \in \mathcal{O}} |\partial_{\rho}\partial^{\alpha}_x\Vtxrho| \leq C_{\alpha}<x>^{2},
\end{equation}
\begin{equation} \label{2.11.1}
 \sup_{\rho \in \mathcal{O}} |\partial_{\rho}\partial^{\alpha}_xA_j(t,x;\rho)| \leq C_{\alpha}<x>,
 \end{equation}
\begin{equation} \label{2.12.1}
\sup_{\rho \in \mathcal{O}} |\partial_{\rho}\kalphabeta (t,x,\xi;\rho)| \leq C_{\alpha\beta}(1 + |x|^2 + |\xi|^2)
\end{equation}
in $\domains$ for all $\alpha$ and $\beta$ .  Let $u_0 \in B^{a+1}_0\ (\adots)$ be independent of $\rho$ and $\utrho$ the solutions to \eqref{1.1.1} with $u(0) = u_0$.  Then, the mapping : $ \mathcal{O} \ni \rho \rightarrow \utrho \in \Eta B^a_0)$ is continuously differentiable with respect to $\rho$, we have
\begin{equation} \label{2.13.1}
\sup_{\rho \in \mathcal{O}} \Vert \partial_{\rho}\utrho\Vert_{a,0} \leq C_a \Vert u_0 \Vert_{a+1,0}\quad (0 \leq t \leq T)
 \end{equation}
and $\partial_{\rho}\utrho$ is the solution to
\begin{equation} \label{2.14.1}
 i\frac{\partial}{\partial t}\wtrho = \widetilde{H}(t;\rho)\wtrho + \frac{\partial  \widetilde{H}(t;\rho)}{\partial\rho}\utrho
  \end{equation}
with $w(0) = 0$. Here, $\partial_{\rho} \widetilde{H}(t;\rho)$ denotes the pseudo-differential operator with the double symbol $\partial_{\rho} \widetilde{h}(t,(x+x')/2,\xi;\rho)$, where  $\widetilde{h}(t,x,\xi;\rho) = h(t,x,\xi;\rho) - ik(t,x,\xi;\rho)$. (2)  Besides the assumptions of  (2)  in Theorem 2.2 we assume
\begin{equation} \label{2.15.1}
\sup_{\rho \in \mathcal{O}} |\partial_{\rho}\partial^{\alpha}_x\Vtxrho| \leq C_{\alpha}<x>^{2(M+1)},
\end{equation}
\begin{equation} \label{2.16.1}
 \sup_{\rho \in \mathcal{O}} |\partial_{\rho}\partial^{\alpha}_xA_j(t,x;\rho)| \leq C_{\alpha}<x>^{M+1},
 \end{equation}
\begin{equation} \label{2.17.1}
\sup_{\rho \in \mathcal{O}} |\partial_{\rho}\kalphabeta (t,x,\xi;\rho)| \leq C_{\alpha\beta}(<x>^{2(M+1)}+ <\xi>^2)
\end{equation}
in $\domains$ for all $\alpha$ and $\beta$.  Then we have the same assertions  as in (1) where $B^a_0$ is replaced with $B^a_M$.
\end{thm}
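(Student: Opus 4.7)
The plan is to proceed by a Duhamel plus energy-method argument, bootstrapping on Theorems 2.1 and 2.2. First I would construct a candidate $w(t;\rho)$ for $\partial_\rho u(t;\rho)$ as the unique solution of the linear inhomogeneous Cauchy problem \eqref{2.14.1} with $w(0;\rho)=0$, and only afterwards identify $w$ with the true $\rho$-derivative of $u$. The assumptions \eqref{2.10.1}--\eqref{2.12.1} (resp.\ \eqref{2.15.1}--\eqref{2.17.1}) are tailored so that the double symbol $\partial_{\rho}\widetilde h(t,x,\xi;\rho)$ has the growth $\langle x\rangle^2+\langle\xi\rangle^2$ (resp.\ $\langle x\rangle^{2(M+1)}+\langle\xi\rangle^2$); by the pseudo-differential calculus used in proving Theorem 2.1 the operator $\partial_{\rho}\widetilde H(t;\rho)$ is then bounded from $B^{a+1}_0$ into $B^a_0$ (resp.\ from $B^{a+1}_M$ into $B^a_M$) uniformly in $(t,\rho)$. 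Since the hypothesis $u_0\in B^{a+1}_0$ (resp.\ $B^{a+1}_M$) together with Theorem 2.1 yields $u(\cdot;\rho)\in\mathcal{E}^0_t([0,T];B^{a+1}_M)$, the forcing term $f(t;\rho):=\partial_{\rho}\widetilde H(t;\rho)u(t;\rho)$ lies in $\mathcal{E}^0_t([0,T];B^a_M)$ with norm controlled by $\|u_0\|_{a+1,M}$. Running the same $\epsilon$-approximation $\widetilde H_\epsilon$ as in the proof of Theorem 2.1 for the inhomogeneous equation and using \eqref{1.3.1} in the $B^a_M$-energy estimate, one obtains $w(t;\rho)\in\mathcal{E}^0_t([0,T];B^a_M)\cap\mathcal{E}^1_t([0,T];B^{a-1}_M)$ together with the bound \eqref{2.13.1}.

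Next I would identify $w$ with $\partial_{\rho}u$. For $h\neq 0$ small, set $v_h(t):=h^{-1}\bigl(u(t;\rho+h)-u(t;\rho)\bigr)$ and $z_h(t):=v_h(t)-w(t;\rho)$. Subtracting \eqref{1.1.1} at $\rho+h$ and at $\rho$ and then subtracting \eqref{2.14.1}, one gets
\begin{equation*}
i\partial_t z_h=\widetilde H(t;\rho+h)z_h+\bigl[\widetilde H(t;\rho+h)-\widetilde H(t;\rho)\bigr]w(t;\rho)+R_h(t)u(t;\rho),\qquad z_h(0)=0,
\end{equation*}
where $R_h(t):=h^{-1}(\widetilde H(t;\rho+h)-\widetilde H(t;\rho))-\partial_{\rho}\widetilde H(t;\rho)$. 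Applying the energy estimate of Theorem 2.1 to this Cauchy problem at parameter $\rho+h$ gives
\begin{equation*}
\|z_h(t)\|_{a,M}\leq C_a\int_0^T\Bigl(\bigl\|[\widetilde H(t;\rho+h)-\widetilde H(t;\rho)]w(s;\rho)\bigr\|_{a,M}+\|R_h(s)u(s;\rho)\|_{a,M}\Bigr)ds.
\end{equation*}
By the mean value theorem in $\rho$ and the uniform bounds \eqref{2.10.1}--\eqref{2.12.1} (resp.\ \eqref{2.15.1}--\eqref{2.17.1}), the symbols of $\widetilde H(t;\rho+h)-\widetilde H(t;\rho)$ and of $R_h(t)$ are uniformly dominated and tend to $0$ pointwise in $(t,x,\xi)$ as $h\to 0$; combined with the fact that $w(s;\rho)\in B^a_M$ and $u(s;\rho)\in B^{a+1}_M$, dominated convergence gives $\sup_{0\leq t\leq T}\|z_h(t)\|_{a,M}\to 0$. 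Hence $\partial_{\rho}u(t;\rho)=w(t;\rho)$ in $\mathcal{E}^0_t([0,T];B^a_M)$.

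Finally, continuity of $\rho\mapsto\partial_{\rho}u(\cdot;\rho)$ in $\mathcal{E}^0_t([0,T];B^a_M)$ reduces to Theorem 2.2 applied to \eqref{2.14.1}: the Hamiltonian $\widetilde H(t;\rho)$ already satisfies the hypotheses of Theorem 2.2, and the forcing $\partial_{\rho}\widetilde H(t;\rho)u(t;\rho)$ is continuous in $\rho$ with values in $B^a_M$ because $u(\cdot;\rho)$ depends continuously on $\rho$ in $B^{a+1}_M$ (Theorem 2.2 one order higher, which the hypothesis on $u_0$ allows) and because the symbol $\partial_{\rho}\widetilde h$ is itself continuous in $\rho$ with uniform bounds; the same Duhamel/energy argument as in Step 1 then yields continuity of $w(\cdot;\rho)$. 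The main obstacle is the pseudo-differential remainder estimate $\|R_h(s)u(s;\rho)\|_{a,M}\to 0$ uniformly in $s$ in case (2), where the $B^a_M$-norm couples weights $\langle x\rangle^{2a(M+1)}$ with up to $2a$ spatial derivatives; handling the commutators that arise when the $x$-weights and $\partial_x$'s are moved past the double-symbol operator $R_h$ requires the same care as in the proof of Theorem 2.1(2), but nothing beyond it.
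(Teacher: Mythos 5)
Your overall architecture (solve the inhomogeneous problem \eqref{2.14.1} to produce a candidate $w$, identify it with the $\rho$-derivative, then get continuity in $\rho$ from Theorem 2.2 applied to \eqref{2.14.1}) matches the paper's first and last steps, but the identification step has two genuine gaps. The first is a regularity-bookkeeping problem: in your equation for $z_h$ the forcing term $\bigl[\widetilde H(t;\rho+h)-\widetilde H(t;\rho)\bigr]w(t;\rho)$ involves an operator whose symbol has the same growth $(<x>^{2(M+1)}+<\xi>^2)$ as $\widetilde h$ itself, so it maps $B^{a+1}_M$ boundedly into $B^a_M$ but \emph{loses one order}. With $u_0\in B^{a+1}_M$ you only know $w(\cdot;\rho)\in B^a_M$, so this forcing term lives in $B^{a-1}_M$ and your energy inequality does not close at level $a$; as written it proves $z_h\to0$ only in $B^{a-1}_M$. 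To get the claimed convergence in $B^a_M$ you must first run the argument for $u_0\in B^{a+2}_M$ (so that $w\in B^{a+1}_M$) and then descend to $u_0\in B^{a+1}_M$ by density, using the uniform estimates \eqref{4.20}--\eqref{4.23}. This two-stage bootstrap is exactly why the paper's proof passes through $B^{a+2}_M$, and it is not optional.

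The second gap is the appeal to ``dominated convergence'' for $\sup_s\Vert R_h(s)u(s;\rho)\Vert_{a,M}\to0$. The operator $R_h(t)$ is a pseudo-differential operator, not a multiplication operator: pointwise convergence of its symbol to $0$ together with a uniform weighted bound does not, by itself, give convergence of $R_h(s)u(s;\rho)$ in $B^a_M$. One needs a strong-convergence lemma for families of pseudo-differential operators with uniformly bounded seminorms and locally uniformly vanishing symbols (the role played by Lemma 2.2 of \cite{Ichinose 1995} elsewhere in the paper), combined with an argument that the convergence is uniform over the compact orbit $\{u(s;\rho):0\le s\le T\}$ and, in case (2), with the commutator analysis needed to move the weights $<x>^{2a(M+1)}$ and the derivatives $\partial_x^\alpha$ past $R_h$. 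The paper sidesteps this quantitative requirement entirely: it establishes only the uniform bound \eqref{4.20} and equicontinuity of the difference quotients $w_\tau$, extracts a convergent subsequence by Ascoli--Arzel\`a via the compact embedding $B^{a+1}_M\hookrightarrow B^a_M$ (Lemma 4.4), and identifies the limit through the uniqueness of solutions to \eqref{2.14.1}. Your direct estimate on $z_h$ is a legitimate alternative route, but the convergence of the remainder is precisely the hard point and cannot be dispatched by dominated convergence; either supply the strong-convergence lemma and the uniformity over $s$, or fall back on the compactness-plus-uniqueness scheme.
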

	Now, we consider the 4-particle systems \eqref{1.7.1}.\par
	{\bf Assumption 2.3.}  (1) Each $(V_k(t,x),A^{(k)}(t,x))$ and $K_k(t) \ (k = 1,2)$ satisfies Assumption 2.2 with $M = M_k > 0$. (2)  Each $(V_k,A^{(k)})$ and $K_k(t)\ (k = 3,4)$ satisfies Assumption 2.1.
	(3)  For $M_0 := \min(M_1,M_2)$   $W_{12}$ satisfies
\begin{equation} \label{2.18.1}
|W_{12}(t,x)| \leq C<x>^{2(M_0 +1)-\delta}
  \end{equation}
with a constant $\delta > 0$ and 
\begin{equation} \label{2.19.1}
|\partial_x^{\alpha}W_{12}(t,x)| \leq C_{\alpha}<x>^{2(M_0 +1)},\  |\alpha| \geq 1.
  \end{equation}
(4) Each $W_{ij}(t,x)$ except $W_{12}$ satisfies 
\begin{equation} \label{2.20.1}
|\partial_x^{\alpha}W_{ij}(t,x)| \leq C_{\alpha}<x>,\  |\alpha| \geq 1.
  \end{equation}
%. 
\par
	We introduce the weighted Sobolev spaces $
B'^a(\mathbb{R}^{4d})  := \{f \in  L^2(\mathbb{R}^{4d});
 \|f\|_a := \|f\| + 
 \sum_{|\alpha| \leq  2a} \|\partial_x^{\alpha}f\| + \sum_{k=1}^4
\|<x^{(k)}>^{2a(M_k+1)}f\| < \infty\}\ (a = 1,2,\dots)
$ with $M_3 = M_4 = 0$. We denote the dual space of $B'^a$ by $B'^{-a}$ and $L^2$ by $B'^0$. 
\begin{thm}
Under Assumption 2.3 for any $u_0 \in B'^a(\bR^{4d})\ (a = 0,\pm1,\pm2,\\
\dots)$ there exists the unique solution $u(t) \in \mathcal{E}_t^0([0,T];B'^a) \cap \mathcal{E}_t^1([0,T];B'^{a-1})$ with $u(0) = u_0$ to \eqref{1.7.1}.  This solution $u(t)$ satisfies 
\begin{equation} \label{2.21.1}
 \Vert u(t) \Vert_a \leq C_a \Vert u_0 \Vert_a\quad (0 \leq t \leq T).
 \end{equation}
\end{thm}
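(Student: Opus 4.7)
The plan is to mimic the strategy used in the one-particle cases (Theorem 2.1), namely the energy method with an $\epsilon$-regularized family of bounded Hamiltonians $\{\widetilde{H}_\epsilon(t)\}_{0 < \epsilon \leq 1}$ on $B'^a$ given by the formula \eqref{5.29} announced in the Introduction, deriving energy estimates in $B'^a$ uniform in $\epsilon$, and then passing to the limit $\epsilon \to 0$. Uniqueness and the case $a < 0$ will be handled, as in the one-particle setting, by reading the energy estimate on the dual equation.

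First I would construct, for each $\epsilon$, a solution $u_\epsilon \in \mathcal{E}^0_t([0,T]; B'^a) \cap \mathcal{E}^1_t([0,T]; B'^a)$ to $i\partial_t u_\epsilon = \widetilde{H}_\epsilon(t) u_\epsilon$ with $u_\epsilon(0) = u_0$ by standard Banach-space ODE theory, using that each $\widetilde{H}_\epsilon(t)$ is bounded on $B'^a$ and strongly continuous in $t$. Then I would establish an energy inequality of the form $\frac{d}{dt}\|u_\epsilon(t)\|_a^2 \leq C_a \|u_\epsilon(t)\|_a^2$, uniform in $\epsilon$, by computing $(\widetilde{H}_\epsilon u_\epsilon, u_\epsilon)_a$ together with its commutators with the weighted-derivative operators that define $\|\cdot\|_a$. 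The contributions of the one-particle pieces $H_k - iK_k$ are controlled exactly as in Theorem 2.1: for $k = 1, 2$ by the estimates of part (2) thanks to Assumption 2.3(1), and for $k = 3, 4$ by the estimates of part (1) thanks to Assumption 2.3(2). The hypothesis \eqref{1.3.1} on each $K_k$ keeps the anti-Hermitian contribution one-sided, which is the essential reason the energy method survives in the non-self-adjoint setting. The binary potentials $W_{ij}$ with $(i,j) \neq (1,2)$ satisfy \eqref{2.20.1} and therefore behave like the vector-potential terms of Assumption 2.1, producing only lower-order contributions.

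The main obstacle, as expected, is the heavy interaction $W_{12}$. Its pointwise bound \eqref{2.18.1} involves the large exponent $2(M_0+1)$ with $M_0 = \min(M_1, M_2)$, which is comparable to the heavier weights already present in $\|\cdot\|_a$ through the first two particles. The key elementary observation is that, from $|x^{(1)} - x^{(2)}|^2 \leq 2(|x^{(1)}|^2 + |x^{(2)}|^2)$ and $M_0 \leq M_k$, one has $<x^{(1)}-x^{(2)}>^{2(M_0+1)} \leq C(<x^{(1)}>^{2(M_1+1)} + <x^{(2)}>^{2(M_2+1)})$, so that the $-\delta$ margin in \eqref{2.18.1} turns $W_{12}$ into a fractional-order perturbation relative to the confining single-particle weights that enter $\|u_\epsilon\|_a$. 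The derivative bounds \eqref{2.19.1} yield analogous dominations for the commutators that appear in the energy estimate for $a \geq 1$, so all contributions of $W_{12}$ are absorbable into $C_a\|u_\epsilon\|_a^2$.

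Finally, I would pass to the limit $\epsilon \to 0$ using the uniform estimates and a Gronwall argument, obtaining a solution $u \in \mathcal{E}^0_t([0,T]; B'^a) \cap \mathcal{E}^1_t([0,T]; B'^{a-1})$ satisfying \eqref{2.21.1} for $a \geq 0$. Uniqueness for $a \geq 0$ follows by applying the same energy estimate to the difference of two solutions. For $a < 0$, writing $a = -b$ with $b > 0$, I would identify $B'^{-b}$ with the dual of $B'^{b}$ and reduce to a backward dual Cauchy problem whose Hamiltonian is the formal adjoint $\widetilde{H}(t)^\ast = H(t) + iK(t) + \sum W_{ij}$, which satisfies the same structural hypotheses up to the sign of each $K_k$; the bound \eqref{1.3.1} still yields the required one-sided control after time reversal, so the scheme just outlined reproduces the estimate in the dual norm.
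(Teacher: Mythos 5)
Your overall scheme---regularization by $\widetilde{H}_\epsilon(t) = X_\epsilon^\dagger\widetilde{H}(t)X_\epsilon$ as in \eqref{5.29}, energy estimates uniform in $\epsilon$, use of \eqref{1.3.1} for the anti-Hermitian part, compactness/limit passage, and duality for uniqueness and for $a<0$---is indeed the paper's. The gap is in the $a\neq 0$ energy estimate, specifically in how you propose to control the commutators of the Hamiltonian with the operators measuring the $B'^a$ norm, and above all in your treatment of $W_{12}$. If you commute \eqref{1.7.1} with the raw derivatives $\partial_z^\alpha$, $|\alpha|\le 2a$, the term $[\partial_z^\alpha, W_{12}]u$ has leading part $(\partial_{z_j}W_{12})\,\partial_z^{\alpha-e_j}u$; by \eqref{2.19.1} the factor $\partial_{z_j}W_{12}$ is only $O(<x^{(1)}-x^{(2)}>^{2(M_0+1)})$ with \emph{no} $-\delta$ gain---the margin in \eqref{2.18.1} concerns $W_{12}$ itself, not its derivatives---so this commutator term has symbol of size $<x>^{2(M_0+1)}<\zeta>^{2a-1}$, which is of total order $2a+1$ in the calculus where $<\xi^{(k)}>$ and $<x^{(k)}>^{M_k+1}$ have order one, whereas $\Vert\cdot\Vert_a$ only controls order $2a$. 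One checks directly (e.g.\ on $<\xi>^2=<x>^{2(M+1)}$) that $<x>^{2(M+1)}<\xi>^{2a-1}$ is not $O\bigl((<\xi>^2+<x>^{2(M+1)})^a\bigr)$, so ``absorbable into $C_a\Vert u_\epsilon\Vert_a^2$'' is not justified by the mechanism you describe. The same order-counting problem already occurs for the commutator of $<x^{(k)}>^{2a(M_k+1)}$ with the kinetic energy when $M_k>0$.

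The paper's way out is to conjugate not by the separate weights but by powers of the time-dependent elliptic operator $\Lambda(t)=\mu+\widehat{H}(t)$ of \eqref{5.16}, whose symbol \eqref{5.3}--\eqref{5.5} contains the full two-body block $H_1+H_2+W_{12}$ (plus oscillator substitutes $L_3+L_4$ for particles $3,4$). Then $[\Lambda(t),\,H_1+H_2+W_{12}]=0$ identically, so the supercritical commutators above never arise; what survives in $[\Lambda(t),\widetilde{H}(t)]$ (see \eqref{5.26}) are only $[K_k,\cdot\,]$, $[\widetilde{H}_{3,4},L_{3,4}]$ and $[\sum{}'W_{ij},\cdot\,]$, all of order at most $2$ and hence bounded after multiplication by $\Lambda(t)^{-1}$ (Lemma 5.4, via the explicit computation \eqref{5.27}--\eqref{5.28}). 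Note also that the $-\delta$ in \eqref{2.18.1} is used for a different purpose than the one you assign it: it guarantees the lower bound \eqref{5.5-2}, i.e.\ that $\mu+\widehat{h}$ stays elliptic and positive even though $W_{12}$ may be large and negative, which is what makes $\Lambda(t)^{-1}$ and the equivalence of $\Vert\Lambda(t)^af\Vert$ with $\Vert f\Vert_a$ available (Lemma 5.1, Proposition 5.2). Without this choice of conjugating operator the energy estimate does not close for $a\neq 0$.
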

	We will consider the 4-particle systems \eqref{1.7.1}  depending on a parameter $\rho \in \mathcal{O}$.
\begin{thm}
We suppose that $\partial_x^{\alpha}V_k(t,x;\rho)$,  $\partial_x^{\alpha}A^{(k)}_j(t,x;\rho) \ (k = 1,2,3,4, \\
\jdots)$, 
$k^{(\alpha)}_{\/l \ (\beta)}(t,x,\xi;\rho)\ (l = 1,2,3,4)$ and  
$\partial_x^{\alpha}W_{ij}(t,x;\rho)\ (1 \leq i < j \leq 4)$ are continuous in $[0,T]\times \bR^{2d} \times \mathcal{O}$
 for all $\alpha$ and $\beta$.
In addition, we assume 
that $(V_k(t,x;\rho),A^{(k)}(t,x;\rho))$ and $K_k(t;\rho) \ (k= 1,2,3,4)$ and $\Wijtxrho\ (1 \leq i < j \leq 4)$ satisfy Assumption 2.3  for all $\rho \in \mathcal{O}$ and have the uniform estimates   with respect to $\rho \in \mathcal{O}$ for all inequalities stated in Assumption 2.3.  \par
Let $u_0 \in B'^a \ (\adots)$ be independent of $\rho$ and $u(t;\rho)$
	the solutions to \eqref{1.7.1} with $u(0;\rho) = u_0$ determined in Theorem 2.4.  Then, the mapping : $\mathcal{O} \ni
\rho	\rightarrow \utrho \in \Etaprime$ is continuous. 
\end{thm}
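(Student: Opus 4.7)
The plan is to adapt the proof of Theorem 2.2 to the 4-particle setting, using the uniform-in-$\rho$ energy estimate provided by Theorem 2.4 together with a source-term reduction and a density argument. Fix $\rho_0 \in \mathcal{O}$ and choose a relatively compact neighbourhood of it contained in $\mathcal{O}$; all estimates below are to be uniform in $\rho$ over this neighbourhood. Setting $w(t;\rho) := u(t;\rho) - u(t;\rho_0)$ and subtracting the two equations gives
\begin{equation*}
i\,\partial_t w(t;\rho) \;=\; \widetilde{H}(t;\rho)\,w(t;\rho) \;+\; F(t;\rho), \qquad w(0;\rho) = 0,
\end{equation*}
where $F(t;\rho) := [\widetilde{H}(t;\rho) - \widetilde{H}(t;\rho_0)]\,u(t;\rho_0)$.

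First I would treat the case $u_0 \in B'^{a+1}$ with $a \geq 0$. By Theorem 2.4 applied at $\rho_0$, one has $u(\cdot;\rho_0) \in \mathcal{E}^0_t([0,T];B'^{a+1})$ with $\max_t \|u(t;\rho_0)\|_{a+1} \leq C_{a+1}\|u_0\|_{a+1}$. Under Assumption 2.3 each term of $\widetilde{H}(t;\rho) - \widetilde{H}(t;\rho_0)$ is a second-order differential or pseudo-differential operator whose coefficients grow at worst like $\langle x^{(k)}\rangle^{2(M_k+1)}$, so it maps $B'^{a+1}$ boundedly into $B'^a$ and $F(\cdot;\rho) \in \mathcal{E}^0_t([0,T];B'^a)$. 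Applying Theorem 2.4 to the inhomogeneous problem for $w$ (via Duhamel, or directly by the energy method used to prove the theorem with the extra forcing $F$ absorbed into the Gronwall step) then yields
\begin{equation*}
\max_{0 \leq t \leq T}\|w(t;\rho)\|_a \;\leq\; C_a \int_0^T \|F(s;\rho)\|_a \, ds,
\end{equation*}
with $C_a$ independent of $\rho$ in our neighbourhood. To conclude, I would show $\max_s \|F(s;\rho)\|_a \to 0$ as $\rho \to \rho_0$ by a cutoff-plus-uniform-continuity argument: on a compact $(t,x,\xi)$-region of radius $R$, joint continuity of each coefficient and symbol in $(t,x,\xi,\rho)$ gives convergence uniform in $(t,x,\xi)$; outside, the uniform polynomial bounds are absorbed by the slight extra weight and derivative afforded by the regularity gap between $B'^{a+1}$ and $B'^a$.

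For general $u_0 \in B'^a$, $a \geq 0$, I would approximate by $u_0^{(n)} \in B'^{a+1}$ (e.g.\ chosen from $\mathcal{S}(\bR^{4d})$) with $\|u_0 - u_0^{(n)}\|_a \to 0$, write
\begin{equation*}
u(\cdot;\rho) - u(\cdot;\rho_0) = [u^{(n)}(\cdot;\rho) - u^{(n)}(\cdot;\rho_0)] + [u(\cdot;\rho) - u^{(n)}(\cdot;\rho)] - [u(\cdot;\rho_0) - u^{(n)}(\cdot;\rho_0)],
\end{equation*}
control the last two brackets uniformly in $\rho$ by \eqref{2.21.1}, and take first $n$ large and then $\rho \to \rho_0$. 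The case $a < 0$ is then handled by duality, since the adjoint equation has the same structural form as \eqref{1.7.1} and the $B'^a$-$B'^{-a}$ pairing is preserved.

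The genuinely delicate piece will be the convergence $\max_s \|F(s;\rho)\|_a \to 0$. The hypotheses give only pointwise-in-$(t,x,\xi)$ continuity in $\rho$ together with uniform polynomial bounds; turning this into norm convergence on $B'^a$ uniformly in $s \in [0,T]$ requires carefully combining a cutoff with the uniform equicontinuity on compact $(t,x,\xi)$-sets that those hypotheses yield, and doing so in a manner that simultaneously controls the differential parts $H_k(t;\rho)$, the pseudo-differential parts $K_k(t;\rho)$, and the interaction terms $W_{ij}(t,\cdot;\rho)$, each with its own growth exponent dictated by Assumption 2.3.
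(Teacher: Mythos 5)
Your route is genuinely different from the paper's. The paper proves Theorem 2.5 exactly as it proves Theorem 2.2: it uses the uniform bound $\sup_{\rho}\Vert u(t;\rho)\Vert_{a+1}\leq C_{a+1}\Vert u_0\Vert_{a+1}$ and equicontinuity in $t$ to get a family that is bounded in $\mathcal{E}^0_t([0,T];B'^{a+1})$ and equicontinuous in $\mathcal{E}^0_t([0,T];B'^{a})$, invokes the compactness of the embedding $B'^{a+1}\hookrightarrow B'^{a}$ and Ascoli--Arzel\`a to extract a convergent subsequence of $u(t;\rho_j)$, identifies the limit as $u(t;\rho)$ by passing to the limit in the integral equation and using uniqueness, and then runs the same density argument in $u_0$ that you propose. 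The advantage of that scheme is that the only convergence of operators it needs is $[\widetilde{H}(\theta;\rho_j)-\widetilde{H}(\theta;\rho)]u(\theta;\rho_j)\to 0$ in the \emph{weaker} space $B'^{a-1}$ while the inputs are bounded in $B'^{a+1}$, so there are two full units of weight and derivative to spare; the strong convergence in $B'^{a}$ then comes for free from compactness. Your Duhamel scheme instead demands $\max_s\Vert F(s;\rho)\Vert_a\to 0$, i.e.\ convergence of the operator difference applied to $u(\cdot;\rho_0)\in B'^{a+1}$ measured in $B'^{a}$ itself, with only one unit of slack --- which is exactly what the operator consumes. What your approach buys, if completed, is a quantitative modulus of continuity in $\rho$ rather than a soft compactness statement; what it costs is a strictly harder convergence lemma.

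That harder lemma is where your sketch has a genuine soft spot. You claim the far-field region is controlled because the uniform polynomial bounds ``are absorbed by the slight extra weight and derivative afforded by the regularity gap between $B'^{a+1}$ and $B'^{a}$.'' There is no extra weight left over: $V_k(t,x;\rho)-V_k(t,x;\rho_0)$ is only bounded by $C\langle x^{(k)}\rangle^{2(M_k+1)}$ (the uniform bound \eqref{1.5.1} has no $-\delta$ gain, and neither do the derivative bounds \eqref{2.3.1}, \eqref{2.6.1}, \eqref{2.19.1}), and $\langle x^{(k)}\rangle^{2(M_k+1)}$ is precisely the full weight increment between $B'^{a+1}$ and $B'^{a}$. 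So the tail $\Vert(1-\chi_R)F(s;\rho)\Vert_a$ is \emph{not} $O(R^{-\epsilon})\Vert u(s;\rho_0)\Vert_{a+1}$; its smallness must instead be extracted from the uniform smallness of the tails of the compact orbit $\{u(s;\rho_0):s\in[0,T]\}\subset B'^{a+1}$ (dominated convergence plus compactness in $s$), which you should state explicitly. Moreover, for the nonlocal pieces $K_k(t;\rho)-K_k(t;\rho_0)$ a spatial cutoff does not localize the operator, and converting locally uniform convergence of the symbol (together with merely bounded, not small, higher derivatives) into convergence in the operator norm from $B'^{a+1}$ to $B'^{a}$ is exactly the content of Lemma~2.2 of \cite{Ichinose 1995}, which the paper leans on; your proposal names this difficulty but does not supply the argument. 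Finally, the detour through duality for $a<0$ is unnecessary: the energy estimate \eqref{2.21.1}, the compact embedding, and the whole argument are stated for all integers $a$, so the case $a\leq 0$ requires no separate treatment.
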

\begin{thm}
Besides the assumptions of Theorem 2.5 we suppose for all $\alpha$ and $\beta$ that all functions $\partial_{\rho}\partial^{\alpha}_x\Vktxrho, \partial_{\rho}\partial^{\alpha}_xA^{(k)}_j(t,x;\rho),  \partial_{\rho}k^{(\alpha)}_{\/l \ (\beta)}(t,x,\xi;\rho)$ and $\partial_{\rho}\partial^{\alpha}_xW_{ij}(t,x;\rho)$ are continuous in $\domains \times \mathcal{O}$.  In addition, we assume \eqref{2.15.1} - \eqref{2.17.1}  with $M = M_k$ for $(V_k,A^{(k)})$ and $K_k(t)\ (k = 1,2)$, \eqref{2.10.1} - \eqref{2.12.1} for $(V_k,A^{(k)})$ and $K_k(t)\ (k = 3,4)$,
\begin{equation} \label{2.22.1}
\sup_{\rho \in \mathcal{O}} |\partial_{\rho}\partial^{\alpha}_xW_{12}(t,x;\rho)| \leq C_{\alpha}<x>^{2(M_0+1)}
\end{equation}
for all $\alpha$ and 
\begin{equation} \label{2.23.1}
\sup_{\rho \in \mathcal{O}} |\partial_{\rho}\partial^{\alpha}_xW_{ij}(t,x;\rho)| \leq C_{\alpha}<x>^2, \ (i,j) \not= (1,2)
\end{equation}
for all $\alpha$.
\par
 Let $u_0 \in B'^{a+1}\ (\adots)$ be independent of $\rho$ and $\utrho$ the solutions to \eqref{1.7.1} with $u(0;\rho) = u_0$.  Then we have the same assertion as in Theorem 2.3.
 \end{thm}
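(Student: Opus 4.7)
The plan is to adapt the proof of Theorem 2.3 to the $4$-particle setting, using Theorem 2.4 as the multi-particle analog of Theorem 2.1 and Theorem 2.5 as the analog of Theorem 2.2. The strategy has three steps: (i) identify a candidate for $\partial_\rho\utrho$ as the solution $\wtrho$ of an inhomogeneous Schr\"odinger equation; (ii) show this equation is well posed in $\Etaprime$ whenever $u_0 \in B'^{a+1}$; (iii) show that the difference quotient $h^{-1}\{u(t;\rho+h) - \utrho\}$ converges to $\wtrho$ in $\Etaprime$ as $h\to 0$, and that the limit depends continuously on $\rho$.

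First I would define $\wtrho$ as the solution of
\begin{equation*}
i\frac{\partial w}{\partial t}(t;\rho) = \widetilde{H}(t;\rho)\wtrho + \frac{\partial \widetilde{H}(t;\rho)}{\partial \rho}\utrho, \quad w(0;\rho) = 0,
\end{equation*}
where $\partial_\rho \widetilde{H}(t;\rho)$ is the sum of the pseudo-differential operators with symbols $\partial_\rho \widetilde{h}_k(t,x^{(k)},\xi;\rho)$ for $k=1,2,3,4$ and the multiplications by $\partial_\rho \Wijtxrho$. The hypotheses \eqref{2.15.1}--\eqref{2.17.1} for $k=1,2$, \eqref{2.10.1}--\eqref{2.12.1} for $k=3,4$, and \eqref{2.22.1}--\eqref{2.23.1} for the interactions show that $\partial_\rho \widetilde{H}(t;\rho)$ maps $B'^{a+1}$ boundedly into $B'^a$, uniformly in $\rho\in\mathcal{O}$. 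Since $\utrho \in \Eta B'^{a+1})$ by Theorem 2.4, the source lies in $\Etaprime$, so applying Theorem 2.4 together with Duhamel's principle yields $\wtrho \in \Etaprime \cap \mathcal{E}_t^1([0,T];B'^{a-1})$ with the uniform bound $\|\wtrho\|_a \leq C_a\|u_0\|_{a+1}$.

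Next, subtracting the equations for $u(t;\rho+h)$ and $\utrho$, the difference quotient $w_h := h^{-1}\{u(t;\rho+h) - \utrho\}$ satisfies
\begin{equation*}
i\frac{\partial w_h}{\partial t} = \widetilde{H}(t;\rho+h)w_h + \frac{\widetilde{H}(t;\rho+h) - \widetilde{H}(t;\rho)}{h}\utrho, \quad w_h(0) = 0.
\end{equation*}
Continuity of the symbols in $\rho$ combined with the uniform bounds implies that $h^{-1}\{\widetilde{H}(t;\rho+h) - \widetilde{H}(t;\rho)\} \to \partial_\rho \widetilde{H}(t;\rho)$ as bounded operators $B'^{a+1} \to B'^a$. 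Applying the a priori estimate from Theorem 2.4 to $w_h - w(\cdot;\rho+h)$, which satisfies a homogeneous Schr\"odinger equation perturbed by a source tending to $0$ in $\Etaprime$, gives $w_h \to w$ in $\Etaprime$; combined with continuity of $w(\cdot;\rho+h)$ in $\rho$ via Theorem 2.5 applied to the inhomogeneous problem, this yields $\partial_\rho u = w$ and the continuity of $\rho\mapsto \partial_\rho\utrho$.

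The main obstacle is to control the action of $\partial_\rho\widetilde{H}$ on $u$ within the 4-particle weighted space $B'^a$, whose weights $\langle x^{(k)}\rangle^{2a(M_k+1)}$ differ across particles. In particular, $\partial_\rho W_{12}$ carries the weight $\langle x^{(1)}-x^{(2)}\rangle^{2(M_0+1)}$ with $M_0=\min(M_1,M_2)$, which must be dominated by the separate weights; this is handled by $\langle x^{(1)}-x^{(2)}\rangle^{2(M_0+1)} \leq C(\langle x^{(1)}\rangle^{2(M_k+1)} + \langle x^{(2)}\rangle^{2(M_k+1)})$ for $k=1,2$. The approximation scheme via $\widetilde{H}_\epsilon(t;\rho)$ introduced in (5.29) for Theorem 2.4 already handles this weight interplay, so the energy method can be carried out uniformly in $\rho$ and $\epsilon$ before passing to the limit.
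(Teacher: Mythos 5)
Your overall skeleton is the same as the paper's: the paper proves Theorem 2.6 by transplanting the proof of Theorem 2.3 to the $4$-particle setting, using the Section 5 machinery (the approximation \eqref{5.29}, Proposition 5.6, the compact embedding $B'^{a+1}\hookrightarrow B'^a$ and the analog of Proposition 3.5), and your identification of the limit equation \eqref{2.14.1}, the mapping property $\partial_\rho\widetilde{H}(t;\rho):B'^{a+1}\to B'^a$, and the weight comparison $<x^{(1)}-x^{(2)}>^{2(M_0+1)}\leq C\bigl(<x^{(1)}>^{2(M_k+1)}+<x^{(2)}>^{2(M_k+1)}\bigr)$ are all consistent with how the paper handles these points (cf.\ \eqref{5.28}).

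There is, however, one genuine gap in your convergence step. You assert that continuity of the symbols in $\rho$ plus the uniform bounds \eqref{2.15.1}--\eqref{2.17.1}, \eqref{2.10.1}--\eqref{2.12.1}, \eqref{2.22.1}--\eqref{2.23.1} implies $h^{-1}\bigl\{\widetilde{H}(t;\rho+h)-\widetilde{H}(t;\rho)\bigr\}\to\partial_\rho\widetilde{H}(t;\rho)$ in the operator norm of $\mathcal{L}(B'^{a+1},B'^a)$. This does not follow: a Calder\'on--Vaillancourt-type operator-norm bound would require the symbol differences $\int_0^1\bigl\{\partial_\rho\widetilde{h}^{(\alpha)}_{\ (\beta)}(t,z,\zeta;\rho+\theta h)-\partial_\rho\widetilde{h}^{(\alpha)}_{\ (\beta)}(t,z,\zeta;\rho)\bigr\}d\theta$, divided by the weight, to tend to $0$ \emph{uniformly} over the unbounded set $[0,T]\times\bR^{8d}$, and pointwise continuity on a non-compact domain gives no such uniformity. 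The paper avoids this entirely: it first takes $u_0\in B'^{a+2}$, shows the family of difference quotients $w_\tau$ is bounded in $\mathcal{E}^0_t([0,T];B'^{a+1})$ and equicontinuous in $\mathcal{E}^0_t([0,T];B'^a)$, extracts a convergent subsequence by Ascoli--Arzel\`a using the compact embedding, identifies the limit as the solution of \eqref{2.14.1} by passing to the limit in the source term only in the \emph{strong} sense (pointwise convergent symbols with uniform bounds applied to a fixed function, as in Lemma 2.2 of the 1995 reference), and then uses uniqueness to get convergence of the whole family; the case $u_0\in B'^{a+1}$ is recovered afterwards by approximating $u_0$ from $B'^{a+2}$ and using the uniform estimates \eqref{4.22}--\eqref{4.23}. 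Your proposal omits this two-stage regularity bootstrap, and without it (or without the unjustified operator-norm convergence) the perturbation estimate for $w_h-w$ does not close, since the term $\bigl[\widetilde{H}(\rho+h)-\widetilde{H}(\rho)\bigr]w$ needs $w$ in $B'^{a+1}$, one derivative more than you have when $u_0\in B'^{a+1}$.
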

 %
 %%%%%%%%%%%%%%%%%%%%%%%%%
 %%%%%%%%%%%%%%%%%%%%%%%%%%%%%%%%
 \section{Preliminaries}
 Let $h(t,x,\xi)$ be the function defined by \eqref{2.9.1}.
 \begin{lem}
 Assume \eqref{1.5.1} and \eqref{2.5.1}.  Then there exist constants $C_0^* > 0$ and $C_1^*\geq 0$ such that 
 \begin{equation} \label{3.1}
 C_0^*(<\xi>^2 +  <x>^{2(M+1)}) - C_1^* \leq h(t,x,\xi)  \leq  C_0^{*-1}(<\xi>^2 +  <x>^{2(M+1)})
 \end{equation}
 in $\domains$.
 \end{lem}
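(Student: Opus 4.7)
The plan is to prove the two inequalities separately by elementary pointwise estimates, using the strict decay gap $\delta > 0$ in \eqref{2.5.1} to absorb the cross term that appears when expanding $|\xi - A(t,x)|^2$.

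For the upper bound I would use $|\xi - A(t,x)|^2 \leq 2|\xi|^2 + 2|A(t,x)|^2$, then apply \eqref{2.5.1} together with the trivial inequality $<x>^{2(M+1-\delta)} \leq <x>^{2(M+1)}$ to handle the vector potential square, and combine with the upper estimate in \eqref{1.5.1}. This gives $h(t,x,\xi) \leq C(<\xi>^2 + <x>^{2(M+1)})$ for a suitable $C > 0$.

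For the lower bound I would start from the expansion $|\xi - A|^2 = |\xi|^2 - 2\xi \cdot A + |A|^2$ and apply Young's inequality with parameter $1/2$ to get $|\xi - A|^2 \geq \frac{1}{2}|\xi|^2 - |A|^2$. Together with the lower estimate in \eqref{1.5.1} this yields
\[
h(t,x,\xi) \;\geq\; \frac{1}{4m}|\xi|^2 + C_0<x>^{2(M+1)} - \frac{1}{2m}|A(t,x)|^2 - C_1.
\]
By \eqref{2.5.1}, $|A(t,x)|^2 \leq C' <x>^{-2\delta}\,<x>^{2(M+1)}$, and since $<x>^{-2\delta}\to 0$ as $|x|\to\infty$ there is $R_0 > 0$ such that $\frac{1}{2m}|A(t,x)|^2 \leq \frac{C_0}{2}<x>^{2(M+1)}$ for $|x| \geq R_0$. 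On the compact set $|x| \leq R_0$ both $|A|^2$ and $<x>^{2(M+1)}$ are uniformly bounded, so the defect can be absorbed into a single additive constant. Writing $|\xi|^2 = <\xi>^2 - 1$ and taking $C_0^* := \min\bigl(\tfrac{1}{4m},\tfrac{C_0}{2}\bigr)$ (shrunk further if needed to also serve as the reciprocal of the upper bound constant) and $C_1^*$ large enough finishes the inequality in the form stated.

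The only subtle point, which I would flag as the key use of the hypotheses rather than a genuine obstacle, is that \emph{strict} positivity of $\delta$ in \eqref{2.5.1} is essential: if $\delta = 0$ were allowed, the term $\tfrac{1}{2m}|A|^2$ could be comparable to $C_0<x>^{2(M+1)}$ for large $|x|$ and the lower bound of the required confining type would not close.
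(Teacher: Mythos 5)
Your proposal is correct, and your upper bound is identical to the paper's. For the lower bound, however, you take a genuinely different route. The paper keeps the cross term: it writes $|\xi - A|^2 \geq |\xi|^2 - 2|A|\,|\xi|$ and then applies Young's inequality with carefully tuned conjugate exponents $p,q$ chosen so that $p(M+1-\delta) = 2(M+1)\delta_1$ and $q = 2\delta_2$ with $0 < \delta_1, \delta_2 < 1$; both resulting terms $<x>^{2(M+1)\delta_1}$ and $<\xi>^{2\delta_2}$ are then of strictly lower order than $<x>^{2(M+1)}$ and $<\xi>^2$ respectively and get absorbed. You instead avoid the cross term entirely via $|\xi - A|^2 \geq \tfrac{1}{2}|\xi|^2 - |A|^2$ and absorb $|A|^2 \leq C'<x>^{-2\delta}<x>^{2(M+1)}$ into half of the confining term for large $|x|$ and into an additive constant on the compact remainder. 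Your version is more elementary and shorter; the price is the fixed loss of a factor $\tfrac12$ in front of $|\xi|^2$, which is harmless here since the lemma only asserts existence of some $C_0^* > 0$. Both arguments hinge on $\delta > 0$ exactly as you flag: in the paper's version $\delta = 0$ would force $\delta_1 = 1$ and the absorbed term would no longer be of lower order, which is the same failure mode you identify for $|A|^2$. No gap.
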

 \begin{proof}
 From \eqref{2.9.1} we have $h(t,x,\xi) \leq (|\xi|^2 + |A(t,x)|^2)/m + V(t,x)$ and hence by \eqref{1.5.1} and \eqref{2.5.1}
 \begin{equation*}
 h(t,x,\xi) \leq C(<\xi>^2 + <x>^{2(M+1)})
 \end{equation*}
 in $\domains$ with a constant $C \geq 0$. \par
   We may assume $0 < \delta \leq M+1$ in \eqref{2.5.1}.  Take $p > 1$ and $q > 1$ so that 
 \begin{equation*}
\frac{1}{p} = \frac{1}{2}\left(1 - \frac{1}{2}\cdot\frac{\delta}{M+1}\right),\quad \frac{1}{p} + \frac{1}{q} = 1.
 \end{equation*}
Then we have
 \begin{align*}
& p(M+1 - \delta) = 2(M+1)\cdot\frac{1-\frac{\delta}{M+1}}{1-\frac{1}{2}\cdot\frac{\delta}{M+1}} \equiv 2(M+1)\delta_1,
\notag \\
&q = \frac{2}{1+\frac{1}{2}\cdot\frac{\delta}{M+1}} \equiv 2\delta_2
 \end{align*}
 with $0 < \delta_j < 1\ (j =  1,2).$  Hence, Young's inequality and \eqref{2.5.1} show
 \begin{align*}
&  |A(t,x)|\cdot|\xi| \leq \frac{1}{p}|A|^p + \frac{1}{q}|\xi|^q \leq \frac{1}{p}<x>^{p(M+1-\delta)} + \frac{1}{q}|\xi|^q
\notag \\
&= \frac{1}{p}<x>^{2(M+1)\delta_1} + \frac{1}{q}|\xi|^{2\delta_2}.
\end{align*}
 Applying this, \eqref{1.5.1} and \eqref{2.5.1} to \eqref{2.9.1}, we have
 \begin{align*} 
  h(t,x,\xi)  & \geq  \frac{1}{2m}\left(|\xi|^2  - 2|A| \cdot |\xi|\right) + V \\
   & \geq C_0(<\xi>^2 - <x>^{2(M+1)\delta_1} -<\xi>^{2\delta_2} + <x>^{2(M+1)}) - C_1
 \end{align*}
 with constants $C_0 > 0$ and $C_1 \geq 0$.  Therefore, we obtain \eqref{3.1}.
 \end{proof}
 	We fix $C_0^*$ and $C_1^*$ in Lemma 3.1 hereafter.  We set 
	\begin{equation} \label{3.2}
	h_s(t,x,\xi) = h(t,x,\xi) + \frac{i}{2m}\nabla\cdot A(t,x),
	\end{equation}
	where $\nabla\cdot A(t,x) = \sum_{j=1}^d\partial_{x_j}A_j(t,x).$
	Since  the real part $\rittaire h_s(t,x,\xi)$ of $h_s(t,x,\xi)$ is equal to $h(t,x,\xi)$, we can determine
	\begin{equation} \label{3.3}
	p_{\mu}(t,x,\xi) := \frac{1}{\mu + h_s(t,x,\xi)}
	\end{equation}
	for $\mu \geq C_1^*$ under the assumptions of Lemma 3.1.  We  denote by $H_s(t,X,D_x)f$ the pseudo-differential operator 
	\begin{equation*} 
	\int e^{ix\cdot\xi}h_s(t,x,\xi)\hdbar\xi\int e^{-iy\cdot\xi}f(y)dy
	\end{equation*}
	for $f \in \Sspace$ with the symbol $h_s(t,x,\xi)$.  As is well known (cf. Theorem 2.5 in Chapter 2 of \cite{Kumano-go}), $H_s(t,X,D_x) = H(t)$ holds, where $H(t)$ is the operator defined by \eqref{1.1.1}.
	\begin{lem}
	Assume \eqref{1.5.1}, \eqref{2.3.1} and \eqref{2.5.1} - \eqref{2.6.1}.  Then we have
	\begin{equation} \label{3.4}
	\bigl[\mu + H(t)\bigr]P_{\mu}\txdx = I + R_{\mu}\txdx,
	\end{equation}
	\begin{equation} \label{3.5}
	\left|r^{(\alpha)}_{\mu\ (\beta)}(t,x,\xi)\right| \leq  C_{\alpha\beta}\left(\mu - C_1^*\right)^{-1/2}
	\end{equation}
	in $\domains$ for $\mu \geq C_0^*/2 + C_1^*$ with constants $C_{\alpha\beta}$ independent of $\mu$. 
	\end{lem}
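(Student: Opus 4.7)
The plan is to take $P_\mu$ to be the pseudo-differential operator with single symbol $p_\mu(t,x,\xi) = (\mu + h_s(t,x,\xi))^{-1}$ and then obtain \eqref{3.4} directly from the Kumano-go composition formula, exploiting the fact that $h_s$ is a polynomial of degree two in $\xi$ so that the asymptotic expansion terminates. The estimate \eqref{3.5} then reduces to elementary weight bookkeeping between $p_\mu$ and the derivatives of $h_s$.

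First I would verify that $p_\mu$ is well-defined for $\mu \geq C_0^*/2 + C_1^*$. Lemma 3.1 gives
\begin{equation*}
\mathrm{Re}(\mu + h_s) = \mu + h(t,x,\xi) \geq (\mu - C_1^*) + C_0^*(\langle\xi\rangle^2 + \langle x\rangle^{2(M+1)}),
\end{equation*}
so $\mu + h_s$ is nonzero and an AM--GM inequality yields the key bound
\begin{equation*}
|p_\mu(t,x,\xi)| \leq C(\mu - C_1^*)^{-1/2}(\langle\xi\rangle^2 + \langle x\rangle^{2(M+1)})^{-1/2}.
\end{equation*}

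Second, since $h_s$ is of degree two in $\xi$, the Kohn--Nirenberg composition formula applied to $H_s(t,X,D_x) = H(t)$ and $P_\mu(t,X,D_x)$ truncates exactly at $|\alpha| = 2$:
\begin{equation*}
\sigma\bigl((\mu + H(t))P_\mu\bigr) = (\mu + h_s)p_\mu + \frac{1}{i}\sum_{j=1}^{d} (\partial_{\xi_j}h_s)(\partial_{x_j}p_\mu) - \frac{1}{2m}\sum_{j=1}^{d} \partial^2_{x_j}p_\mu.
\end{equation*}
The principal term equals $1$, so \eqref{3.4} holds with $r_\mu$ given by the remaining two finite sums. Using \eqref{2.3.1} and \eqref{2.5.1}--\eqref{2.6.1}, the bounds $|\partial_\xi h_s| \leq C(\langle\xi\rangle^2 + \langle x\rangle^{2(M+1)})^{1/2}$, $|\partial_{\xi_j}\partial_{\xi_k}h_s| \leq C$, and $|\partial_x^\alpha h_s| \leq C_\alpha(\langle\xi\rangle^2 + \langle x\rangle^{2(M+1)})$ for $|\alpha| \geq 1$ follow from $h_s = |\xi - A|^2/(2m) + V + (i/2m)\nabla\cdot A$. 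Combined with $\partial_{x_j}p_\mu = -p_\mu^2 \,\partial_{x_j}h_s$ and the bound on $|p_\mu|$, each term in $r_\mu$ is dominated by $C|p_\mu|^2(\langle\xi\rangle^2 + \langle x\rangle^{2(M+1)})^{3/2}$, which by AM--GM collapses to $C(\mu - C_1^*)^{-1/2}$.

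Each further derivative $\partial_\xi^\alpha \partial_x^\beta r_\mu$ produces, by Leibniz and the chain rule, a finite sum of monomials of the form $p_\mu^{\ell}\prod_i \partial^{\gamma_i}_{(x,\xi)}h_s$ in which every $h_s$-derivative is matched by one factor of $|p_\mu|$; the remaining ``free'' factor of $|p_\mu|$ converts, via the same AM--GM split, into the desired $(\mu - C_1^*)^{-1/2}$, uniformly in $\mu$. The main obstacle lies in this last step: one has to track the exponent $\ell$ of $p_\mu$ against the total weight carried by the product of $h_s$-derivatives in each monomial, and verify inductively on $|\alpha + \beta|$ that the weights always balance so that exactly one free $|p_\mu|$ remains. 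The argument is routine once the weight estimates on $h_s$ and $p_\mu$ are in place, but the combinatorics of the chain rule makes this the place where care is required.
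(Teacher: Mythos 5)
Your argument is correct, but it reaches \eqref{3.4}--\eqref{3.5} by a genuinely different route from the paper. The paper applies the general Kumano-go composition theorem and represents $r_\mu$ as a first-order oscillatory-integral remainder (its formula \eqref{3.7}); it must then regularize that integral with the factors $\langle y\rangle^{-2l_0}\langle D_\eta\rangle^{2l_0}\langle\eta\rangle^{-2l_1}\langle D_y\rangle^{2l_1}$, choose $l_0,l_1$ against the dimension and the shifted weight $\langle x+y\rangle^{2(M+1)}$ via a Peetre-type inequality, and only afterwards invoke essentially the same maximization $\Theta^{1/2}/(\mu-C_1^*+C_0^*\Theta)\leq C(\mu-C_1^*)^{-1/2}$ that your AM--GM step provides. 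You instead exploit the fact that $\mu+H(t)$ is a differential operator of order two in the left quantization, so its composition with $P_\mu(t,X,D_x)$ is an exact finite Leibniz sum terminating at $|\alpha|=2$; this removes the oscillatory integral entirely and reduces the lemma to pointwise weight bookkeeping, with your bounds on $\partial h_s$ and $p_\mu$ coinciding with the paper's \eqref{3.6} and \eqref{3.8}--\eqref{3.9}. Your route is more elementary and self-contained for this particular lemma; the paper's route is the one that carries over to Lemma 3.4 and to \S 5, where both factors in the relevant commutators are genuine pseudo-differential operators and no exact truncation is available, which is presumably why the authors argue uniformly via oscillatory integrals. The step you flag as delicate does close: each monomial $p_\mu^{1+k}\prod_{i}\partial^{\gamma_i}h_s$ produced by the chain rule pairs every $h_s$-derivative (each of weight at most $\Theta$, and at most $\Theta^{1/2}$ when a $\xi$-derivative is present) with one factor $|p_\mu|\leq C\Theta^{-1}$, leaving exactly one free factor of $p_\mu$ to supply $(\mu-C_1^*)^{-1/2}\Theta^{-1/2}$, which also absorbs the weight $\Theta^{1/2}$ of the prefactor $\partial^{\gamma}\partial_{\xi_j}h_s$ in the first-order term.
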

	\begin{proof}
	Let $\mu \geq C_1^*$.
	By Lemma 3.1 and \eqref{3.2} we have
 \begin{equation} \label{3.6}
  C_0^*(<\xi>^2  + <x>^{2(M+1)}) + \mu - C_1^* \leq   \mu + \rittaire h_s(t,x,\xi).
   \end{equation}
     Since $H(t) = H_s\txdx$, from (2.13) in \cite{Ichinose 1995} we have
   \begin{align} \label{3.7}
  & r_{\mu}(t,x,\xi) = \sum_{|\alpha| = 1} \int_0^1 d\theta\  \text{Os}-\iint e^{-iy\cdot\eta}h_s^{(\alpha)}(t,x,\xi+\theta\eta)
  p_{\mu (\alpha)}(t,x+y,\xi)dy\hdbar\eta  \notag\\
 &  =  \sum_{|\alpha| = 1} \int_0^1 d\theta\  \text{Os}-\iint e^{-iy\cdot\eta}
 <y>^{-2l_0}  <D_{\eta}>^{2l_0}<\eta>^{-2l_1} <D_y>^{2l_1} \notag\\
 & \hspace{2cm} \cdot h_s^{(\alpha)}(t,x,\xi+\theta\eta)  p_{\mu (\alpha)}(t,x+y,\xi)dy\hdbar\eta
 \end{align}
  for large integers $l_0$ and $l_1$, where $<D_\eta>^2 = 1 - \sumjd \partial_{\eta_j}^2.$ \par
  	Now, using \eqref{2.3.1} and \eqref{2.5.1} - \eqref{2.6.1}, from \eqref{3.2} we have
   \begin{align*} 
  & |\partial_{x_j} h_s(t,x+y,\xi)| \leq C\left(<\xi><x+y>^{M+1} + <x + y>^{2(M+1)}\right) \notag\\
 &   \leq C'\left(<\xi>^2 + <x + y>^{2(M+1)}\right).
 \end{align*}
 In the same way we can prove
 \begin{equation} \label{3.8}
 |h^{(\alpha)}_{s\ (\beta)}(t,x+y,\xi)| \leq  C \left(<\xi>^2 + <x+y >^{2(M+1)}\right) 
 \end{equation}
 for all $\alpha$ and $|\beta|\geq 1$, and 
 \begin{equation} \label{3.9}
 |h^{(\alpha)}_{s\ (\beta)}(t,x,\xi + \theta\eta)| \leq  C \left(<\xi>+ <x >^{M+1}\right) <\eta>
 \end{equation}
 for $|\alpha|\geq 1$ and all $\beta$.  We also note
   \begin{align*} 
  & \frac{1}{<\xi>^2 + <x+y>^{2(M+1)}} \leq  \frac{1}{<\xi>^2 + <x>^{2(M+1)}/(\sqrt{2}<y>)^{2(M+1)}}\notag\\
 &   \leq \frac{(\sqrt{2}<y>)^{2(M+1)}}{<\xi>^2 + <x>^{2(M+1)}}.
 \end{align*}
  Apply \eqref{3.6} and \eqref{3.8} - \eqref{3.9} to \eqref{3.7}.  Then, 
  taking integers $l_0$ and $l_1$ so that $2l_0- 2(M+1) > d$ and $2l_1 - 1 > d$, we have 
   \begin{align} \label{3.10}
  & |r_{\mu}(t,x,\xi)| \leq C \iint
 <y>^{-2l_0} <\eta>^{-2l_1}<\eta> <y>^{2(M+1)}dy\hdbar\eta \notag\\
 & \hspace{1cm} \times \frac{\Theta^{1/2}}{\mu - C_1^* + C_0^*\Theta}	\leq C' \max_{1 \leq \theta}  \frac{\theta^{1/2}}{\mu - C_1^* + C_0^*\theta}
 \end{align}
 with constants $C$ and $C'$ independent of $\mu \geq C_1^*$, where  $\Theta = <\xi>^2 + \\
 <x>^{2(M+1)}$.
 Applying (2.9) in \cite{Ichinose 1995} with $\kappa = 1$ and $\tau = 2$ to \eqref{3.10}, we have
 \begin{equation*} 
  |r_{\mu}(t,x,\xi)| \leq C_0(\mu - C_1^*)^{-1/2} 
  \end{equation*}
 for $\mu \geq C_0^*/2 + C_1^*.$  In the same way we can prove \eqref{3.5} from \eqref{3.7} - \eqref{3.9}.
	\end{proof}
	\begin{pro}
	Under the assumptions of Lemma 3.2 there exist a constant $\mu \geq C_0^*/2 + C_1^*$ and a function $w(t,x,\xi)$ in $\domains$ satisfying 
 \begin{equation} \label{3.11}
  |w^{(\alpha)}_{\ \  (\beta)}(t,x,\xi)| \leq C_{\alpha\beta}\left(<\xi>^2 + <x>^{2(M+1)}\right)^{-1}
  \end{equation}
  for all $\alpha, \beta$ and 
 \begin{equation} \label{3.12}
 W\txdx = \bigl(\mu+ H(t)\bigr)^{-1}.
   \end{equation}
	\end{pro}
	\begin{proof} Let $\mu \geq C_0^*/2 + C_1^*$.
	From \eqref{3.3}, \eqref{3.6} and \eqref{3.8} - \eqref{3.9} we see
 \begin{equation*} 
  |p^{(\alpha)}_{\mu\  (\beta)}(t,x,\xi)| \leq C_{\alpha\beta}\left(<\xi>^2 + <x>^{2(M+1)}\right)^{-1}
  \end{equation*}
    for all $\alpha$ and $ \beta$.  Hence we can complete the proof of Proposition 3.3 as in the proof of (2.16) of \cite{Ichinose 1995} by using Lemma 3.2.
	\end{proof}
	We take a constant $\mu \geq C_0^*/2 + C_1^*$  stated in Proposition 3.3 and fix it hereafter throughout \S 3 and \S4.  Set
 \begin{equation} \label{3.13}
\lambda(t,x,\xi) := \mu + h_s(t,x,\xi).
   \end{equation}
 Then from \eqref{3.2} we have
 \begin{equation} \label{3.14}
\Lambda\txdx = \mu + H_s(t,X,D_x) = \mu + H(t).
   \end{equation}
We take a $\chi \in \Sspace$ such that $\chi(0) = 1$ and set 
 \begin{equation} \label{3.15}
\chi_{\epsilon}(t,x,\xi) := \chi\bigl(\epsilon(\mu + h(t,x,\xi)\bigr)
   \end{equation}
 for constants $0 < \epsilon \leq 1$.  We note that $h(t,x,\xi)$ defined by \eqref{2.9.1} is real-valued. \par
 	The following is crucial in the present paper.
	\begin{lem}
	Under the assumptions of Lemma 3.2 there exist functions $\omega_{\epsilon}(t,x,\xi)\ (\zeo)$ in $\domains$ satisfying
 \begin{equation} \label{3.16}
 \sup_{0 < \epsilon \leq 1}\sup_{t,x,\xi} |\omega^{(\alpha)}_{\epsilon\  (\beta)}(t,x,\xi)| \leq C_{\alpha\beta} < \infty
  \end{equation}
  for all $\alpha, \beta$ and 
 \begin{equation} \label{3.17}
\Omega_{\epsilon}\txdx = \Bigl[X_{\epsilon}\txdx, \Lambda\txdx\Bigr],
   \end{equation}
where $[\cdot,\cdot]$ denotes the commutator of operators.
	\end{lem}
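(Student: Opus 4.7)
The plan is to compute the symbol $\omega_\epsilon$ of the commutator via the composition formula for pseudo-differential operators, to write it as a Poisson bracket plus an oscillatory-integral remainder, and then to exploit a structural cancellation that forces uniform bounds in $\epsilon$. Applying the Kohn--Nirenberg composition formula (cf.\ Theorem 2.5 of Chapter 2 in \cite{Kumano-go}) to $\chi_\epsilon \# \lambda$ and $\lambda \# \chi_\epsilon$, and expanding $\chi_\epsilon(t,x,\xi+\eta)$ in $\eta$ and $\lambda(t,x+y,\xi)$ in $y$ via a first-order Taylor formula with integral remainder (as in the derivation of \eqref{3.7}), I would obtain
\[
\omega_\epsilon(t,x,\xi) = \frac{1}{i}\{\chi_\epsilon,\lambda\}(t,x,\xi) + r_\epsilon(t,x,\xi),
\]
where $r_\epsilon$ is an oscillatory integral involving second-order derivatives of $\chi_\epsilon$ and $\lambda$.

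The key algebraic observation is that $\chi_\epsilon = \chi(\epsilon(\mu+h))$ is a scalar function of the real-valued symbol $h$, hence
\[
\{\chi_\epsilon,h\} \;=\; \epsilon\chi'(\epsilon(\mu+h))\{h,h\} \;=\; 0.
\]
By \eqref{3.2} and \eqref{3.13}, $\lambda = \mu + h + (i/2m)\nabla\cdot A$, and $\nabla\cdot A$ is $\xi$-independent, so
\[
\frac{1}{i}\{\chi_\epsilon,\lambda\} \;=\; \frac{1}{2m}\sum_{j=1}^d \partial_{\xi_j}\chi_\epsilon \cdot \partial_{x_j}(\nabla\cdot A).
\]
Only the derivatives of $A$, bounded by \eqref{2.6.1}, and first $\xi$-derivatives of $\chi_\epsilon$ enter the leading term; this cancellation is precisely what permits uniform estimates in $\epsilon$.

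Uniform bounds on $\chi_\epsilon$ itself follow by differentiating the composition $\chi\circ(\epsilon(\mu+h))$ via Fa\`a di Bruno: every $\partial_x^\beta \partial_\xi^\alpha \chi_\epsilon$ is a finite sum of terms $\epsilon^k \chi^{(k)}(\epsilon(\mu+h))\, P(\partial h)$, where $P(\partial h)$ is a product of at most $|\alpha|+|\beta|$ derivatives of $h$. By \eqref{3.8}--\eqref{3.9} and Lemma 3.1, each factor of $\partial h$ is controlled by an appropriate power of $\mu+h$, and since $\chi$ is Schwartz the function $s^k\chi^{(k)}(s)$ is uniformly bounded on $\mathbb{R}$. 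Factoring $\epsilon^k(\mu+h)^k = (\epsilon(\mu+h))^k$ and invoking $\epsilon \leq 1$ for the leftover powers of $\epsilon$, one gets $\sup_{0<\epsilon\leq 1}\sup_{t,x,\xi}|\partial_x^\beta\partial_\xi^\alpha\chi_\epsilon| \leq C_{\alpha\beta}$, and combined with \eqref{2.6.1} this bounds the first-order term $(1/i)\{\chi_\epsilon,\lambda\}$ and all its $(\alpha,\beta)$-derivatives uniformly in $\epsilon$.

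For the remainder $r_\epsilon$, represented as an oscillatory integral analogous to \eqref{3.7}, I would insert the standard damping weights $\langle y\rangle^{-2l_0}\langle D_\eta\rangle^{2l_0}\langle\eta\rangle^{-2l_1}\langle D_y\rangle^{2l_1}$ with $l_0, l_1$ chosen as in the proof of Lemma 3.2, and then apply the uniform estimates on the derivatives of $\chi_\epsilon$ above together with the growth bounds \eqref{3.8} on $\lambda$. The main obstacle I expect is controlling $\partial_x^\beta \partial_\xi^\alpha r_\epsilon$ uniformly in $\epsilon$: each additional differentiation under the integral sign brings out further derivatives of $\chi_\epsilon$ and $\lambda$ which a priori grow like $\langle\xi\rangle^2 + \langle x\rangle^{2(M+1)} \sim \mu+h$, and this growth must be absorbed by the Schwartz decay of $\chi$ through the identity $(\epsilon(\mu+h))^N|\chi^{(k)}(\epsilon(\mu+h))| = O(1)$. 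Once the cancellation $\{\chi_\epsilon,h\}=0$ has eliminated the dangerous principal contribution from $\lambda$, every remaining term in $r_\epsilon$ carries enough powers of $\epsilon$ and $\chi^{(k)}(\epsilon(\mu+h))$ to be tamed, and the argument concludes as in the proof of Lemma 3.2.
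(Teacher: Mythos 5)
Your proposal follows essentially the same route as the paper's proof: expand the commutator symbol via the composition formula into a first-order term plus a second-order oscillatory remainder, observe that the $\{\chi_\epsilon,h\}$ contribution cancels so that only the $\nabla\cdot A$ part of $\lambda$ survives in the leading term, and absorb all growth through $(\epsilon(\mu+h))^{k}\chi^{(k)}(\epsilon(\mu+h)) = O(1)$. One caution: plain uniform boundedness of the derivatives of $\chi_\epsilon$ combined with the bound on $\partial_x^\alpha A$ is not by itself enough to close either term, since $\partial_x\nabla\cdot A$ grows like $\langle x\rangle^{M+1}$ and $\lambda_{(\gamma)}$ like $\langle\xi\rangle^{2}+\langle x\rangle^{2(M+1)}$; what is really needed, and what the paper isolates in its estimates (3.19)--(3.21), is that each $\xi$-derivative of $\chi_\epsilon$ gains a decay factor $(\langle\xi\rangle^{2}+\langle x\rangle^{2(M+1)})^{-1/2}$, which your Fa\`a di Bruno bookkeeping does deliver once the powers of $\mu+h$ are tracked explicitly.
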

	\begin{proof}
	Apply Theorem 3.1 in Chapter 2 of \cite{Kumano-go} to the right-hand side of \eqref{3.17}.  Then we have
   \begin{align} \label{3.18}
  & \omega_{\epsilon}(t,x,\xi) = \sum_{|\alpha| = 1} \Bigl\{\chi_{\epsilon}^{(\alpha)}(t,x,\xi)\lambda_{(\alpha)}(t,x,\xi)  
  - \lambda^{(\alpha)}(t,x,\xi)\chi_{\epsilon(\alpha)}(t,x,\xi) \Bigr\}\notag\\
 &\quad   + 2 \sum_{|\gamma| = 2}\frac{1}{\gamma\, !}\int_0^1 (1 - \theta)d\theta\  \text{Os}-\iint e^{-iy\cdot\eta}
 \Bigl\{\chi_{\epsilon}^{(\gamma)}(t,x,\xi+\theta\eta)\lambda_{(\gamma)}(t,x+y,\xi)  \notag\\
 & \qquad \quad 
  - \lambda^{(\gamma)}(t,x,\xi+\theta\eta)\chi_{\epsilon(\gamma)}(t,x+y,\xi) \Bigr\}dy\hdbar\eta 
  \equiv I_{1\epsilon} + I_{2\epsilon}.
 \end{align}
 By \eqref{3.2}, \eqref{3.13} and \eqref{3.15} we can write 
   \begin{align} \label{3.19}
  & I_{1\epsilon}(t,x,\xi) = \epsilon\chi\,'(\epsilon(\mu +h))\sum_{|\alpha| = 1}\Bigl\{h^{(\alpha)}h_{s(\alpha)}
  - h^{(\alpha)}_sh_{(\alpha)} \Bigr\}\notag\\
 & =  \epsilon\chi\,'\bigl(\epsilon(\mu +h(t,x,\xi))\bigr)\sum_{|\alpha| = 1}\frac{i}{2m^2}(\xi_{\alpha} - A_{\alpha}(t,x))(-i\partial_x)^{\alpha}\nabla\cdot A(t,x).
 \end{align}
 Hence, using $\epsilon\chi\,'(\epsilon(\mu +h))
 = (\mu +h)^{-1}\bigl\{\epsilon(\mu +h)\chi \,'(\epsilon(\mu +h))\bigr\}$ and Lemma 3.1, from \eqref{2.5.1} - \eqref{2.6.1} we can prove
 $
\sup_{\zeo}\sup_{t,x,\xi}|I_{1\epsilon}| < \infty. 
$
 In the same way from \eqref{3.19} we can prove
 \begin{equation} \label{3.20}
 \sup_{0 < \epsilon \leq 1}\sup_{t,x,\xi} |I^{(\alpha)}_{1\epsilon\  (\beta)}(t,x,\xi)| \leq C_{\alpha\beta} < \infty
  \end{equation}
  for all $\alpha$ and $\beta$.
  \par
	Next we will consider $I_{2\epsilon}$. Let $|\gamma| = 2$.  Since from \eqref{3.15} we have 
 \begin{equation*} 
\partial_{\xi_j}\chi_{\epsilon}(t,x,\xi) = \frac{1}{m}\epsilon \chi \,'(\epsilon(\mu +h))(\xi_j - A_j)
  \end{equation*}
  and
 \begin{equation*} 
\epsilon(\xi_j - A_j)\partial_{\xi_k}\chi \,'(\epsilon(\mu +h)) =  \frac{1}{m}\epsilon^2(\xi_j - A_j)(\xi_k - A_k)\chi \,''(\epsilon(\mu +h)),
  \end{equation*}
  as in the proof of \eqref{3.20} we can easily prove
 \begin{equation} \label{3.21}
\sup_{\zeo}  |\chi^{(\alpha+\gamma)}_{\epsilon\ \ (\beta)}(t,x,\xi)| \leq C_{\alpha\beta}\left(<\xi>^2 + <x>^{2(M+1)}\right)^{-1}
  \end{equation}
  for all $\alpha$ and $\beta$.  In the same way we can also prove
 \begin{equation} \label{3.22}
\sup_{\zeo}  |\chi^{(\alpha)}_{\epsilon\ (\beta+\gamma)}(t,x,\xi)| \leq C_{\alpha\beta} < \infty
  \end{equation}
   for all $\alpha$ and $\beta$.  We also note from \eqref{3.13} that each of $\lambda^{(\gamma)} = h_s^{(\gamma)}$ for $|\gamma| = 2$ is equal to $0$ or $1/m$.
   Hence, applying \eqref{3.8} and \eqref{3.21} - \eqref{3.22} to $I_{2\epsilon}$ in \eqref{3.18}, as in the proof of \eqref{3.10} we have 
 $
\sup_{\zeo}\sup_{t,x,\xi}|I_{2\epsilon}| < \infty. 
$
In the same way we can prove 
 \begin{equation*} 
 \sup_{0 < \epsilon \leq 1}\sup_{t,x,\xi} |I^{(\alpha)}_{2\epsilon\  (\beta)}(t,x,\xi)| \leq C_{\alpha\beta} < \infty
  \end{equation*}
   for all $\alpha$ and $\beta$, which 
   completes the proof of Lemma 3.4 together with \eqref{3.20}.
	\end{proof}
	Let 
 \begin{equation} \label{3.23}
\lambda_{M}(x,\xi) := \mu\,' + \frac{1}{2m}|\xi|^2 + <x>^{2(M+1)},
  \end{equation}
  which is equal to $\lambda(t,x,\xi)$ defined by \eqref{3.13} with $V = \, <x>^{2(M+1)}$ and $A = 0$.  Let $B^a_M$ 
be the weighted Sobolev spaces introduced in \S 1.
	\begin{pro}
	(1)  There exist a constant $\mu\,' \geq 0$  and a function $w_M(x,\xi)$ in $\domains$ satisfying 
	\eqref{3.11}  
  for all $\alpha, \beta$ and 
 \begin{equation} \label{3.24}
 W_M(X,D_x) = \Lambda_M(X,D_x)^{-1}.
   \end{equation}
(2) We take a $\mu\,' \geq 0$ satisfying (1).  Let $f$ be in the dual space $\mathcal{S}\,'(\bR^d)$ of $\Sspace$.  Then, $B^a_M \ni f \ (\adots)$ is equivalent to $ (\Lambda_M)^a f \in L^2.$
	\end{pro}
	\begin{proof}
	The assertion (1) follows from Proposition 3.3.  The assertion (2) follows from Lemma 2.4  of \cite{Ichinose 1995} with $s = a, a = 2(M+1)$ and $b=2$.
	\end{proof}
	We take a constant $\mu\,' \geq 0$ stated in Proposition 3.5 and fix it hereafter throughout \S 3 and \S 4.  We can easily see from \eqref{3.1}, \eqref{3.8} and \eqref{3.9} that under the assumptions of Lemma 3.2  we have
 \begin{equation} \label{3.25}
|h^{(\alpha)}_{s\  (\beta)}(t,x,\xi)| \leq C_{\alpha\beta}\weight
  \end{equation}
  in $\domains$ for all $\alpha$ and $\beta$.
  %
	%%%%%%%%%%%%%%%%%%%%%%%%%%
	%%%%%%%%%%%%%%%%%%%%%%%%%%%%
	\section{Proofs of Theorems 2.1 - 2.3}
	Let $\lambda(t,x,\xi)$ and $\chi_{\epsilon}\txxi\ (\zeo)$ be the functions defined by \eqref{3.13} and \eqref{3.15}, respectively.  We define an approximation of $\widetilde{H}(t)$ by the product of operators
 \begin{equation} \label{4.1}
\widetilde{H}_{\epsilon}(t) := X_{\epsilon}\txdx^{\dagger}\widetilde{H}(t)X_{\epsilon}\txdx,
   \end{equation}
 where $X_{\epsilon}\txdx^{\dagger}$ denotes the formally adjoint operator of $X_{\epsilon}\txdx$.
 \begin{lem}
 Under Assumption 2.2 there exist functions $\qepsilon\txxi\ (\zeo)$ satisfying
 \begin{equation} \label{4.2}
 \sup_{0 < \epsilon \leq 1}\sup_{t,x,\xi} |q^{(\alpha)}_{\epsilon\  (\beta)}(t,x,\xi)| \leq C_{\alpha\beta} < \infty
  \end{equation}
  for all $\alpha, \beta$ and
 \begin{align} \label{4.3}
Q_{\epsilon}\txdx = &\Bigl[\Lambda\txdx, \widetilde{H}_{\epsilon}(t)\Bigr]\Lambda\txdx^{-1} \notag \\
&  + i\frac{\partial\Lambda}{\partial t}\txdx\Lambda\txdx^{-1}.
   \end{align}
 \end{lem}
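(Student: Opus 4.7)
The plan is to compute the symbol of $Q_\epsilon\txdx$ by expanding the right-hand side of (4.3) through the symbol calculus (Theorem 3.1 in Ch.~2 of \cite{Kumano-go}), showing that the cutoffs $\chi_\epsilon$, together with the cancellation $[\Lambda\txdx,H(t)]=0$, keep the growth under control so that right-composition with $\Lambda\txdx^{-1}$ produces a symbol bounded in every $S^0$-seminorm uniformly in $\epsilon\in(0,1]$.

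First I would verify that $\widetilde{H}_\epsilon(t)=X_\epsilon\txdx^{\dagger}\widetilde{H}(t)X_\epsilon\txdx$ is a pseudo-differential operator whose symbol $\widetilde{h}_\epsilon\txxi$ satisfies $\sup_{0<\epsilon\le 1}|\partial_\xi^\alpha\partial_x^\beta \widetilde{h}_\epsilon\txxi|\le C_{\alpha\beta}$ for all $\alpha,\beta$. By Assumption 2.2 and (3.25), the full symbol of $\widetilde{H}(t)=H(t)-iK(t)$ obeys the bound $C_{\alpha\beta}\weight$. Each derivative of $\chi_\epsilon=\chi(\epsilon(\mu+h))$ produces a factor $\epsilon^{m}\chi^{(m)}(\epsilon(\mu+h))$, which can be rewritten as $(\mu+h)^{-m}\bigl\{(\epsilon(\mu+h))^{m}\chi^{(m)}(\epsilon(\mu+h))\bigr\}$ with the braced term bounded on $\mathbb{R}$ uniformly in $\epsilon$, exactly as in the proof of Lemma 3.4. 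Two copies of $\chi_\epsilon$ are then enough to absorb the quadratic-type growth of $\widetilde{h}$, and careful application of the composition formula shows $\widetilde{h}_\epsilon$ is bounded together with all its derivatives uniformly in $\epsilon$.

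Next I would expand the commutator at the operator level by the Leibniz rule,
\begin{equation*}
[\Lambda,X_\epsilon^{\dagger}\widetilde{H}X_\epsilon]=[\Lambda,X_\epsilon^{\dagger}]\,\widetilde{H}\,X_\epsilon+X_\epsilon^{\dagger}[\Lambda,\widetilde{H}]\,X_\epsilon+X_\epsilon^{\dagger}\widetilde{H}\,[\Lambda,X_\epsilon].
\end{equation*}
Since $\Lambda\txdx=\mu+H_s\txdx=\mu+H(t)$ by (3.14), the middle commutator equals $-i[\Lambda,K(t)]$; by the composition formula together with (2.2.1) and (3.25), its symbol and all its derivatives are bounded by $C_{\alpha\beta}\weight$. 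Lemma 3.4 furnishes $[\Lambda,X_\epsilon]=\Omega_\epsilon\txdx$ with uniformly bounded symbol, and $[\Lambda,X_\epsilon^{\dagger}]$ is handled identically because $\Lambda$ is formally self-adjoint. Sandwiching the inner commutator by $X_\epsilon^{\dagger}$ and $X_\epsilon$, and composing the outer terms with $\widetilde{H}$, keep the growth of the resulting symbols at most $C_{\alpha\beta}\weight$, uniformly in $\epsilon$. Multiplying on the right by $\Lambda^{-1}$, whose symbol satisfies (3.11) by Proposition 3.3, and invoking the composition formula once more converts each piece into a pseudo-differential operator whose symbol and all of whose derivatives are bounded uniformly in $\epsilon$.

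Finally, the term $i(\partial_t\Lambda)\Lambda^{-1}$ is handled in the same spirit: by (2.4.1), (2.6.1), (2.7.1), the symbol $\partial_t h_s\txxi$ is bounded, together with all its derivatives, by $C_{\alpha\beta}\weight$, so its composition with $\Lambda^{-1}$ contributes a bounded symbol. Assembling the four pieces defines $q_\epsilon\txxi$ and yields (4.2). The main obstacle is the bookkeeping in the first step: verifying that the iterated composition $X_\epsilon^{\dagger}\widetilde{H}X_\epsilon$ produces a symbol uniformly bounded in $\epsilon$ together with all derivatives requires carefully tracking, in the oscillatory integrals arising from the composition formula, how many factors $(\mu+h)^{-1}$ each differentiation of $\chi_\epsilon$ generates and matching them against the growth of $\widetilde{h}$, exactly as in the proof of Lemma 3.4.
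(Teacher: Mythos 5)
Your core argument --- the three-term expansion of $[\Lambda(t),X_{\epsilon}(t)^{\dagger}\widetilde{H}(t)X_{\epsilon}(t)]$, the cancellation $[\Lambda(t),H(t)]=0$ together with $\Lambda(t)^{\dagger}=\Lambda(t)$, Lemma 3.4 for $[\Lambda(t),X_{\epsilon}(t)]$, and right-composition with $\Lambda(t)^{-1}$ via Proposition 3.3 --- is exactly the paper's proof. But your first step contains a genuine error, and you single it out as ``the main obstacle'': the claim that the full symbol of $\widetilde{H}_{\epsilon}(t)=X_{\epsilon}(t)^{\dagger}\widetilde{H}(t)X_{\epsilon}(t)$ satisfies $\sup_{0<\epsilon\le 1}|\partial_{\xi}^{\alpha}\partial_{x}^{\beta}\widetilde{h}_{\epsilon}|\le C_{\alpha\beta}$ is false. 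The mechanism you invoke --- rewriting $\epsilon^{m}\chi^{(m)}(\epsilon(\mu+h))$ as $(\mu+h)^{-m}$ times a uniformly bounded function --- produces decay only when at least one derivative actually falls on $\chi_{\epsilon}$, i.e.\ when $m\ge 1$. The undifferentiated factor $\chi(\epsilon(\mu+h))$ carries no uniform decay whatsoever: since $\chi(0)=1$, there is $u_0>0$ with $|\chi(u)|\ge 1/2$ for $|u|\le u_0$, so $\sup_{x,\xi}|\chi(\epsilon(\mu+h))|^{2}(\mu+h)\ge u_0/(4\epsilon)$, and hence ``two copies of $\chi_{\epsilon}$'' cannot absorb the growth $\weight$ of $\widetilde{h}$ uniformly in $\epsilon$. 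The paper itself is consistent with this: in the proof of Proposition 4.3 the symbol estimate for $\widetilde{H}_{\epsilon}(t)$ is stated with constants that are explicitly allowed to depend on $0<\epsilon\le 1$.

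Fortunately the false step is not needed, and the rest of your argument does not use it. Uniformity in $\epsilon$ enters only through $[\Lambda(t),X_{\epsilon}(t)]=\Omega_{\epsilon}\txdx$ of Lemma 3.4 --- where every term does carry a derivative of $\chi_{\epsilon}$ and hence the compensating factor $(\mu+h)^{-1}$ --- and through the single final composition with $\Lambda(t)^{-1}$, whose symbol is $O\big(\weight^{-1}\big)$ by \eqref{3.11}. In the three surviving pieces $-[\Lambda,X_{\epsilon}]^{\dagger}\widetilde{H}X_{\epsilon}\Lambda^{-1}$, $X_{\epsilon}^{\dagger}\widetilde{H}[\Lambda,X_{\epsilon}]\Lambda^{-1}$ and $-iX_{\epsilon}^{\dagger}[\Lambda,K]X_{\epsilon}\Lambda^{-1}$, the cutoffs $X_{\epsilon}$ and $X_{\epsilon}^{\dagger}$ need only be bounded in every seminorm uniformly in $\epsilon$ (which they are), the $O\big(\weight\big)$ growth of the symbols of $\widetilde{H}(t)$ and $[\Lambda,K]$ coming from \eqref{3.25} and \eqref{4.4} being cancelled by $\Lambda^{-1}$ alone. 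Deleting your first paragraph of verification and keeping the second and third makes your proof coincide with the paper's.
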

 \begin{proof}
 We first note 
 \begin{align*}
 & \Bigl[\Lambda\txdx, \widetilde{H}_{\epsilon}(t)\Bigr] = \Bigl[\Lambda(t), X_{\epsilon}(t)^{\dagger}\Bigr]\widetilde{H}(t)X_{\epsilon}(t) + X_{\epsilon}(t)^{\dagger}\Bigl[\Lambda(t), H(t)\Bigr]X_{\epsilon}(t)\\
 &\quad  - i X_{\epsilon}(t)^{\dagger}\Bigl[\Lambda(t), K(t)\Bigr]X_{\epsilon}(t) + 
 X_{\epsilon}(t)^{\dagger}\widetilde{H}(t)\Bigl[\Lambda(t), X_{\epsilon}(t)\Bigr].
 \end{align*}
 Since $\Lambda(t) = \mu + H(t)$ from \eqref{3.14}, we have 
 $[\Lambda(t),H(t)] = 0$ and  $\Lambda(t)^{\dagger} = \Lambda(t)$. Hence
 \begin{align*}
  \Bigl[\Lambda(t), \widetilde{H}_{\epsilon}(t)\Bigr] & = -\Bigl[\Lambda(t), X_{\epsilon}(t)\Bigr]^{\dagger}\widetilde{H}(t)X_{\epsilon}(t) +  X_{\epsilon}(t)^{\dagger}\widetilde{H}(t)\Bigl[\Lambda(t), X_{\epsilon}(t)\Bigr] \notag \\
  & - i X_{\epsilon}(t)^{\dagger}\Bigl[\Lambda(t), K(t)\Bigr]X_{\epsilon}(t).
   \end{align*}
   As in the proof of Lemma 3.4, we can prove from Assumption 2.2 and Proposition 3.3 that there exist $\qepsilon '(t,x,\xi) \ (0 < \epsilon \leq 1)$ satisfying \eqref{4.2} and 
   \begin{equation*}
   \Qepsilon '(t,X,D_x) = X_{\epsilon}(t)^{\dagger}\Bigl[\Lambda(t), K(t)\Bigr]X_{\epsilon}(t)\Lambda(t)^{-1} .
   \end{equation*}
 From \eqref{2.2.1} we have
   \begin{equation}  \label{4.4}
   |k^{(\alpha)}_{\ (\beta)}(t,x,\xi)| \leq C_{\alpha\beta}( <\xi>^2 + <x>^{2(M+1)})
      \end{equation}
for all $\alpha$ and $\beta$.  Consequently, using  Proposition 3.3 and Lemma 3.4, we see together with \eqref{3.25} that there exist functions $q_{\epsilon}''\txxi\ (\zeo)$ satisfying \eqref{4.2} and 
 \begin{align*} 
Q_{\epsilon}''\txdx = &\Bigl[\Lambda\txdx, \widetilde{H}_{\epsilon}(t)\Bigr]\Lambda\txdx^{-1}.
 \end{align*}
 It is easy to study the second term on the right-hand side of  \eqref{4.3} by Proposition 3.3. Thus our proof is complete.
 \end{proof}
 \begin{pro}
 Under Assumption 2.2 there exist functions $q_{a\epsilon}\txxi\ (\adots, \zeo)$ satisfying \eqref{4.2} and 
 \begin{equation} \label{4.5}
Q_{a\epsilon}\txdx = \Biggl[i\frac{\partial}{\partial t}- \widetilde{H}_{\epsilon}(t), \Lambda(t)^a\Biggr]\Lambda(t)^{-a}.
   \end{equation}
 \end{pro}
 \begin{proof}
 For $a = 0$ the assertion is clear.  For $a = 1$ the assertion follows from Lemma 4.1. Consider the case $a = 2$.  We note
 \begin{equation} \label{4.6}
[P,QR] = [P,Q]R + Q[P,R]
   \end{equation}
and thereby
 \begin{align*} 
 &\Biggl[i\frac{\partial}{\partial t}- \widetilde{H}_{\epsilon}(t), \Lambda(t)^2\Biggr]\Lambda(t)^{-2} = \Biggl[i\frac{\partial}{\partial t}- \widetilde{H}_{\epsilon}(t), \Lambda(t)\Biggr]\Lambda(t)^{-1} \\
 & + \Lambda(t)\Biggl[i\frac{\partial}{\partial t}- \widetilde{H}_{\epsilon}(t), \Lambda(t)\Biggr]\Lambda(t)^{-2} 
 =  \Qepsilon(t) + \Lambda(t)\Qepsilon(t)\Lambda(t)^{-1}.
 \end{align*}
 Hence it follows from Lemma 4.1 and Proposition 3.3 that the assertion holds.  We consider the case $a = 3$.  From \eqref{4.6} we have 
 \begin{align*} 
 &\Biggl[i\frac{\partial}{\partial t}- \widetilde{H}_{\epsilon}(t), \Lambda(t)^3\Biggr]\Lambda(t)^{-3} = \Biggl[i\frac{\partial}{\partial t}- \widetilde{H}_{\epsilon}(t), \Lambda(t)\Biggr]\Lambda(t)^{-1} \\
 & + \Lambda(t)\Biggl\{\Biggl[i\frac{\partial}{\partial t}- \widetilde{H}_{\epsilon}(t), \Lambda(t)^2\Biggr]\Lambda(t)^{-2}\Bigg\} \Lambda(t)^{-1} \\
 & = \Qepsilon(t) + \Lambda(t)Q_{2\epsilon}(t)\Lambda(t)^{-1}.
  \end{align*}
 Consequently, using the results for $a = 1$ and  2, we see that the assertion holds.  In the same way we can prove the assertion for $a = 0,1,2,\dots$ by induction. 
 \par
 	Next we consider the case $a = -1,-2,\dots.$  From \eqref{4.6} we have 
 \begin{align*} 
0 =  &\Biggl[i\frac{\partial}{\partial t}- \widetilde{H}_{\epsilon}(t), \Lambda(t)^{-1}\Lambda(t)\Biggr] = \Biggl[i\frac{\partial}{\partial t}- \widetilde{H}_{\epsilon}(t), \Lambda(t)^{-1}\Biggr]\Lambda(t) \\
 & + \Lambda(t)^{-1}\Biggl[i\frac{\partial}{\partial t}- \widetilde{H}_{\epsilon}(t), \Lambda(t)\Biggr],
 \end{align*}
 which shows 
 \begin{align*} 
 &\Biggl[i\frac{\partial}{\partial t}- \widetilde{H}_{\epsilon}(t), \Lambda(t)^{-1}\Biggr]\Lambda(t) = - \Lambda(t)^{-1}\Biggl[i\frac{\partial}{\partial t}- \widetilde{H}_{\epsilon}(t), \Lambda(t)\Biggr]\Lambda(t)^{-1}\Lambda(t) \\
 &  =  - \Lambda(t)^{-1}\Qepsilon(t)\Lambda(t).
 \end{align*}
 Hence the assertion for $a = -1$ holds.  We consider the case $a = -2$.  From \eqref{4.6} we have 
 \begin{align*} 
 &\Biggl[i\frac{\partial}{\partial t}- \widetilde{H}_{\epsilon}(t), \Lambda(t)^{-2}\Biggr]\Lambda(t)^{2} = \Biggl[i\frac{\partial}{\partial t}- \widetilde{H}_{\epsilon}(t), \Lambda(t)^{-1}\Biggr]\Lambda(t) \\
 & + \Lambda(t)^{-1}\Biggl\{\Biggl[i\frac{\partial}{\partial t}- \widetilde{H}_{\epsilon}(t), \Lambda(t)^{-1}\Biggr]\Lambda(t)\Biggr\} \Lambda(t) \\
  & = Q_{-1\epsilon}(t) + \Lambda(t)^{-1}Q_{-1\epsilon}(t)\Lambda(t),
 \end{align*}
 which shows the assertion.  In the same way we can prove the assertion for $a = -1,-2,\dots$ by induction.  Thus, our proof is complete.
 \end{proof}
 	We consider the equation
 \begin{equation} \label{4.7}
i\frac{\partial u}{\partial t}(t) = \widetilde{H}_{\epsilon}(t)u(t) + f(t).
   \end{equation}
   \begin{pro}
   Suppose Assumption 2.2.
   Let $u_0 \in B^a_M \ (\adots)$ and $f(t) \in \Eta B^a_M)$.  Then, there exist solutions $u_{\epsilon}(t) \in \mathcal{E}_t^1([0,T];B^a_M)\ (\zeo)$ with $u_{\epsilon}(0) = u_0$ to \eqref{4.7} satisfying
 \begin{equation} \label{4.8}
\sup_{\zeo}\Vert u_{\epsilon}(t)\Vert_{a,M} \leq C_a\left(\Vert u_{0}\Vert_{a,M} + \int_0^t \Vert f(\theta)\Vert_{a,M} d\theta\right).
   \end{equation}
   \end{pro}
   \begin{proof}
   Applying Theorem 2.5 in Chapter 2 of \cite{Kumano-go} to \eqref{4.1}, we see that each of $\widetilde{H}_{\epsilon}(t)\ (\zeo)$ is written as the pseudo-differential operator with a symbol $p_{\epsilon}(t,x,\xi)$ satisfying
 \begin{equation*} 
\sup_{t,x,\xi}\left|p^{(\alpha)}_{\epsilon\ (\beta)} \txxi\right| \leq C_{\alpha\beta} < \infty
  \end{equation*}
  for all $\alpha$ and $\beta$, where $C_{\alpha\beta}$ may depend on $\zeo$.  Consequently, it follows from   Lemma 2.5  of \cite{Ichinose 1995} with $s = a, a = 2(M+1)$ and $b = 2$  that we have
  \begin{equation*}
  \sup_{0 \leq t \leq T}\Vert \widetilde{H}_{\epsilon}(t)f\Vert_{a,M} \leq C_{a\epsilon}\Vert f\Vert_{a,M}
  \end{equation*}
  for $\adots$ with constants $C_{a\epsilon} \geq 0$ dependent on $\zeo$.  Hence, noting that
   the equation \eqref{4.7} is equivalent to 
 \begin{equation*} 
iu(t) = iu_0 + \int_0^t\bigl\{\Hepsilon(\theta)u(\theta) + f(\theta)\bigr\}d\theta,
  \end{equation*}
  we can find a solution $\uepsilon(t) \in \mathcal{E}_t^1([0,T];B^a_M)$ by the successive iteration for each $\zeo$.  From \eqref{4.5} and \eqref{4.7} we have
 \begin{equation} \label{4.9}
i\frac{\partial }{\partial t}\Lambda(t)^a\uepsilon(t) = \widetilde{H}_{\epsilon}(t)\Lambda(t)^a\uepsilon(t) + Q_{a\epsilon}(t)\Lambda(t)^a\uepsilon(t) + \Lambda(t)^af(t).
   \end{equation}
   Applying the Calder\'on-Vaillancourt theorem (cf. p.224 in \cite{Kumano-go}),  from \eqref{3.25}, Propositions 3.3 and 3.5 we have
   $\Lambda(t)^a\uepsilon(t) \in \mathcal{E}_t^1([0,T];L^2)$ because of $\uepsilon(t) \in \mathcal{E}_t^1([0,T];B^a_M)$.  Noting  \eqref{4.1} and that $H(t)$ is symmetric on $L^2$, from \eqref{4.9} we have
 \begin{align*} 
& \frac{d}{dt}\Vert \Lambda(t)^a\uepsilon(t)\Vert^2 = 2 \rittaire \left(\frac{\partial}{\partial t}\Lambda(t)^a\uepsilon(t),\Lambda(t)^a\uepsilon(t)\right) \\
& = -2\bigl(K(t)X_{\epsilon}(t)\Lambda(t)^au_{\epsilon}(t), X_{\epsilon}(t)\Lambda(t)^au_{\epsilon}(t)\bigr)
\\
& \quad  -2\rittaire \bigl(iQ_{a\epsilon}(t)\Lambda(t)^a\uepsilon(t),\Lambda(t)^a\uepsilon(t)\bigr) - 2\rittaire \bigl(i\Lambda(t)^af(t),\Lambda(t)^a\uepsilon(t)\bigr).
  \end{align*}
Hence, using \eqref{1.3.1}, Proposition 4.2 and the Calder\'on-Vaillancourt theorem, we have
 \begin{align} \label{4.10}
 \frac{d}{dt}\Vert \Lambda(t)^a\uepsilon(t)\Vert^2 \leq 2C_a\left( \Vert \Lambda(t)^a\uepsilon(t)\Vert^2 + \Vert \Lambda(t)^af(t)\Vert \cdot\Vert \Lambda(t)^a\uepsilon(t)\Vert\right) 
  \end{align}
  with a constant $C_a$ independent of $\zeo$.   
  \par
  	For a moment take a constant $\eta > 0$ and set $v(t) := \bigl(\Vert\Lambda(t)^a\uepsilon(t)\Vert^2 + \eta\bigr)^{1/2}$, which is a positive, continuously differentiable  function with respect to $t$.  From \eqref{4.10} we have
 \begin{align*} 
 \frac{d}{dt}v(t)^2 \leq 2C_a\bigl( v(t)^2 + \Vert \Lambda(t)^af(t)\Vert v(t)\bigr) 
  \end{align*}
  and so $v'(t) \leq C_a\bigl( v(t) + \Vert \Lambda(t)^af(t)\Vert \bigr) $.  Hence we see
 \begin{align*} 
 v(t) \leq e^{C_at}v(0) + C_a\int_0^t e^{C_a(t-\theta)}\Vert\Lambda(\theta)^af(\theta)\Vert d\theta.
  \end{align*}
  Letting $\eta$ to $0$, we get
 \begin{align} \label{4.11}
 \Vert \Lambda(t)^a\uepsilon(t)\Vert \leq e^{C_at} \Vert \Lambda(0)^au_0\Vert + C_a\int_0^t e^{C_a(t-\theta)}\Vert\Lambda(\theta)^af(\theta)\Vert d\theta.  \end{align}
  Therefore, noting \eqref{3.25}, Propositions 3.3 and 3.5, we can prove \eqref{4.8} with another constant $C_a \geq 0$.
  \par
   \end{proof}
   The following has been proved in Lemma 3.1 of \cite{Ichinose 2012}.
   \begin{lem}
   Let $\adots$.  Then
   the embedding map from $B^{a+1}_M$ into $B^{a}_M$ is compact.
   \end{lem}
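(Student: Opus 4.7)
My plan is to prove the compactness in three steps: the base case $a=0$, the case $a\geq 1$ via Proposition 3.5, and the case $a\leq -1$ by duality.

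\textbf{Step 1 (base case $a=0$).} I would take a bounded sequence $\{f_n\}\subset B^1_M$ and extract an $L^2$-convergent subsequence. By the definition of $\|\cdot\|_{1,M}$, the sequence $\{f_n\}$ is bounded in $H^2(\mathbb{R}^d)$, and the weighted norms $\|<\cdot>^{2(M+1)}f_n\|$ are uniformly bounded. On each ball $\{|x|\leq R\}$ the Rellich--Kondrachov theorem supplies a subsequence convergent in $L^2(|x|\leq R)$, and a diagonal argument in $R$ produces a subsequence $\{f_{n_k}\}$ converging in $L^2_{\mathrm{loc}}(\mathbb{R}^d)$. The weight controls the tail:
$$
\|f_{n_k}\|_{L^2(|x|>R)} \leq R^{-2(M+1)}\,\|<\cdot>^{2(M+1)}f_{n_k}\| \leq C\,R^{-2(M+1)},
$$
which tends to $0$ as $R\to\infty$, so $\{f_{n_k}\}$ is Cauchy in $L^2=B^0_M$.

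\textbf{Step 2 (case $a\geq 1$).} By Proposition 3.5(2), $\Lambda_M^a$ is an isomorphism from $B^a_M$ onto $L^2$ with inverse $\Lambda_M^{-a}$, and $\|f\|_{a,M}$ is equivalent to $\|\Lambda_M^a f\|$. Hence compactness of $B^{a+1}_M\hookrightarrow B^a_M$ is equivalent to compactness of $\Lambda_M^{-1}:L^2\to L^2$. Proposition 3.5(2) with exponent $1$ makes $\Lambda_M^{-1}$ a bounded map from $L^2$ into $B^1_M$, and Step 1 makes the inclusion $B^1_M\hookrightarrow L^2$ compact, so the composition $\Lambda_M^{-1}:L^2\to L^2$ is compact.

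\textbf{Step 3 (case $a\leq -1$).} Since $B^{-b}_M=(B^b_M)'$ and since $C_c^\infty(\mathbb{R}^d)\subset B^{b+1}_M$ is dense in $B^b_M$ for $b\geq 0$, the inclusion $B^{b+1}_M\hookrightarrow B^b_M$ has dense range; its adjoint is the embedding $B^{-b}_M\hookrightarrow B^{-(b+1)}_M$. By Schauder's theorem the adjoint of a compact operator is compact, and reindexing $a=-b-1$ gives the lemma for $a=-1,-2,\dots$.

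The main obstacle is Step 1, where one must carefully glue local Rellich compactness with the global tail decay supplied by the weight $<\cdot>^{2(M+1)}$; Steps 2 and 3 are then formal consequences of Proposition 3.5 and standard duality.
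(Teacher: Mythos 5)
Your proof is correct. Note, however, that the paper itself gives no argument for this lemma: it simply states that the result "has been proved in Lemma 3.1 of [Ichinose 2012]", so your three-step proof is a self-contained substitute rather than a variant of an in-text argument. Your Step 1 is the standard Rellich--Kondrachov-plus-tail-decay mechanism that one would expect the cited lemma to rest on, and the tail bound $\|f\|_{L^2(|x|>R)}\leq R^{-2(M+1)}\|<\cdot>^{2(M+1)}f\|$ is exactly what the weight in the definition \eqref{1.6.1} of $\|\cdot\|_{a,M}$ is there to provide. Steps 2 and 3 are clean reductions; the only points worth a word of care are (i) that the norm equivalence $\|f\|_{a,M}\sim\|\Lambda_M^af\|$ is not literally stated in Proposition 3.5(2), which only asserts the equivalence of membership, so you should invoke the open mapping (or closed graph) theorem, or the symbol estimates \eqref{3.11} together with the Calder\'on--Vaillancourt theorem, to upgrade it to an isomorphism of Banach spaces; and (ii) that in Step 3 the identification of the Banach-space adjoint of the dense-range inclusion $B^{b+1}_M\hookrightarrow B^b_M$ with the natural embedding $B^{-b}_M\hookrightarrow B^{-(b+1)}_M$ uses the compatibility of the duality pairings with the $L^2$ inner product, which is implicit in the paper's convention that $B^{-a}_M$ denotes the dual of $B^a_M$. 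Incidentally, since Proposition 3.5(2) is stated for all $a=0,\pm1,\pm2,\dots$, your Step 2 conjugation argument already covers the negative exponents and would let you dispense with the duality step entirely; keeping Step 3 is harmless but not necessary.
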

	Now, we will prove Theorem 2.1 under Assumption 2.2.  Our proof is similar to that of Theorem in \cite{Ichinose 1995}. 
	\par
	{\bf 1st step.}  Throughout 1st step we suppose $u_0 \in B^{a+1}_M$ and $f(t) \in \mathcal{E}_t^0([0,T]; \\
	B^{a+1}_M)$.  Let $\uepsilon(t) \in \mathcal{E}_t^1([0,T];B^{a+1}_M)\ (\zeo)$ be the solutions to \eqref{4.7} with $u(0) = u_0$, found in Proposition 4.3.  We see from \eqref{3.25}, \eqref{4.4}, Propositions 3.5 and 4.3 that the family $\bigl\{\uepsilon(t)\bigr\}_{\zeo}$
	is bounded in  $\mathcal{E}_t^0([0,T];B^{a+1}_M)$ and 
equi-continuous in  $\mathcal{E}_t^0([0,T];B^{a}_M)$ because 
 \begin{equation*} 
i\bigl\{\uepsilon(t)  - \uepsilon(t')\bigr\} = \int_{t'}^t\Hepsilon(\theta)\uepsilon(\theta)d\theta + \int_{t'}^tf(\theta)d\theta
  \end{equation*}
  and 
 \begin{equation*} 
\sup_{\zeo}\max_{0 \leq t \leq T}\Vert \widetilde{H}_{\epsilon}(t)u_{\epsilon}(t)\Vert_{a,M} \leq C_a \sup_{\zeo}\max_{0 \leq t \leq T}\Vert u_{\epsilon}(t)\Vert_{a+1,M} \leq C'_a \Vert u_0\Vert _{a+1,M}.
  \end{equation*}
Consequently, it follows from Lemma 4.4 that we can apply the Ascoli-Arzel\`{a} theorem to $\left\{\uepsilon(t)\right\}_{\zeo}$ in $\Eta B^a_M)$.  Then, there exist a sequence $\bigl\{\epsilon_j\bigr\}_{j=1}^{\infty}$ tending to zero and a function $u(t) \in \Eta B^a_M)$  such that 
 \begin{align} \label{4.12}
 \lim_{j \rightarrow \infty} u_{\epsilon_j}(t)  = u(t) \  \text{in}\ \Eta B^a_M).
 \end{align}
  Since from \eqref{4.7} we have
 \begin{align*} 
& u_{\epsilon_j}(t) = u_0 -i \int_0^t\widetilde{H}_{\epsilon_j}(\theta)u_{\epsilon_j}(\theta)d\theta - i \int_0^tf(\theta)d\theta \\
& = u_0 -i \int_0^t\widetilde{H}_{\epsilon_j}(\theta)u(\theta)d\theta 
-i \int_0^t\widetilde{H}_{\epsilon_j}(\theta)\bigl\{u_{\epsilon_j}(\theta)- u(\theta)\bigr\}d\theta - i \int_0^tf(\theta)d\theta,
  \end{align*}
  as in the proof of \eqref{3.14} in \cite{Ichinose 1995} from \eqref{3.25} and \eqref{4.4} we have
 \begin{equation*} 
u(t) = u_0 -i \int_0^t\widetilde{H}(\theta)u(\theta)d\theta - i \int_0^tf(\theta)d\theta
  \end{equation*}
  in $\mathcal{E}_t^0([0,T];B^{a-1}_M)$ by using Lemma 2.2 in \cite{Ichinose 1995}.  Therefore we see that $u(t)$ belongs to $\Eta B^a_M) \cap \mathcal{E}_t^1([0,T];B^{a-1}_M)$ and satisfies 
 \begin{equation} \label{4.13}
i\frac{\partial u}{\partial t}(t) = \widetilde{H}(t)u(t) + f(t)
   \end{equation}
with $u(0) = u_0$.  From \eqref{4.8} and \eqref{4.12} we also have
 \begin{equation} \label{4.14}
\Vert u(t)\Vert_{a,M} \leq C_a\left(\Vert u_{0}\Vert_{a,M} + \int_0^t \Vert f(\theta)\Vert_{a,M} d\theta\right).
   \end{equation}
   \par
   {\bf 2nd step.}  In this step we will prove the uniqueness of solutions to \eqref{1.1.1}  in $\Eta B^a_M) \cap \mathcal{E}_t^1([0,T];B^{a-1}_M)$ for any $\adots$. 
   \par
   Let $u(t) \in \Eta B^a_M) \cap \mathcal{E}_t^1([0,T];B^{a-1}_M)$ be a solution to \eqref{1.1.1}, i.e.
 \begin{equation*} 
i\frac{\partial u}{\partial t}(t) = \widetilde{H}(t)u(t)
   \end{equation*}
   with $u(0) = 0$.  We may assume $a \leq 0$ because of $B^{a+1}_M \subset B^a_M$.  Let 
   $g(t) \in \mathcal{E}_t^0([0,T];B^{-a+2}_M)$ be an arbitrary function and consider the backward Cauchy problem
 \begin{equation*} 
i\frac{\partial v}{\partial t}(t) = \big\{H(t) + iK(t)\bigr\}v(t) + g(t)
   \end{equation*}
with $v(T) = 0$.  Since \eqref{1.3.1} is assumed, as in the proof of the 1st step we can get a solution $v(t) \in \mathcal{E}_t^0([0,T];B^{-a+1}_M) \cap \mathcal{E}_t^1([0,T];B^{-a}_M)$.  Then we have
\begin{align*}
0 &= \int_0^T\left(i\frac{\partial u}{\partial t}(t) - \widetilde{H}(t)u(t), v(t)\right)dt \\
& =  \int_0^T\left(u(t), i\frac{\partial v}{\partial t}(t) -\big\{H(t) + iK(t)\bigr\}v(t)\right)dt = \int_0^T \bigl(u(t), g(t)\bigr)dt,
\end{align*}
which shows $u(t) = 0$.
\par
{\bf 3rd step.}  Let $u_0 \in B^a_M.$  We take $\bigl\{u_{0j}\bigr\}_{j=1}^{\infty}$ in $B^{a+1}_M$ such that $\lim_{j\to \infty}u_{0j} = u_0$ in $B^{a}_M$.  Let $u_j(t) \in \mathcal{E}_t^0([0,T];B^{a}_M) \cap \mathcal{E}_t^1([0,T];B^{a-1}_M)$ be the solution to \eqref{1.1.1} with $u(0) = u_{0j}$, uniquely determined in the above 2 steps.  Since $u_j(t) - u_k(t) \in \mathcal{E}_t^0([0,T];B^{a}_M) \cap \mathcal{E}_t^1([0,T];B^{a-1}_M)$ is the solution to \eqref{1.1.1} with $u(0) = u_{0j} - u_{0k} \in B^{a+1}_M$, from \eqref{4.14} we have
 \begin{equation} \label{4.15}
\Vert u_j(t) - u_k(t)\Vert_{a,M} \leq C_a\Vert u_{0j} - u_{0k}\Vert_{a,M}.
   \end{equation}
   Consequently, there exists a $u(t) \in \mathcal{E}_t^0([0,T];B^{a}_M)$ such that $\lim_{j\to \infty}u_j(t) = u(t)$ in $\mathcal{E}_t^0([0,T];B^{a}_M)$.  Since
 \begin{equation*} 
u_{j}(t) = u_{0j} -i \int_0^t\widetilde{H}(\theta)u_{j}(\theta)d\theta,
  \end{equation*}
   $u(t)$ belongs to $\mathcal{E}_t^0([0,T];B^{a}_M) \cap \mathcal{E}_t^1([0,T];B^{a-1}_M)$ and satisfies \eqref{1.1.1} with $u(0) = u_0$.  We can also prove \eqref{2.8.1} because $\Vert u_j(t)\Vert_{a,M} \leq C_a\Vert u_{0j}\Vert_{a,M}$ holds from \eqref{4.14}.  Thus we have completed the proof of Theorem 2.1 under Assumption 2.2.
  \par
 We will prove Theorem 2.1 under Assumption 2.1.  We define $\Lambda(X,D_x)$ by \eqref{3.14} where $h_s(x,\xi)$
 is replaced with $|x|^2 + |\xi|^2$.  We also define $\chi_{\epsilon}(x,\xi)$ by \eqref{3.15} where $h(x,\xi)$
 is replaced with $|x|^2 + |\xi|^2$.  Then it is easy to show the same assertions as in Lemma 3.4.  Noting Proposition 3.5, we can also prove the same assertions as in Lemma 4.1 and Propositions 4.2 - 4.3 where $M = 0$.  Consequently, we can prove Theorem 2.1 under Assumption 2.1 as in the proof of that under Assumption 2.2.
	Thus, our proof of Theorem 2.1 is complete. 
	\vspace{0.3cm}
\par
	Next, we will prove (2) of Theorem 2.2.  The proof of (1) of Theorem 2.2 can be given in the same way.  Our proof below is similar to that of Theorem 4.1 in \cite{Ichinose 2012}.  For simplicity we write $\Vert f \Vert_{a,M}$ as $\Vert f \Vert_{a}$ hereafter in this section.
	\par
	Let $\utrho\ (\rho \in \mathcal{O})$ be the solutions to \eqref{1.1.1} with $u(0;\rho) = u_0 \in B^a_M \ (\adots)$.  Then, following the proof of Theorem 2.1, under the assumptions of Theorem 2.2 we have
 \begin{equation} \label{4.16}
\sup_{\rho \in \mathcal{O}}\Vert \utrho\Vert_a \leq C_a\Vert u_0\Vert_a\quad (0 \leq t \leq T).
   \end{equation}
   \par
   	We first assume $u_0 \in B^{a+1}_M$.  Then from \eqref{4.16} we have
 \begin{equation*} 
\sup_{\rho \in \mathcal{O}}\Vert \utrho\Vert_{a+1} \leq C_{a+1}\Vert u_0\Vert_{a+1}
   \end{equation*}
    and hence as in the 1st step of the proof of Theorem 2.1
 \begin{equation*} 
\Vert \utrho - u(t';\rho)\Vert_{a} \leq C'_{a}|t-t'|\Vert u_0\Vert_{a+1}
   \end{equation*}
   with a constant $C'_a$ independent of $\rho$.  Consequently, we see that the family $\big\{\utrho\bigr\}_{\rho \in \mathcal{O}}$ is bounded in $\mathcal{E}_t^0([0,T];B^{a+1}_M)$ and equi-continuous in $\mathcal{E}_t^0([0,T];B^{a}_M)$.  Let $\rho_j \to \rho$
   in $\mathcal{O}$ as $j \to \infty$.  Noting Lemma 4.4, we can apply the Ascoli-Arzel\`{a} theorem to $\big\{u(t;\rho_j)\bigr\}_{j=1}^{\infty}$ in $\mathcal{E}_t^0([0,T];B^{a}_M)$.  Then, there exist a subsequence $\big\{j_k\bigr\}_{k=1}^{\infty}$ and a function $v(t) \in \Eta B^{a}_M)$ such that $\lim_{k\to \infty}u(t;\rho_{j_k}) = v(t) $ in $\Eta B^{a}_M)$.  As in the proof of \eqref{4.13} we see that $v(t)$ belongs to $\mathcal{E}_t^1([0,T];B^{a-1}_M)$ and satisfies \eqref{1.1.1} with $u(0) = u_0$.  The uniqueness of solutions to \eqref{1.1.1} gives $v(t) = \utrho$, which shows $\lim_{k\to \infty}u(t;\rho_{j_k}) = \utrho$.  Using the uniqueness of solutions to \eqref{1.1.1} again,  we have
 \begin{equation*} 
\lim_{j \to \infty}u(t;\rho_{j}) = \utrho\ \text{in}\ \Eta B^{a}_M).
   \end{equation*}
   Therefore we see that the mapping $: \mathcal{O} \ni \rho \to \utrho \in \Eta B^{a}_M)$ is continuous.
   \par
   Now let  $u_0 \in B^a_M$ and $\utrho\ (\rho \in \mathcal{O})$ the solutions to \eqref{1.1.1} with $u(0) = u_0$.   We take $\bigl\{u_{0k}\bigr\}_{k=1}^{\infty}$ in $B^{a+1}_M$ such that $\lim_{k\to \infty}u_{0k} = u_0$ in $B^{a}_M$ and let $u_k(t;\rho) \in \mathcal{E}_t^0([0,T];B^{a}_M) \cap \mathcal{E}_t^1([0,T];B^{a-1}_M)$ be the solutions to \eqref{1.1.1} with $u(0) = u_{0k}$. Then, from \eqref{4.16} we have
 \begin{equation*} 
\sup_{\rho}\max_t \Vert u_k(t;\rho) - u(t;\rho)\Vert_a \leq C_a\Vert u_{0k} - u_{0}\Vert_a,
   \end{equation*}
which shows that $\utrho$ is continuous in $\Eta B^{a}_M)$ with respect to $\rho \in \mathcal{O}$ because so is $ u_k(t;\rho)$.  Thus our proof of Theorem 2.2 is complete. \vspace{0.3cm} 
\par
	In the end of this section we will prove (2) of Theorem 2.3.  The proof of (1) is given in the same way. Our proof below is similar to that of Theorem 2.3 in \cite{Ichinose 2012}.
	\par
	Let $u_0 \in B^a_M\ (\adots)$ and $f(t) \in \Eta B^{a}_M)$.  Then, we see that  there exists the unique solution $u(t) \in \Eta B^{a}_M) \cap \mathcal{E}_t^1([0,T];B^{a-1}_M)$ to \eqref{4.13} with $u(0) = u_0$, which satisfies 
 \begin{equation} \label{4.17}
\Vert u(t)\Vert_a \leq C_a\left(\Vert u_{0}\Vert_a + \int_0^t \Vert f(\theta)\Vert_a d\theta\right).
   \end{equation}
   Its proof can be completed by using \eqref{4.14} as in the 3rd step of the proof of Theorem 2.1.
	\par
	Let $u_0 \in B^{a+1}_M$ and $\utrho \in \mathcal{E}_t^0([0,T];B^{a+1}_M) \cap \mathcal{E}_t^1([0,T];B^{a}_M)\ (\rho \in \mathcal{O})$ the solutions to \eqref{1.1.1} with $u(0) = u_0$.  Let $\rho \in \mathcal{O}$ be fixed and $\tau \not= 0$ small constants such that $\rho + \tau \in \mathcal{O}.$  We set
 \begin{equation} \label{4.18}
w_{\tau} (t;\rho) := \frac{u(t;\rho+\tau) - u(t;\rho)}{\tau},
 \end{equation}
   which belong to $\Eta B^{a}_M) \cap \mathcal{E}_t^1([0,T];B^{a-1}_M)$.
   Then we have $w_{\tau}(0;\rho) = 0$ and from \eqref{1.1.1}
\begin{equation} \label{4.19}
 i\frac{\partial}{\partial t}w_{\tau} (t;\rho) = \widetilde{H}(t;\rho)w_{\tau} (t;\rho) + \int_0^1 \frac{\partial  \widetilde{H}}{\partial \rho}(t;\rho+\theta\tau)d\theta\, u(t;\rho+\tau).
 \end{equation}
 Hence, noting \eqref{2.15.1} - \eqref{2.17.1}, from \eqref{4.16} and \eqref{4.17} we get
\begin{align*}
& \Vert w_{\tau} (t;\rho)\Vert_a \leq C_a\int_0^t\int_0^1 \left\Vert \frac{\partial  \widetilde{H}}{\partial\rho}(t';\rho+\theta\tau)u(t';\rho+\tau)\right\Vert_ad\theta dt' \\
& \leq C'_a\int_0^t \left\Vert u(t';\rho+\tau)\right\Vert_{a+1} dt' \leq C''_a \left\Vert u_0\right\Vert_{a+1}.
\end{align*}
Consequently, 
 \begin{equation} \label{4.20}
\sup_{\tau}\Vert w_{\tau} (t;\rho)\Vert_a \leq C_a \left\Vert u_0\right\Vert_{a+1}
\end{equation}
   with another constant $C_a$ independent of $\rho \in \mathcal{O}$.
   \par
     We first assume $u_0 \in B^{a+2}_M$.  From  \eqref{4.20} we have
 \begin{equation*} 
\sup_{\tau}\Vert w_{\tau} (t;\rho)\Vert_{a+1} \leq C_{a+1} \left\Vert u_0\right\Vert_{a+2}.
\end{equation*}
  Thereby from \eqref{4.16} and \eqref{4.19} we have  
   \begin{equation*} 
\sup_{\tau}\Vert w_{\tau} (t;\rho) - w_{\tau} (t';\rho)\Vert_{a} \leq C'_a |t - t'|\left \Vert u_0\right\Vert_{a+2}
\end{equation*}
as in the 1st step of the proof of Theorem 2.1. Hence we can apply the Ascoli-Arzel\`{a} theorem to $\bigl\{w_{\tau}(t;\rho)\bigr\}_{\tau}$ in $\Eta B^a_M)$.  In addition, we can use  the uniqueness of solutions to \eqref{2.14.1} or \eqref{4.13}.  Then, using Theorem 2.2, 
as in the 3rd step of the proof of Theorem 2.1 we can prove  from \eqref{4.19} that there exists a function $\wtrho \in \Eta B^a_M)\cap \mathcal{E}_t^1([0,T];B^{a-1}_M)$ satisfying \eqref{2.14.1} with $w(0) = 0$ and 
 \begin{equation} \label{4.21}
\lim_{\tau\to 0}w_{\tau} (t;\rho) = \wtrho \ \text{in}\ \Eta B^a_M).
\end{equation}
\par
	Now let  $u_0 \in B^{a+1}_M$.  Let $\utrho$ be the solution to \eqref{1.1.1} with $u(0) = u_0$ and define $w_{\tau}(t;\rho)$  by \eqref{4.18}.  We take $\bigl\{u_{0k}\bigr\}_{k=1}^{\infty}$ in $B^{a+2}_M$ such that $\lim_{k\to \infty}u_{0k} = u_0$ in $B^{a+1}_M$.  Let $u_k(t;\rho) \in \mathcal{E}_t^0([0,T];B^{a+1}_M) \cap \mathcal{E}_t^1([0,T];B^{a}_M)$ be the solution to \eqref{1.1.1} with $u(0) = u_{0k}$.  We define $w_{k\tau}$ by \eqref{4.18} with $u = u_k$ and $w_{k}$  by \eqref{4.21} with $w_{\tau} = w_{k\tau}$.  From \eqref{4.19} we have
	\begin{align*}
 & i\frac{\partial}{\partial t}\bigl\{w_{k\tau} (t;\rho) - w_{\tau} (t;\rho)\bigr\}= 
\widetilde{H}(t;\rho)\bigl\{w_{k\tau} (t;\rho) - w_{\tau} (t;\rho)\bigr\} \\
 &\quad  + \int_0^1 \frac{\partial\widetilde{H}}{\partial \rho}(t;\rho+\theta\tau)d\theta\, \bigl\{u_k(t;\rho+\tau) - u (t;\rho+\tau)\bigr\}
 \end{align*}
 and $w_{k\tau} - w_{\tau} \in \Eta B^a_M)\cap \mathcal{E}_t^1([0,T];B^{a-1}_M) $.  Hence, as in the proof of \eqref{4.20}  we have
 \begin{equation} \label{4.22}
\sup_{\tau}\Vert w_{k\tau}(t;\rho) - w_{\tau}(t;\rho)\Vert_a \leq C_a\Vert u_{0k} - u_{0}\Vert_{a+1}
   \end{equation}
with the constant $C_a$ in \eqref{4.20}.
As noted in the first part of this proof, there exists
 the solution $\wtrho \in \mathcal{E}_t^0([0,T];B^{a}_M) \cap \mathcal{E}_t^1([0,T];B^{a-1}_M)$  to \eqref{2.14.1} with $w(0) = 0$ because of $\partial_{\rho}\widetilde{H}(t;\rho)u(t;\rho) 
  \in \Eta B^a_M)$.  Since both of $w_k$ and $w$ are the solutions to \eqref{2.14.1}, as in the proof of \eqref{4.22} we  have
 \begin{equation} \label{4.23}
 \Vert w_{k}(t;\rho) - w(t;\rho)\Vert_a \leq C_a\Vert u_{0k} - u_{0}\Vert_{a+1}.
   \end{equation}
Consequently, we have
\begin{align*}
& \Vert w_{\tau} (t;\rho)- w(t;\rho)\Vert_a \leq \Vert w_{\tau} - w_{k\tau}\Vert_a + \Vert w_{k\tau}- w_k\Vert_a + \Vert w_{k} - w\Vert_a \\
& \leq 2C_a \Vert u_{0k} - u_{0}\Vert_{a+1} + \Vert w_{k\tau}- w_k\Vert_a.
\end{align*}
Hence we see from \eqref{4.21} that  we get  $\overline{\lim}_{\tau \to 0}\max_{\,t}\Vert w_{\tau}- w\Vert_a \leq 2C_a\Vert u_{0k} - u_{0}\Vert_{a+1}$, which shows
 \begin{equation} \label{4.24}
\lim_{\tau \to 0}\max_{0 \leq t \leq T}\Vert w_{\tau}(t;\rho) - w(t;\rho)\Vert_a = 0.
   \end{equation}
We also have \eqref{2.13.1} from \eqref{4.20} and \eqref{4.24}.
\par
	In the end of this proof we will prove that $\wtrho = \partial_{\rho}\utrho$ for $u_0 \in B^{a+1}_M$ is continuous in $\mathcal{E}_t^0([0,T];B^{a}_M)$ with respect to $\rho \in \mathcal{O}$.  
	We first assume $u_0 \in B^{a+2}_M$.  Then we have \eqref{4.16} where $a$ is replaced with $a+2$.  Since $\wtrho$ is the solution to \eqref{2.14.1} with $w(0) = 0$, we see from \eqref{4.17} as in the proof of \eqref{4.21} that the family $\bigl\{\wtrho\bigr\}_{\rho \in \mathcal{O}}$ is bounded in $\mathcal{E}_t^0([0,T];B^{a+1}_M)$ and equi-continuous in $\Eta B^a_M)$. 
Hence, noting that $\utrho$ is continuous in $\mathcal{E}_t^0([0,T];B^{a+2}_M)$  with respect to $\rho$, we see that so is $\wtrho$ in $\Eta B^a_M)$ as in the proof of Theorem 2.2.  Now let $u_0 \in B^{a+1}_M$.  We take $\bigl\{u_{0k}\bigr\}_{k=1}^{\infty}$ in $B^{a+2}_M$ such that $\lim_{k\to \infty}u_{0k} = u_0$ in $B^{a+1}_M$ and write  as $w_k(t,\rho)$ the solutions to \eqref{2.14.1} with $u(t;\rho) = u_k(t;\rho)$ and $w(0) = 0$.  Then we have \eqref{4.23}, which shows that $\wtrho$ is continuous with respect to $\rho \in \mathcal{O}$ in $\Eta B^a_M)$ because so is $w_k(t;\rho)$.  Therefore, our proof of Theorem 2.3 is complete.
%
%%%%%%%%%%%%%%%%%%%%%%%%%%%%%%%%%%%%
   %%%%%%%%%%%%%%%%%%%%%%%%%%%%
   \section{Proofs of Theorems 2.4 - 2.6}
	In this section we will study the 4-particle systems \eqref{1.7.1}.   Let $(x,\xi) \in \bR^{2d}$ and write 
\begin{equation} \label{5.1}
 h_k(t,x,\xi) :=  \frac{1}{2m_k}|\xi - A^{(k)}(t,x)|^2 + V_k(t,x) \ (k = 1,2,3,4)
 \end{equation}
  and 
\begin{equation} \label{5.2}
 l_k(x,\xi) :=  \frac{1}{2m_k}|\xi |^2 + <x>^2 \ (k = 3,4).
 \end{equation}
  We set 
\begin{equation} \label{5.3}
 \widehat{h}(t,z,\zeta) :=  \sum_{k=1}^2 h_k(t,x^{(k)},\xi^{(k)}) + W_{12}(t,x^{(1)} -x^{(2)}) + \sum_{k=3}^4l_k(x^{(k)},\xi^{(k)})
 \end{equation}
 and write 
\begin{equation} \label{5.4}
 \widehat{H}(t) :=  \widehat{H}\left(t,\frac{Z+Z'}{2},D_z\right),
 \end{equation}
 where $z = (x^{(1)},x^{(2)},x^{(3)},x^{(4)})$ and $\zeta = (\xi^{(1)},\xi^{(2)},\xi^{(3)},\xi^{(4)})$ in $\bR^{4d}$.   We  also set 
\begin{equation} \label{5.5}
\widehat{h}_s(t,z,\zeta) :=   \widehat{h}(t,z,\zeta) + i\sum_{k=1}^2\frac{1}{2m_k}\nabla\cdot A^{(k)}(t,x^{(k)})
 \end{equation}
 and 
\begin{equation*} 
 p_{\mu}(t,z,\zeta) :=  \frac{1}{\mu + \widehat{h}_s(t,z,\zeta)}
 \end{equation*}
 for large $\mu$ as in \eqref{3.2} and \eqref{3.3}, respectively.
 \begin{lem}
 Assume \eqref{1.5.1}, \eqref{2.3.1} and \eqref{2.5.1} - \eqref{2.6.1} for $(V_k,A^{(k)})\ (k = 1,2)$ with $M = M_k$ and \eqref{2.18.1} - \eqref{2.19.1} for $W_{12}$.  Then, there exist a constant $\mu^* \geq 0$ and functions $r_{\mu}(t,z,\zeta)\ (\mu \geq \mu^*)$ such that
 \begin{equation} \label{5.5-2}
	\mu^* + \rittaire	\, \widehat{h}_s(t,z,\zeta) \geq C_0^*(<\zeta>^2 + \Phi(z)^2),
	\end{equation}
	\begin{equation} \label{5.6}
	\bigl[\mu + \widehat{H}(t)\bigr] P_{\mu}\tzdz = I + R_{\mu}\tzdz,
	\end{equation}
	\begin{equation} \label{5.7}
	\left|r^{(\alpha)}_{\mu\ (\beta)}(t,z,\zeta)\right| \leq  C_{\alpha\beta}\, \mu^{-1/2}
	\end{equation}
	in $[0,T]\times \bR^{8d}$ for all $\alpha, \beta$ and $\mu \geq \mu^*$ with constants $C_0^* > 0$ and $C_{\alpha\beta}$ independent of $\mu$, where
	\begin{equation} \label{5.8}
	\Phi(z) =  \sum_{k=1}^2<x^{(k)}>^{M_k+1} + \sum_{k=3}^4<x^{(k)}>.
	\end{equation}
 \end{lem}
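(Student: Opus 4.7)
My plan is to mirror the proofs of Lemmas 3.1 and 3.2 with $\widehat{h}_s$ in place of $h_s$; the only genuinely new issue is the coupling term $W_{12}(t,x^{(1)}-x^{(2)})$. For the lower bound (5.5-2), since $\rittaire\,\widehat{h}_s = \widehat{h}$, I would first apply Lemma 3.1 to each $h_k(t,x^{(k)},\xi^{(k)})$ with $M = M_k$ for $k=1,2$, obtaining $h_k \geq C_0^*(<\xi^{(k)}>^2 + <x^{(k)}>^{2(M_k+1)}) - C_1^*$. For $k = 3,4$, the definition (5.2) trivially gives $l_k \geq c(<\xi^{(k)}>^2 + <x^{(k)}>^2)$. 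For $W_{12}$, the bound (2.18.1) together with $<x^{(1)}-x^{(2)}>^2 \leq 2(<x^{(1)}>^2 + <x^{(2)}>^2)$ yields
\begin{equation*}
|W_{12}(t,x^{(1)}-x^{(2)})| \leq C\bigl(<x^{(1)}>^{2(M_0+1)-\delta} + <x^{(2)}>^{2(M_0+1)-\delta}\bigr);
\end{equation*}
since $2(M_0+1)-\delta < 2(M_k+1)$ for $k = 1,2$, Young's inequality absorbs this into an arbitrarily small multiple of $<x^{(k)}>^{2(M_k+1)}$ plus a constant. Summing, I would obtain $\widehat{h} \geq C(<\zeta>^2 + \Phi(z)^2) - C'$, and choosing $\mu^*$ sufficiently large gives (5.5-2).

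For (5.6)-(5.7) I would follow Lemma 3.2 verbatim. Set $p_\mu = 1/(\mu + \widehat{h}_s)$ and apply the composition formula (Theorem 3.1, Ch. 2 of \cite{Kumano-go}) to write
\begin{equation*}
r_\mu(t,z,\zeta) = \sum_{|\alpha|=1} \int_0^1 d\theta\,\text{Os}-\iint e^{-iy\cdot\eta}\,\widehat{h}_s^{(\alpha)}(t,z,\zeta+\theta\eta)\,p_{\mu\,(\alpha)}(t,z+y,\zeta)\,dy\,\hdbar\eta.
\end{equation*}
The requisite derivative estimates are the multi-particle analogues of (3.8)-(3.9): for the $h_k$ blocks they follow from those single-particle estimates applied separately, for the $l_k$ blocks they are trivial, and for $W_{12}$ the bound (2.19.1) combined with $<x^{(1)}-x^{(2)}>^{2(M_0+1)} \leq C(<x^{(1)}>^{2(M_0+1)} + <x^{(2)}>^{2(M_0+1)})$ and $M_0 \leq M_k$ yields control by $\Phi(z)^2$. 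Inserting regularizers $<y>^{-2l_0}<D_\eta>^{2l_0}<\eta>^{-2l_1}<D_y>^{2l_1}$ with $l_0, l_1$ chosen large enough, translating $\Phi(z+y)$ back to $\Phi(z)$ at the cost of a factor $<y>^{2(M_{\max}+1)}$, and applying (5.5-2) together with inequality (2.9) of \cite{Ichinose 1995}, I arrive at $|r_\mu(t,z,\zeta)| \leq C\mu^{-1/2}$ for $\mu \geq \mu^*$. The derivative estimates in (5.7) follow by the same argument applied to $r_{\mu\ (\beta)}^{(\alpha)}$.

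The essential new difficulty relative to Lemma 3.2 is the coupling $W_{12}$. The strict gain $\delta > 0$ in (2.18.1) is critical: it alone permits Young's inequality to absorb $W_{12}$ into the confining potentials $V_1, V_2$ when establishing the lower bound, since otherwise the interaction would compete at the same growth rate. For the remainder estimate, the point is that even though $W_{12}$ depends on the difference $x^{(1)}-x^{(2)}$, the assumption (2.19.1) together with the convexity bound above and the choice $M_0 = \min(M_1,M_2)$ guarantees that its derivatives remain controlled by the separable weight $\Phi(z)^2$. Once these two facts are in hand, the symbol-calculus argument of Lemma 3.2 transplants to the 4-particle symbol $\widehat{h}_s$ with only cosmetic modifications.
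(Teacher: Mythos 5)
Your proposal is correct and follows essentially the same route as the paper: the lower bound \eqref{5.5-2} is obtained from the single-particle bounds of Lemma 3.1 plus absorption of $W_{12}$ via the strict gain $\delta>0$ in \eqref{2.18.1} (which is exactly how the paper uses that assumption), and the remainder estimate \eqref{5.7} is the Lemma 3.2 argument transplanted to $\widehat{h}_s$ with the weight $<\zeta>^2+\Phi(z)^2$ and the analogous derivative bounds \eqref{5.11}--\eqref{5.12}. Your write-up in fact makes explicit some steps (the convexity bound for $<x^{(1)}-x^{(2)}>$ and the exponent comparison $2(M_0+1)-\delta<2(M_k+1)$) that the paper leaves implicit.
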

 \begin{proof}
 As in the proof of  \eqref{3.6} we see
 \begin{equation*} 
  \rittaire \widehat{h}_s(t,z,\zeta) = \widehat{h}(t,z,\zeta) \geq  C_0\bigl(<\zeta>^2  + \Phi(z)^2\bigr) - |W_{12}(t,x^{(1)} -x^{(2)})| - C_1
   \end{equation*}
   with constants $C_0 > 0$ and $C_1 \geq 0$.  Hence, using the assumption \eqref{2.18.1}, we can determine constants $\mu^* \geq 0$ and $C^*_0 > 0$ satisfying \eqref{5.5-2}.  Then, using \eqref{5.5-2},
    as in the proof of \eqref{3.7} for $\mu \geq \mu^*$ we have
   \begin{align} \label{5.10}
  & r_{\mu}(t,z,\zeta)   =  \sum_{|\alpha| = 1} \int_0^1 d\theta\  \text{Os}-\iint e^{-iy\cdot\eta}
 <y>^{-2l_0}  <D_{\eta}>^{2l_0}<\eta>^{-2l_1} <D_y>^{2l_1} \notag\\
 & \hspace{2cm} \cdot \widehat{h}_s^{(\alpha)}(t,z,\zeta+\theta\eta)  p_{\mu (\alpha)}(t,z+y,\xi)dy\hdbar\eta
 \end{align}
  for large integers $l_0$ and $l_1$.  In addition, as in the proofs of \eqref{3.8} - \eqref{3.9} we can show 
 \begin{equation} \label{5.11}
 |\widehat{h}^{(\alpha)}_{s\,(\beta)}(t,z+y,\zeta)| \leq  C_{\alpha\beta} \left(<\zeta>^2 + \Phi(z+y)^2\right) 
 \end{equation}
 for all $\alpha$ and $|\beta|\geq 1$, and 
 \begin{equation} \label{5.12}
 |\widehat{h}^{(\alpha)}_{s\, (\beta)}(t,z,\zeta + \theta\eta)| \leq  C_{\alpha\beta} \bigl(<\zeta> + \Phi(z)\bigr) <\eta>
 \end{equation}
 for $|\alpha|\geq 1$ and all $\beta$.  Therefore, we can complete the proof of Lemma 5.1 from \eqref{5.10} - \eqref{5.12} as in the proof of Lemma 3.2.
 \end{proof}
 	We can easily see from \eqref{5.11} and \eqref{5.12} as in the proof of \eqref{3.25} that under the assumptions of Lemma 5.1 we have
 \begin{equation} \label{5.12-2}
 |\widehat{h}^{(\alpha)}_{s\, (\beta)}(t,z,\zeta)| \leq  C_{\alpha\beta}\bigl(<\zeta>^2 + \Phi(z)^2\bigr) 
 \end{equation}
 for all $\alpha$ and  $\beta$.
 \begin{pro}
 Under the assumptions of Lemma 5.1 there exist a constant $\mu \geq \mu^*$ and a function $w(t,z,\zeta)$ satisfying 
 \begin{equation} \label{5.13}
  |w^{(\alpha)}_{\ \  (\beta)}(t,z,\zeta)| \leq C_{\alpha\beta}\left(<\zeta>^2 + \Phi(z)^2\right)^{-1}
  \end{equation}
  for all $\alpha, \beta$ and 
 \begin{equation} \label{5.14}
 W\tzdz = \bigl(\mu+ \widehat{H}(t)\bigr)^{-1}.
   \end{equation}
 \end{pro}
 \begin{proof}
If $\mu \geq \mu^*$, from \eqref{5.5-2} and  \eqref{5.11} - \eqref{5.12} we see
 \begin{equation*} 
 |p^{(\alpha)}_{\mu\, (\beta)}(t,z,\zeta)| \leq  C_{\alpha\beta}\bigl(<\zeta>^2 + \Phi(z)^2\bigr)^{-1}
 \end{equation*}
 for all $\alpha$ and  $\beta$ as in the proof of Proposition 3.3. Hence, using Lemma 5.1, we can prove Proposition 5.2 as in the proof of Proposition 3.3.
 \end{proof}
	We take a $\mu$ stated in Proposition 5.2 and fix it hereafter.  We set
 \begin{equation} \label{5.15}
 \lambda(t,z,\zeta)  := \mu + \widehat{h}_s(t,z,\zeta)
   \end{equation}
   as in \eqref{3.13}.  Then, from \eqref{5.1} - \eqref{5.5} we have 
 \begin{align} \label{5.16}
 \Lambda(t) & = \Lambda\tzdz = \mu + \widehat{H}_s\tzdz =  \mu + \widehat{H}(t) \notag\\
 & = \mu + H_1(t) + H_2(t) + W_{12}(t) + L_3(t) + L_4(t),
  \end{align}
   where $H_k(t)$ are the operators defined by \eqref{1.7.1} and $L_k(t)$ the pseudo-differential operators with the symbols $l_k(x^{(k)},\xi^{(k)})$ defined by \eqref{5.2}.  Using the real-valued function $\widehat{h}(t,z,\zeta)$ defined by \eqref{5.3}, we determine
 \begin{equation} \label{5.17}
 \chi_{\epsilon}(t,z,\zeta) = \chi\bigl(\epsilon(\mu + \widehat{h}(t,z,\zeta)\bigr)
   \end{equation}
   with constants $\zeo$ and $\chi \in \mathcal{S}(\bR)$ such that $\chi(0) = 1$ as in \eqref{3.15}.
   \par
   	Lemmas 5.3 and 5.4 below are crucial in this section.
	\begin{lem}
	Under the assumptions of Lemma 5.1 there exist functions  $\omega_{\epsilon}(t,z,\zeta)\ (\zeo)$ in $[0,T] \times \bR^{8d}$ satisfying
 \begin{equation} \label{5.18}
 \sup_{0 < \epsilon \leq 1}\sup_{t,z,\zeta} |\omega^{(\alpha)}_{\epsilon\  (\beta)}(t,z,\zeta)| \leq C_{\alpha\beta} < \infty
  \end{equation}
  for all $\alpha, \beta$ and 
 \begin{equation} \label{5.19}
\Omega_{\epsilon}\tzdz = \Bigl[X_{\epsilon}\tzdz, \Lambda\tzdz\Bigr].
   \end{equation}
	\end{lem}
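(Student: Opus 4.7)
The plan is to follow the proof of Lemma 3.4 with minor adaptations to the multi-particle symbol $\widehat{h}_s$ defined in \eqref{5.5}. First, I would apply Theorem 3.1 in Chapter 2 of Kumano-go to the right-hand side of \eqref{5.19}, obtaining the expansion
\[
\omega_\epsilon(t,z,\zeta) = I_{1\epsilon}(t,z,\zeta) + I_{2\epsilon}(t,z,\zeta)
\]
as in \eqref{3.18}, where $I_{1\epsilon} = \sum_{|\alpha|=1}\bigl\{\chi_\epsilon^{(\alpha)}\lambda_{(\alpha)} - \lambda^{(\alpha)}\chi_{\epsilon(\alpha)}\bigr\}$ collects the first-order commutator contribution and $I_{2\epsilon}$ is the oscillatory-integral remainder arising from second-order terms.

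For $I_{1\epsilon}$, since $\chi_\epsilon$ depends only on the real-valued symbol $\widehat{h}$ while $\lambda - (\mu + \widehat{h}) = i\sum_{k=1,2}(1/2m_k)\nabla\cdot A^{(k)}$ is purely imaginary and independent of $\zeta$, the real parts in the bracket cancel and one obtains the multi-particle analog of \eqref{3.19},
\[
I_{1\epsilon} = \epsilon\chi'\bigl(\epsilon(\mu+\widehat{h})\bigr)\sum_{k=1}^{2}\sum_{j=1}^{d}\frac{i}{2m_k^2}\bigl(\xi^{(k)}_j - A^{(k)}_j\bigr)\partial_{x^{(k)}_j}\bigl(\nabla\cdot A^{(k)}\bigr).
\]
The coupling term $W_{12}$ and the pseudo-differential pieces $L_3, L_4$ contribute only to $\widehat{h}$ and hence drop out. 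Using $\epsilon\chi'(\epsilon(\mu+\widehat{h})) = (\mu+\widehat{h})^{-1}\bigl[\epsilon(\mu+\widehat{h})\chi'(\epsilon(\mu+\widehat{h}))\bigr]$ together with the lower bound \eqref{5.5-2} and the estimates \eqref{2.5.1}-\eqref{2.6.1} applied to each $A^{(k)}$ ($k=1,2$), I obtain uniform pointwise bounds on $I_{1\epsilon}$. Differentiating the representation and reapplying the same estimate gives the derivative bounds.

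For $I_{2\epsilon}$ the key observation is that when $|\gamma|=2$ the pure $\zeta$-derivative $\lambda^{(\gamma)} = \widehat{h}_s^{(\gamma)}$ is either $0$ or $1/m_k$ for some $k \in \{1,2,3,4\}$; hence in the term $-\lambda^{(\gamma)}\chi_{\epsilon(\gamma)}$ the polynomial growth comes only from $\chi_{\epsilon(\gamma)}$, while in the companion term $\chi_\epsilon^{(\gamma)}\lambda_{(\gamma)}$ the growth of the $z$-derivative $\lambda_{(\gamma)}$ is offset by the decay of $\chi_\epsilon^{(\gamma)}$. I would verify the multi-particle analogs of \eqref{3.21}-\eqref{3.22},
\[
\sup_{\zeo}\bigl|\chi_\epsilon^{(\alpha+\gamma)}(t,z,\zeta)\bigr| \leq C_{\alpha\beta}\bigl(<\zeta>^2 + \Phi(z)^2\bigr)^{-1},\qquad \sup_{\zeo}\bigl|\chi_{\epsilon(\beta+\gamma)}(t,z,\zeta)\bigr| \leq C_{\alpha\beta},
\]
using \eqref{5.5-2} and \eqref{5.12-2}. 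Combined with \eqref{5.11}-\eqref{5.12}, the oscillatory-integral estimate employed in \eqref{3.10} transfers with $\Theta = <\zeta>^2 + \Phi(z)^2$ in place of $<\xi>^2 + <x>^{2(M+1)}$, yielding uniform bounds on $I_{2\epsilon}$ and all of its derivatives.

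The main obstacle will be confirming that the composite weight $\Phi(z)^2$ dominates every term produced by the coupling $W_{12}(t,x^{(1)}-x^{(2)})$. This follows from assumption \eqref{2.19.1}, the definition $M_0 = \min(M_1, M_2)$, and the elementary inequality $<x^{(1)}-x^{(2)}>^{2(M_0+1)} \leq C\bigl(<x^{(1)}>^{2(M_1+1)} + <x^{(2)}>^{2(M_2+1)}\bigr)$. Once this absorption is in hand---along with the analogous bookkeeping for the mixed weights $<x^{(k)}>^{M_k+1}$ ($k=1,2$) and $<x^{(k)}>$ ($k=3,4$)---the single-particle argument of Lemma 3.4 carries over essentially verbatim.
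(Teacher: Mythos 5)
Your proposal is correct and follows essentially the same route as the paper's proof: the same decomposition $\omega_\epsilon = I_{1\epsilon} + I_{2\epsilon}$ via Theorem 3.1 of Kumano-go, the same reduction of $I_{1\epsilon}$ to the divergence terms $\nabla\cdot A^{(k)}$ ($k=1,2$) controlled by \eqref{5.5-2} and \eqref{2.6.1}, and the same multi-particle analogues of \eqref{3.21}--\eqref{3.22} for $I_{2\epsilon}$. The absorption of $W_{12}$ into the weight $\Phi(z)^2$ that you flag is exactly what the paper secures in Lemma 5.1 and the estimates \eqref{5.11}--\eqref{5.12-2}.
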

	\begin{proof}
	As in the proof of \eqref{3.18} we see
   \begin{align} \label{5.20}
  &\omega_{\epsilon}(t,z,\zeta) = \sum_{|\alpha| = 1} \Bigl\{\chi_{\epsilon}^{(\alpha)}(t,z,\zeta)\lambda_{(\alpha)}(t,z,\zeta) 
  - \lambda^{(\alpha)}(t,z,\zeta)\chi_{\epsilon(\alpha)}(t,z,\zeta) \Bigr\}\notag\\
 &\quad   + 2 \sum_{|\gamma| = 2}\frac{1}{\gamma\, !}\int_0^1 (1 - \theta)d\theta\  \text{Os}-\iint e^{-iy\cdot\eta}
 \Bigl\{\chi_{\epsilon}^{(\gamma)}(t,z,\zeta+\theta\eta)\lambda_{(\gamma)}(t,z+y,\zeta)  \notag\\
 & \qquad \quad 
  - \lambda^{(\gamma)}(t,z,\zeta+\theta\eta)\chi_{\epsilon(\gamma)}(t,z+y,\xi) \Bigr\}dy\hdbar\eta 
  \equiv I_{1\epsilon} + I_{2\epsilon}.
 \end{align}
From \eqref{5.5}, \eqref{5.15} and \eqref{5.17} we can write
   \begin{align} \label{5.21}
  & I_{1\epsilon}(t,z,\zeta) = \epsilon\chi\,'(\epsilon(\mu +\widehat{h}))\sum_{|\alpha| = 1}\Bigl\{\widehat{h}^{(\alpha)}\widehat{h}_{s(\alpha)}
  - \widehat{h}^{(\alpha)}_{s}\widehat{h}_{(\alpha)} \Bigr\}\notag\\
 & = i \epsilon\chi\,'(\epsilon(\mu +\widehat{h}))\sum_{|\alpha| = 1}
 \widehat{h}^{(\alpha)}_s(t,z,\zeta)\sum_{k=1}^2 \frac{1}{2m_k}(-i\partial_z)^{\alpha}\,\nabla\cdot A^{(k)}\bigl(t,x^{(k)},\xi^{(k)}\bigr).
 \end{align}
   From \eqref{5.5-2} we have
 \begin{equation} \label{5.21-2}
 \bigl (\mu + \widehat{h}(t,z,\zeta)\bigr)^{-1} \leq  C_0\bigl(<\zeta>^2  + \Phi(z)^2\bigr)^{-1}
   \end{equation}
 because of $\widehat{h} = \rittaire \widehat{h}_s$.  Hence, together with \eqref{2.6.1} and \eqref{5.12} we can prove $\sup_{\epsilon}\sup_{t,z,\zeta} 
   |I_{1\epsilon}| < \infty$ as in the proof of \eqref{3.20}.  In the same way we can prove
 \begin{equation} \label{5.22}
 \sup_{0 < \epsilon \leq 1}\sup_{t,z,\zeta} |I^{(\alpha)}_{1\epsilon\  (\beta)}(t,z,\zeta)| \leq C_{\alpha\beta} < \infty
  \end{equation}
for all $\alpha$ and $\beta$.  \par
   Let $|\gamma| = 2$.  Then, from \eqref{5.5} and \eqref{5.11} - \eqref{5.12} we have the similar inequalities
 \begin{equation*} 
\sup_{\zeo}  |\chi^{(\alpha+\gamma)}_{\epsilon\  (\beta)}(t,z,\zeta)| \leq C_{\alpha\beta}\left(<\zeta>^2 + \Phi(z)^2\right)^{-1}
  \end{equation*}
  and 
 \begin{equation*} 
\sup_{\zeo}  |\chi^{(\alpha)}_{\epsilon\ (\beta+\gamma)}(t,z,\zeta)| \leq C_{\alpha\beta} < \infty
  \end{equation*}
  to \eqref{3.21} and \eqref{3.22} for all $\alpha$ and $\beta$, respectively.  Consequently, noting that $\lambda^{(\gamma)}(t,z,\zeta) = \widehat{h}_s^{(\gamma)}(t,z,\zeta)$ are constants, from \eqref{5.20} we can prove
 \begin{equation*} 
 \sup_{0 < \epsilon \leq 1}\sup_{t,z,\zeta} |I^{(\alpha)}_{2\epsilon\  (\beta)}(t,z,\zeta)| \leq C_{\alpha\beta} < \infty
  \end{equation*}
for all $\alpha$ and $\beta$ as in the proof of Lemma 3.4, which completes the proof together with \eqref{5.20} and \eqref{5.22}.
	\end{proof}
Let $\widetilde{H}(t)$ be the operator defined by \eqref{1.7.1}.
\begin{lem}
Besides the assumptions of Lemma 5.1 we suppose that each $(V_k,A^{(k)})\ (k = 3,4)$ satisfies \eqref{1.4.1}  and each 
$W_{ij}(t,x)\ (1 \leq i <j \leq 4)$ except $W_{12}$ satisfies \eqref{2.20.1}.  In addition, we suppose that $k_l(t,x,\xi)\ (l = 1,2)$ and $k_l(t,x,\xi)\ (l = 3,4)$ satisfy \eqref{2.2.1}  with $M = M_l$ and \eqref{2.1.1}, respectively.
Then,  there exists a function $\tilde{q}(t,z,\zeta)$  satisfying
 \begin{equation} \label{5.23}
 \sup_{t,z,\zeta} |\tilde{q}^{\,(\alpha)}_{\ \,(\beta)}(t,z,\zeta)| \leq C_{\alpha\beta} < \infty
  \end{equation}
  for all $\alpha, \beta$ and 
 \begin{equation} \label{5.24}
\widetilde{Q}\tzdz = \Bigl[\Lambda(t), \widetilde{H}(t)\Bigr]\Lambda(t)^{-1}.
   \end{equation}
\end{lem}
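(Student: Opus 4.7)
Since $\Lambda(t)=\mu+\widehat{H}(t)$ by (5.16), we have $[\Lambda(t),\widehat{H}(t)]=0$, so $[\Lambda(t),\widetilde{H}(t)]=[\Lambda(t),\widetilde{H}(t)-\widehat{H}(t)]$. The difference decomposes naturally as
$R(t):=\widetilde{H}(t)-\widehat{H}(t)=\sum_{k=3}^{4}\bigl(H_k(t)-L_k(t)\bigr)-i\sum_{k=1}^{4}K_k(t)+\sum_{(i,j)\ne(1,2)}W_{ij}(t,x^{(i)}-x^{(j)})$.
By linearity of the commutator it suffices, for each summand $X(t)$ with symbol $x(t,z,\zeta)$, to produce a symbol $\tilde{q}_X(t,z,\zeta)$ with uniform derivative bounds such that $\widetilde{Q}_X(t,Z,D_z)=[\Lambda(t),X(t)]\Lambda(t)^{-1}$; the required $\tilde{q}$ is then their sum.

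For each such $X(t)$ I would apply Theorem~3.1 of Chapter~2 in Kumano-go to $[\widehat{H}_s(t,Z,D_z),X(t,Z,D_z)]$, obtaining a principal symbol $\sum_{|\alpha|=1}\bigl(\widehat{h}_s^{(\alpha)}x_{(\alpha)}-\widehat{h}_{s\,(\alpha)}x^{(\alpha)}\bigr)$ together with an oscillatory-integral remainder, precisely in the style of (3.18) and (5.20). Composition with $\Lambda(t)^{-1}$, whose symbol $w(t,z,\zeta)$ satisfies the decay bound (5.13) of Proposition~5.2, then expresses $\tilde{q}_X$ as an iterated oscillatory integral. Uniform bounds on all its derivatives are obtained by the regularizing device already used in Lemmas~3.4 and 5.3: insert $<y>^{-2l_0}<D_\eta>^{2l_0}<\eta>^{-2l_1}<D_y>^{2l_1}$ for suitably large $l_0,l_1$ and dominate the integrand uniformly in $(t,z,\zeta)$.

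The crux is pointwise matching of growth: the integrand of each $\tilde{q}_X$ must be dominated by $C\bigl(<\zeta>^2+\Phi(z)^2\bigr)\cdot\bigl(<\zeta>^2+\Phi(z+y)^2\bigr)^{-1}$ times polynomially decaying weights in $(y,\eta)$. Invoking (5.12) and (5.12-2) for $\widehat{h}_s$, together with the linear estimates (1.4.1) and the explicit form of $l_k$ for $(V_k,A^{(k)})$ with $k=3,4$, the symbol $h_k-l_k$ has first-order derivatives bounded by $C(<\xi^{(k)}>+<x^{(k)}>)\le C(<\zeta>+\Phi(z))$; (2.2.1) handles $K_k$ for $k=1,2$, (2.1.1) handles $K_k$ for $k=3,4$, and (2.20.1) handles the secondary pair interactions (using $<x^{(i)}-x^{(j)}>\le C\Phi(z)$). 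In every summand the cross products $|\widehat{h}_s^{(\alpha)}||x_{(\alpha)}|$ and $|\widehat{h}_{s\,(\alpha)}||x^{(\alpha)}|$ come out at most $C\bigl(<\zeta>^2+\Phi(z)^2\bigr)$, exactly the weight neutralized by (5.13). Higher derivatives of $\tilde{q}_X$ are then handled by differentiating the integrand and repeating the same dominance argument.

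The principal obstacle I anticipate is bookkeeping the asymmetric growth across particle components: the first two grow like $<x^{(k)}>^{2(M_k+1)}$ while the last two grow only quadratically, and the stronger interaction $W_{12}$ is separated from the remaining $W_{ij}$'s. For each piece of $R(t)$ one must select the sharpest available estimate---the linear bounds (1.4.1) and (2.20.1) for the $k=3,4$ pieces and for the secondary interactions, versus the faster-growing Assumption~2.2 bounds for $k=1,2$---so that the sum in the Poisson bracket always fits inside the single weight $<\zeta>^2+\Phi(z)^2$ controlled by $\Lambda(t)^{-1}$. Once this verification is done summand by summand, the remainder of the argument parallels Lemmas~3.4 and 5.3 mutatis mutandis.
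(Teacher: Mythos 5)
Your proposal is correct and takes essentially the same route as the paper: your reduction $[\Lambda(t),\widetilde{H}(t)]=[\Lambda(t),\widetilde{H}(t)-\widehat{H}(t)]$ reproduces, term by term, the paper's decomposition $I_1(t)+I_2(t)$ in (5.26), and the growth-matching you describe is exactly the paper's model computation (5.27)--(5.28) for $[W_{13}(t),H_1(t)]$, whose symbol is bounded by $C(<\zeta>^2+\Phi(z)^2)$ and then neutralized by the symbol estimate (5.13) for $\Lambda(t)^{-1}$ from Proposition 5.2. The remaining pieces (the $K_k$ commutators and $H_k-L_k$ for $k=3,4$) are handled in the paper by reference to Lemma 4.1, matching your summand-by-summand plan.
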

\begin{proof}
We write $\widetilde{H}(t)$   as
 \begin{equation} \label{5.25}
 \widetilde{H}(t) = \sum_{k=1}^4 \widetilde{H}_k(t) + W_{12}(t) + \sum \,'\,W_{ij}(t),
   \end{equation}
  where $ \widetilde{H}_k(t) = H_k(t) - iK_k(t).$
  Then from \eqref{5.16} we see
 \begin{align} \label{5.26}
&[\widetilde{H}(t), \Lambda(t)] = [( \widetilde{H}_1 +  \widetilde{H}_2 + W_{12}) +  \widetilde{H}_3 +  \widetilde{H}_4 + \sum\, '\,W_{ij}, (H_1 + H_2 + W_{12})\notag\\
&  + L_3 + L_4] = \Bigl(-i[K_1,H_1] -i[K_2,H_2] -i[K_1+ K_2,W_{12}] + [\widetilde{H}_3,L_3] + [ \widetilde{H}_4,L_4]\Bigr)
\notag \\ &
 + 
 \left[\sum\, '\,W_{ij}, H_1 + H_2 + L_3 + L_4\right] \equiv I_1(t) + I_2(t).
   \end{align}
 As in the proof of Lemma 4.1, we can prove that  $I_1(t)\Lambda(t)^{-1}$ is written as the pseudo-differential operator with a symbol satisfying \eqref{5.23}.
  \par
  We can easily see that $m_1\bigl[W_{13}(t),H_1(t)\bigr]$ is written as the pseudo-differential operator with the symbol
 \begin{align} \label{5.27}
& \tilde{q}_1(t,z,\zeta)  = i\frac{\partial W_{13}}{\partial x} (t,x^{(1)} - x^{(3)})\cdot \xi^{(1)} + \frac{1}{2}\Delta_xW_{13} (t,x^{(1)} - x^{(3)}) \notag \\
& - i A^{(1)}(t,x^{(1)})\cdot \frac{\partial W_{13}}{\partial x} (t,x^{(1)} - x^{(3)}).
  \end{align}
  Hence from the assumptions we have
 \begin{align*} 
& |\tilde{q}_1(t,z,\zeta)| \leq C_1\bigl(<\xi^{(1)}>^2 + <x^{(1)} - x^{(3)}>^2 + <x^{(1)}>^{M_1+1}<x^{(1)} - x^{(3)}>\bigr)
\notag \\
& \leq C_2\bigl(<\xi^{(1)}>^2 + <x^{(1)}>^{M_1+2} + <x^{(1)}>^{2(M_1+1)}+ < x^{(3)}>^2\bigr) \\
& \leq C_3\bigl(<\xi^{(1)}>^2 + <x^{(1)}>^{2(M_1+1)}+ < x^{(3)}>^2\bigr).
  \end{align*}
  In the same way we have
 \begin{equation} \label{5.28}
 |\tilde{q}^{\,(\alpha)}_{1 \,(\beta)}(t,z,\zeta)| \leq C_{\alpha\beta} \bigl(<\zeta>^2 + \Phi(z)^2 \bigr)
  \end{equation}
  for all $\alpha$ and $ \beta$.  Consequently, by Proposition 5.2 we see that $\bigl[W_{13}(t),H_1(t)\bigr]\Lambda(t)^{-1}$ is written as the pseudo-differential operator with a symbol satisfying \eqref{5.23}.  In the same way  we can complete the proof of Proposition 5.4.
\end{proof}
	Using the function $\chi_{\epsilon}(t,z,\zeta)$ defined by \eqref{5.17},  we define 
 \begin{equation} \label{5.29}
\widetilde{H}_{\epsilon}(t) := X_{\epsilon}\tzdz^{\dagger}\widetilde{H}(t)X_{\epsilon}\tzdz
   \end{equation}
as in \eqref{4.1}.
\begin{lem}
Under Assumption 2.3
 there exist functions $\qepsilon(t,z,\zeta)\ (\zeo)$ satisfying \eqref{5.18} and 
 \begin{align} \label{5.30}
Q_{\epsilon}\tzdz = &\Bigl[\Lambda\tzdz, \widetilde{H}_{\epsilon}(t)\Bigr]\Lambda\tzdz^{-1} \notag \\
&  + i\frac{\partial\Lambda}{\partial t}\tzdz\Lambda\tzdz^{-1}.
   \end{align}
\end{lem}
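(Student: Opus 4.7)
The plan is to follow exactly the pattern of Lemma 4.1, with Proposition 5.2, Lemma 5.3, and Lemma 5.4 playing the roles of Proposition 3.3, Lemma 3.4, and the cancellation $[\Lambda(t),H(t)]=0$, respectively.  The new feature in the multi-particle setting is that $[\Lambda(t),\widetilde{H}(t)]$ no longer vanishes; Lemma 5.4 provides the needed substitute by asserting that this commutator becomes a bounded-symbol pseudodifferential operator after multiplication by $\Lambda(t)^{-1}$.

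First I would observe that $\widehat{H}(t)$ is a sum of self-adjoint operators on $L^2(\bR^{4d})$, so $\Lambda(t)^\dagger=\Lambda(t)$, and therefore $[\Lambda,X_\epsilon^\dagger]=-[\Lambda,X_\epsilon]^\dagger=\Omega_\epsilon^\dagger$ by Lemma 5.3.  Expanding $[\Lambda,\widetilde{H}_\epsilon]=[\Lambda,X_\epsilon^\dagger\widetilde{H}X_\epsilon]$ by the Leibniz rule yields
\begin{equation*}
[\Lambda,\widetilde{H}_\epsilon]=\Omega_\epsilon^\dagger\widetilde{H}X_\epsilon+X_\epsilon^\dagger[\Lambda,\widetilde{H}]X_\epsilon-X_\epsilon^\dagger\widetilde{H}\Omega_\epsilon.
\end{equation*}
For the middle term I apply Lemma 5.4 to write $[\Lambda,\widetilde{H}]=\widetilde{Q}\Lambda$ and use the identity $\Lambda X_\epsilon=X_\epsilon\Lambda-\Omega_\epsilon$ from Lemma 5.3 to move $\Lambda$ past $X_\epsilon$ before it meets $\Lambda^{-1}$; this yields $X_\epsilon^\dagger[\Lambda,\widetilde{H}]X_\epsilon\Lambda^{-1}=X_\epsilon^\dagger\widetilde{Q}X_\epsilon-X_\epsilon^\dagger\widetilde{Q}\Omega_\epsilon\Lambda^{-1}$, a composition of pseudodifferential operators with bounded-symbol or $(<\zeta>^2+\Phi(z)^2)^{-1}$-decaying symbols, hence a bounded-symbol operator by the symbol calculus (Theorem 2.5 of Chapter 2 of \cite{Kumano-go}).

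For the two outer terms I would first verify that the double symbol of $\widetilde{H}(t)$ is bounded, together with all derivatives, by $C_{\alpha\beta}(<\zeta>^2+\Phi(z)^2)$: this follows from \eqref{5.12-2} for the $\widehat{h}_s$-piece, the analogue of \eqref{4.4} applied to $k_1,k_2$ via \eqref{2.2.1}, the linear growth bound \eqref{2.1.1} for $k_3,k_4$, and the $W_{ij}$-bounds in Assumption 2.3.  Combined with the decay \eqref{5.13} of Proposition 5.2, the symbol calculus then delivers that $\widetilde{H}\Lambda^{-1}$ has bounded symbol.  I rewrite $\Omega_\epsilon^\dagger\widetilde{H}X_\epsilon\Lambda^{-1}=\Omega_\epsilon^\dagger(\widetilde{H}\Lambda^{-1})(\Lambda X_\epsilon\Lambda^{-1})=\Omega_\epsilon^\dagger(\widetilde{H}\Lambda^{-1})(X_\epsilon-\Omega_\epsilon\Lambda^{-1})$, handle the third term symmetrically, and each resulting composition reduces to a bounded-symbol pseudodifferential operator.

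Finally, the time-derivative contribution $i(\partial\Lambda/\partial t)\Lambda^{-1}$ is the pseudodifferential operator with double symbol $i\partial_t\widehat{h}_s$, whose derivatives are bounded by $C_{\alpha\beta}(<\zeta>^2+\Phi(z)^2)$ by the time-derivative hypotheses of Assumption 2.3 (using \eqref{2.4.1} and \eqref{2.7.1} for $k=1,2$ and the time-independence of $L_3,L_4$), so Proposition 5.2 once more yields a bounded symbol.  The main obstacle is the bookkeeping in the outer contributions: since $\widetilde{H}$ carries an unbounded order-two symbol, one must carefully commute $\Lambda^{-1}$ adjacent to $\widetilde{H}$ before the symbol calculus can deliver the uniform estimate \eqref{5.18}, and this requires repeatedly invoking Lemma 5.3 to push $\Lambda$ past $X_\epsilon$.
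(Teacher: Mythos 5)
Your proposal is correct and follows essentially the same route as the paper: the identical Leibniz expansion $[\Lambda,\widetilde{H}_\epsilon]=-[\Lambda,X_\epsilon]^\dagger\widetilde{H}X_\epsilon+X_\epsilon^\dagger[\Lambda,\widetilde{H}]X_\epsilon+X_\epsilon^\dagger\widetilde{H}[\Lambda,X_\epsilon]$, with Proposition 5.2, Lemma 5.3 and Lemma 5.4 supplying exactly the ingredients the paper invokes (the paper's own proof is just a terse version of this, deferring the symbol-calculus bookkeeping to ``as in the proof of Lemma 4.1''). Your explicit commuting of $\Lambda^{-1}$ past $X_\epsilon$ via $\Lambda X_\epsilon=X_\epsilon\Lambda-\Omega_\epsilon$ is precisely the detail the paper leaves implicit, so no substantive difference.
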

\begin{proof}
From \eqref{5.29} and $\Lambda(t)^{\dagger} = \Lambda(t)$ we have
 \begin{align*}
 & \Bigl[\Lambda(t), \widetilde{H}_{\epsilon}(t)\Bigr] = -\Bigl[\Lambda(t), X_{\epsilon}(t)\Bigr]^{\dagger}\widetilde{H}(t)X_{\epsilon}(t) \\
 &\quad  + X_{\epsilon}(t)^{\dagger}\Bigl[\Lambda(t), \widetilde{H}(t)\Bigr]X_{\epsilon}(t) + 
 X_{\epsilon}(t)^{\dagger}\widetilde{H}(t)\Bigl[\Lambda(t), X_{\epsilon}(t)\Bigr]
 \end{align*}
 as in the proof of Lemma 4.1.  Apply Proposition 5.2 and Lemmas 5.3 - 5.4 to the above.  In addition,  apply Proposition 5.2 to $(i\partial \Lambda(t)/\partial t)\Lambda(t)^{-1}$.  Then, we can prove Lemma 5.5 as in the proof of Lemma 4.1.
\end{proof}
	Using Lemma 5.5, we can prove the following as in the proof of Proposition 4.2.
	\begin{pro}
Under Assumption 2.3 there exist functions $q_{a\epsilon}(t,z,\zeta)\ (\adots, \zeo)$ satisfying \eqref{5.18} and 
 \begin{equation} \label{5.31}
Q_{a\epsilon}\tzdz = \Biggl[i\frac{\partial}{\partial t}- \widetilde{H}_{\epsilon}(t), \Lambda(t)^a\Biggr]\Lambda(t)^{-a}.
   \end{equation}
	\end{pro}
		Let $B'^a(\bR^{4d})\ (\adots)$ be the weighted Sobolev spaces introduced in \S 2.  Then we see  that the embedding map from $B'^{a+1}$ into $B'^{a}$ is compact and that  the similar result to Proposition 3.5 follows from Proposition 5.2.  Therefore, using Proposition 5.6, we can prove Theorems 2.4 - 2.6 as in the proofs of Theorems 2.1 - 2.3, respectively.	
 %%%%%%%%%%%%%%%%%%%%%%%%%%%%%%%%%%%%%
 %


\begin{thebibliography}{99} %
%
%
\bibitem{Albeverio et all}
 Albeverio,  S. A.,  H\o egh-Krohn R. J.,  Mazzucchi,  S. (2008).
   {\it Mathematical Theory of Feynman Path Integral. An Introduction, 2nd edition}. Lecture Notes in Math. {\bf 523}.
   Berlin: Springer-Verlag.
%%
%
\bibitem{Berezin et all}
  Berezin,  F. A.,   Shubin, M. A. (1991).
   {\it The Schr\"odinger  Equation}. Dordrecht: Kluwer Academic Publishers.
%%
\bibitem{Cycon et all}
  Cycon, H. L.,  Froese, R. G.,  Kirsch, W.,    Simon, B. (1987).
 {\it Schr\"odinger Operators with Application  to Quantum Mechanics and Global Geometry}. Berlin: Springer-Verlag.
%%
%%
\bibitem{Davies}
  Davies, E. B. (2002).
 Non-self-adjoint differential operators. {\it Bull. London Math. Soc.} {\bf 34}:513-532.%%
%
\bibitem{Ichinose 1995}
  Ichinose,  W.  (1995).  A note on the existence and $\hbar$-dependency  of the
solution of equations in quantum mechanics. {\it Osaka J. Math.} {\bf 32}:327-345.
%
\bibitem{Ichinose 1999} 
Ichinose,  W.  (1999).  On convergence of the Feynman path integral formulated
through broken line paths. {\it Rev. Math. Phys.}  {\bf 11}:1001-1025. 
%
%
\bibitem{Ichinose 2012} 
Ichinose,  W.  (2012). The continuity and the differentiability of solutions on parameters to the Schr\"odinger equations and the Dirac equations.  {\it J. Pseudo-Differ. Oper. Appl.} {\bf 3}:399-419.
 %
\bibitem{Ichinose 2014}
Ichinose,  W.  (2014). On the Feynman path integral for the Dirac equation in the general dimensional spacetime.
{\it Commun. Math. Phys. } {\bf 329}:483-508.
%
%
\bibitem{Ichinose 2019} 
Ichinose,  W. (2019) On the Feynman path integral for the Schr\"odinger equations with potentials growing polynomially in the spatial direction. ArXiv:1901.05677
%
%
\bibitem{Ichinose 2020} 
Ichinose,  W.  Mathematical theory of  the Feynman path integrals for continuous quantum measurements of positions and spin. In preparation.
%
%
\bibitem{Kumano-go}
 Kumano-go, H. (1981). {\it Pseudo-Differential Operators}. Cambridge: MIT Press.
%
%
%
\bibitem{Leinfelder et all}
Leinfelder, H.,   Simader, C. G. (1981). Schr\"odinger operators with singular magnetic vector potentials. {\it Math. Z.} {\bf 176}:1-19.
%
%
\bibitem{Mensky 1993}
Mensky, M. B. (1993). {\it Continuous Quantum Measurements and Path Integrals}. Bristol: IOP Publishing.%
%
\bibitem{Mensky 2000}
Mensky, M. B. (2000). {\it  Quantum Measurements and Decoherence}. Dordrecht: Kluwer Academic Publishers.%
%
\bibitem{Reed-Simon I}
 Reed, M. ,   Simon,  B. (1980). {\it Methods of Modern Mathematical Physics I: Functional Analysis, Revised and Enlarged Edition}. San Diego: Academic Press.
 %
%
\bibitem{Reed-Simon II}
 Reed, M. ,   Simon,  B. (1975).  {\it Methods of Modern Mathematical Physics II: Fourier Analysis, Self-adjointness}. San Diego: Academic Press. 
 %
 %
\bibitem{Sambou}
Sambou, D. (2014). Lieb-Thirring type inequalities for non-self-adjoint perturbations of magnetic Schr\"odinger operators. {\it J. Funct. Anal.} {\bf 266}:5016-5044.
%
%%
\bibitem{Yajima 1991}
 Yajima, K.  (1991). Schr\"odinger evolution equations with magnetic fields. {\it J. Anal. Math.} {\bf 56}:29-76.
%
%%
\bibitem{Yajima 2011}
Yajima, K.  (2011). Schr\"odinger  equations with time-dependent unbounded singular potentials. {\it  Rev. Math. Phys.}  {\bf 23}:823-838.
%
%%
\bibitem{Yajima 2016}
  Yajima, K.  (2016).  Existence and regularity of propagators for multi-particle Schr\"odinger equations in external fields. {\it Commun. Math. Phys.}  {\bf 347}:103-126.
%
%
%
\end{thebibliography}
\end{document}